\DeclareMathOperator{\Tr}{tr}
\DeclareMathOperator{\spec}{spec}
\newcommand{\ket}[1]{|#1\rangle}
\newcommand{\bra}[1]{\langle#1|}
\newcommand{\braket}[2]{\langle#1|#2\rangle}
\newcommand{\bracket}[3]{\langle#1|#2|#3\rangle}
\newcommand{\ketbra}[2]{|#1\rangle\langle#2|}
\newcommand{\ketbraq}[1]{\ketbra{#1}{#1}}
\newcommand{\norm}[2][]{#1\lVert #2 #1\rVert}
\newcommand{\abs}[2][]{#1| #2 #1|}
\newcommand{\ave}[2][]{#1\langle #2 #1\rangle}
\newcommand{\nbox}[2][9]{\hspace{#1pt} \mbox{#2} \hspace{#1pt}}
\newcommand{\hc}{^{\dagger}}
\newcommand{\tran}{^\text{T}}
\newcommand{\I}{\openone}
\newcommand{\etad}{\eta^{\textnormal{d}}}
\newcommand{\etar}{\eta^{\textnormal{r}}}
\newcommand{\etap}{\eta^{\textnormal{p}}}
\newcommand{\etajm}{\eta^{\textnormal{jm}}}
\newcommand{\etag}{\eta^{\textnormal{g}}}
\newcommand{\gjm}{g^{\textnormal{jm}}}
\newcommand{\maxval}[1]{\stackrel{\textnormal{#1}}{\leq}}
\newcommand{\Bfunc}[2]{\mathcal{#1}_{d}^{\textnormal{#2}}}
\newcommand{\maxent}[1]{\psi_{#1}^\text{max}}
\newcommand{\lmax}{\lambda_\textnormal{max}}
\newcommand{\lmin}{\lambda_\textnormal{min}}
\newtheorem{theorem}{Theorem}[section]
\newtheorem{corollary}[theorem]{Corollary}
\newtheorem{remark}[theorem]{Remark}
\newtheorem{lemma}[theorem]{Lemma}
\newtheorem{definition}[theorem]{Definition}
\newtheorem{proposition}[theorem]{Proposition}
\newcommand{\ba}{\begin{eqnarray}}
\newcommand{\be}{\begin{equation}}
\newcommand{\ee}{\end{equation}}
\newcommand{\ea}{\end{eqnarray}}
\newcommand{\ban}{\begin{eqnarray*}}
	\newcommand{\ean}{\end{eqnarray*}}
\newcommand{\etal}{{\it{et al.}}}
\newcommand{\cH}{\mathcal{H}}
\newcommand{\cB}{\mathcal{B}}
\newcommand{\cL}{\mathcal{L}}
\newcommand{\cP}{\mathcal{P}}
\newcommand{\bC}{\mathbb{C}}
\definecolor{orange}{rgb}{1,0.5,0}
\begin{document}

\title{Mutually unbiased bases and symmetric informationally complete measurements in Bell experiments}

\author{Armin Tavakoli}
\affiliation{Department of Applied Physics, University of Geneva, 1211 Geneva, Switzerland}

\author{M\'at\'e Farkas}
\affiliation{Institute of Theoretical Physics and Astrophysics,
	National Quantum Information Centre, Faculty of Mathematics,
	Physics and Informatics, University of Gdansk, 80-952 Gdansk, Poland}
	\affiliation{International Centre for Theory of Quantum Technologies, University of Gdansk, 80-308 Gdansk, Poland}

\author{Denis Rosset}
\affiliation{Perimeter Institute for Theoretical Physics, 31 Caroline St. N, Waterloo, Ontario, N2L 2Y5, Canada}

\author{Jean-Daniel Bancal}
\affiliation{Department of Applied Physics, University of Geneva, 1211 Geneva, Switzerland}

\author{J\k{e}drzej Kaniewski}
\affiliation{Faculty of Physics, University of Warsaw, Pasteura 5, 02-093 Warsaw, Poland}

\begin{abstract}
Mutually unbiased bases (MUBs) and symmetric informationally complete projectors (SICs) are crucial to many conceptual and practical aspects of quantum theory. Here, we develop their role in quantum nonlocality by: i) introducing families of Bell inequalities that are maximally violated by $d$-dimensional MUBs and SICs respectively, ii) proving device-independent certification of natural operational notions of MUBs and SICs, and iii) using MUBs and SICs to develop optimal-rate and nearly optimal-rate protocols for device independent quantum key distribution and device-independent quantum random number generation respectively. Moreover, we also present the first example of an extremal point of the quantum set of correlations which admits physically inequivalent quantum realisations. Our results elaborately demonstrate the foundational and practical relevance of the two most important discrete Hilbert space structures to the field of quantum nonlocality.
\end{abstract}

\date{\today}

\maketitle

\textbf{One-sentence summary --- Quantum nonlocality is developed based on the two most celebrated discrete structures in quantum theory. } 
\section{Introduction}
Measurements are crucial and compelling processes at the heart of quantum physics. Quantum measurements, in their diverse shapes and forms, constitute the bridge between the abstract formulation of quantum theory and concrete data produced in laboratories. Crucially, the quantum formalism of measurement processes gives rise to experimental statistics that elude classical models. Therefore, appropriate measurements are indispensible for harvesting and revealing quantum phenomena. Sophisticated manipulation of quantum measurements is both at the heart of the most well-known features of quantum theory such as contextuality \cite{KS} and the violation of Bell inequalities \cite{Bell} as well as its most groundbreaking applications such as quantum cryptography \cite{Qcrypt} and quantum computation \cite{Nielsen}. In the broad landscape of quantum measurements \cite{Qmeas}, certain classes of measurements are outstanding due to their breadth of relevance in foundations of quantum theory and applications in quantum information processing.

Two widely celebrated, intensively studied and broadly useful  classes of measurements are known as mutually unbiased bases (MUBs) and symmetric informationally complete measurements (SICs). Two measurements are said to be mutually unbiased if by preparing any eigenstate of the first measurement and then performing the second measurement, one finds that all outcomes are equally likely \cite{Schwinger}. A typical example of MUBs corresponds to measuring two perpendicular components of the polarisation of a photon. A SIC is a quantum measurement with the largest number of possible outcomes such that all measurement operators have equal magnitude overlaps \cite{Zauner, Renes}. Thus, the former is a relationship between two different measurements whereas the latter is a relationship within a single measurement. Since MUBs and SICs are both conceptually natural, elegant and (as it turns out) practically important classes of measurements, they are often studied in the same context \cite{Wooters1, Grassl1, Beneduci, Bengtsson1, BengtssonCabello, Rastegin}. Let us briefly review their importance to foundational and applied aspects of quantum theory.

MUBs are central to the concept of complementarity in quantum theory i.e.~how the knowledge of one quantity limits (or erases) the knowledge of another quantity (see e.g.~Ref.~\cite{MUBreview} for a review of MUBs). This is often highlighted through variants of the famous Stern--Gerlach experiment in which different Pauli observables are applied to a qubit. For instance, after first measuring (say) $\sigma_x$, we know whether our system points up or down the $x$-axis. If we then measure $\sigma_z$, our knowledge of the outcome of yet another $\sigma_x$ measurement is entirely erased since $\sigma_z$ and $\sigma_x$ are MUBs.  This phenomenon leads to a inherent uncertainty for the outcomes of MUB measurements on all quantum states, which can be formalised in terms of entropic quantities, leading to so-called entropic uncertainty relations.  It is then natural that MUBs give rise to the strongest entropic uncertainties in quantum theory   \cite{MaassenUffink}. Moreover, MUBs play a prominent role in quantum cryptography, where they are employed in many of the most well-known quantum key distribution protocols \cite{BB84, E91, 6state, Cerf02, SARG} as well as in secret sharing protocols \cite{QSS0, QSS1, QSS2}. Their appeal to cryptography stems from the idea that eavesdroppers who measure an eigenstate of one basis in another basis unbiased to it  obtain no useful information, while they also induce a large disturbance in the state which allows their presence to be detected.  Furthermore, complete (i.e.~largest possible in a given dimension) sets of MUBs are tomographically complete and their symmetric properties make them pivotal for quantum state tomography \cite{Wooters, Adamson}. In addition, MUBs are useful for a range of other problems such as quantum random access coding \cite{Ambainis, Aguilar, Tavakoli15, Tavakoli18, Farkas19}, quantum error correction \cite{Gottesman, Calderbank} and entanglement detection \cite{Spengler}. This broad scope of relevance has motivated much efforts towards determining the largest number of MUBs that exist in general Hilbert space dimensions \cite{MUBreview}.

The motivations behind the study of SICs are quite similar to the ones discussed for MUBs. It has been shown that SICs are natural measurements for quantum state tomography \cite{Caves}, which has also prompted several experimental realisations of SICs \cite{Medendorp, Pimenta, Bent}. Also, some  protocols for quantum key distribution derive their success directly from the defining properties of SICs \cite{Renes2, Singapore}, which have also been experimentally demonstrated \cite{Bouchard}. Furthermore, a key property of SICs is that  they have the largest number of outcomes possible while still being extremal measurements i.e.~they cannot be simulated by stochastically implementing other measurements. This gives SICs a central role in a range of applications which include random number generation from entangled qubits \cite{Acin}, certification of non-projective measurements \cite{TavakoliNonProj, Piotr, TavakoliSIC, Massi}, semi-device-independent self-testing \cite{TavakoliNonProj} and entanglement detection  \cite{Shang, Bae}. Moreover, SICs have a key role  in quantum Bayesianism \cite{FuchsRev} and they exhibit interesting connections to several areas of mathematics, for instance Lie and Jordan algebras \cite{Appelby1} and algebraic number theory \cite{Appelby}. Due to their broad interest, much research effort has been directed towards proving the existence of SICs in all Hilbert space dimensions (presently known, at least, up to dimension 121) \cite{Zauner, Renes, ScottGrassl, Scott}. See e.g.~Ref.~\cite{FuchsReview} for a recent review of SICs.

In this work, we broadly investigate MUBs and SICs in the context of Bell nonlocality experiments. In these experiments, two separated observers perform measurements on entangled quantum systems which can produce nonlocal correlations that elude any local hidden variable model \cite{Brunner}. In recent years, Bell inequalities have played a key role in the rise of device-independent quantum information processing where they are used to certify  properties of quantum systems. Naturally, certification of a physical property can be achieved under different assumptions of varying strength. Device-independent approaches offer the strongest form of certification since the only assumptions made are space-like separation and the validity of quantum theory. The advent of device-independent quantum information processing has revived interest in Bell inequalities as these can now be tailored to the purpose of certifying useful resources for quantum information processing. The primary focus of such certification has been on various types of entangled states \cite{Supic}. However, quantum measurements are equally important building blocks for quantum information processing. Nevertheless, our understanding of which arrangements of high-dimensional measurements can be certified in a device-independent manner is highly limited.\footnote{We speak of arrangements of measurements because for a single measurement (acting on a quantum system with no internal structure) no interesting property can be certified. The task becomes non-trivial when at least two measurements are present and we can certify the relation between them.} The simplest approach relies on combining known self-testing results for two-qubit systems, which allows us to certify high-dimensional measurements constructed out of qubit building blocks~\cite{Vidick, Coladangelo}. Alternatively, device-independent certification of high-dimensional structures can be proven from scratch, but to the best of our knowledge only two results of this type have been proven: (a) a triple of MUBs in dimension three~\cite{Jed} and (b) the measurements conjectured to be optimal for the Collins--Gisin--Linden--Massar--Popescu Bell inequality~\cite{Sarkar} (the former is a single result, while the latter is a family parameterised by the dimension $d \geq 2$). None of these results can be used to certify MUBs in dimension $d \geq 4$.


Since mutual unbiasedness and symmetric informational completeness are natural and broadly important concepts in quantum theory, they are prime candidates of interest for such certification in general Hilbert space dimensions. This challenge is increasingly relevant due to the broader experimental advances towards high-dimensional systems along the frontier of quantum information theory. This is also reflected in the fact that recent experimental implementations of MUBs and SICs can go well beyond the few lowest Hilbert space dimensions \cite{Bent, Bouchard, Romero}.

Focusing on mutual unbiasedness and symmetric informational completeness, we solve the above challenges. To this end, we first construct Bell inequalities that are maximally violated using a maximally entangled state of local dimension $d$ and, respectively, a pair of $d$-dimensional MUBs and a $d$-dimensional SIC. In the case of MUBs, we show that the maximal quantum violation of the proposed Bell inequality device-independently certifies that the measurements satisfy an operational definition of mutual unbiasedness, and also that the shared state is essentially a maximally entangled state of local dimension $d$. Similarly, in the case of SICs, we find that the maximal quantum violation device-independently certifies that the measurements satisfy an analogous operational definition of symmetric informational completeness. Moreover, we also show that our Bell inequalities are useful in two practically relevant tasks. For the case of MUBs, we consider a scheme for device-independent quantum key distribution and prove a key rate of $\log d$ bits, which is  optimal for any protocol that extracts key from a $d$-outcome measurement. For SICs, we construct a scheme for device-independent random number generation. For two-dimensional SICs, we obtain the largest amount of randomness possible for any protocol based on qubits. For three-dimensional SICs, we obtain more randomness than can be obtained in any protocol based on projective measurements and quantum systems of dimension up to seven. For low dimensions, we numerically show that both protocols are robust to noise, which is imperative to any experiment. The implementation of these two protocols involves performing a Bell-type experiment,  estimating the outcome statistics and computing the resulting Bell inequality violation. The efficiency and security of the protocol is then deduced only from the observed Bell inequality violation, i.e.~it does not require a complete characteristion of the devices. Device-independent protocols can in principle be implemented on any experimental platform suitable for Bell nonlocality experiments, such as entangled spins \cite{Hensen}, entangled photons \cite{Shalm, Giustina} and entangled atoms \cite{Harald}.

\section{Bell inequalities for mutually unbiased bases}
The task of finding Bell inequalities which are maximally violated by mutually unbiased bases for $d \geq 3$ has been attempted several times~\cite{Bechmann, ji08a, liang09a, lim10a} but with limited success. The only convincing candidate is the inequality corresponding to $d = 3$ studied in Ref.~\cite{ji08a} and even then there is only numerical evidence (no analytical proof is known). Some progress has been made in Ref.~\cite{Jed}, which considers the case of prime $d$ and proposes a family of Bell inequalities maximally violated by a specific set of $d$ MUBs in dimension $d$. These inequalities, however, have two drawbacks: (a) there is no generalisation to the case of non-prime $d$ and (b) even for the case of prime $d$ we have no characterisation of the quantum realisations that achieve the maximal violation.

In this work we present a family of Bell inequalities in which the maximal quantum  violation is achieved with a maximally entangled state and any pair of $d$-dimensional MUBs. These Bell inequalities have been constructed so that their maximal quantum violation can be computed analytically which then enables us to obtain a detailed characterisation of the optimal realisations. As a result we discover a new, intermediate form of device-independent certification.

We formally define a pair of MUBs as two orthonormal bases on a $d$-dimensional Hilbert space $\bC^{d}$, namely $\{\ket{e_j}\}_{j=1}^d$ and $\{\ket{f_k}\}_{k=1}^d$, with the property that 
\begin{equation}\label{MUB}
	|\braket{e_j}{f_k}|^2=\frac{1}{d}
\end{equation}
for all $j$ and $k$. The constant on the right-hand-side is merely a consequence of the two bases being normalised. To this end, consider a bipartite Bell scenario parameterised by an integer $d\geq 2$. Alice randomly receives one of $d^2$ possible inputs labelled by $x\equiv x_1x_2\in[d]^2$ (where $[s]\equiv \{1,\ldots, s\}$) and produces a ternary output labelled by $a\in\{1,2, \perp\}$. Bob receives a random binary input labelled by $y\in\{1,2\}$ and produces a $d$-valued output labelled by $b\in[d]$. The joint probability distribution in the Bell scenario is denoted by $p(a,b|x,y)$ and the scenario is illustrated in Figure~\ref{FigMUBscenario}.

\begin{figure}
	\centering
	\includegraphics[width=\columnwidth]{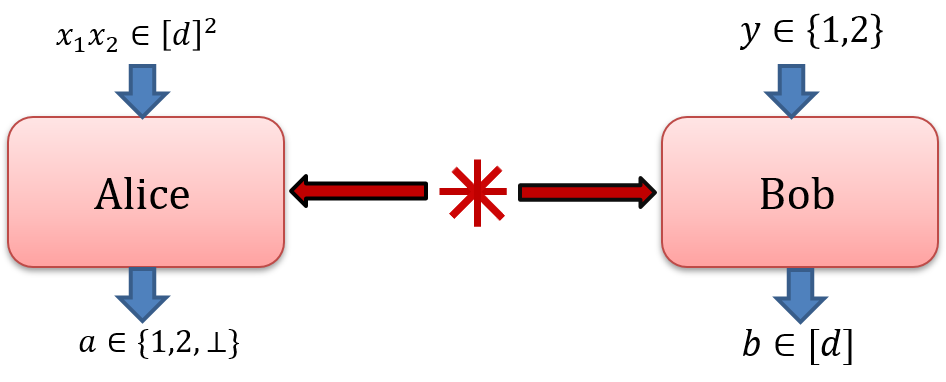}
	\caption{Bell scenario for two MUBs of dimension $d$. Alice receives one of $d^2$ inputs and produces a ternary output  while Bob receives a binary input and produces a $d$-valued output.}\label{FigMUBscenario}
\end{figure}

To make our choice of Bell functional transparent, we will phrase it as a game in which Alice and Bob collectively win or lose points.  If Alice outputs $a = \perp$, no points will be won or lost. If she outputs $a \in \{1, 2\}$, points will be won or lost if $b = x_{y}$. More specifically, Alice and Bob win a point if $a = y$ and lose a point if $a=\bar{y}$, where the bar-sign flips the value of $y\in\{1,2\}$. This leads to the score
\begin{equation}\label{Md}
\Bfunc{R}{MUB} \equiv \sum_{x, y} p(a=y,b=x_y|x,y)-p(a=\bar{y},b=x_y|x,y),
\end{equation}
where the sum goes over $x = x_{1}x_{2} \in [d]^{2}$ and $y \in \{1, 2\}$.

At this point the outcome $a=\perp$ might seem artificial, so let us show why it plays a crucial role in the construction of the game. To this end, we use intuition based on the hypothetical case in which Alice and Bob share a maximally entangled state 
\begin{equation}\label{maxent}
\ket{\maxent{d}}=\frac{1}{\sqrt{d}}\sum_{k=1}^{d}\ket{k,k}.
\end{equation}
The reason we consider the maximally entangled state is that we aim to tailor the Bell inequalities so that this state is optimal. Then, we would like to ensure that Alice, via her measurement and for her outcomes $a\in\{1,2\}$, remotely prepares Bob in a pure state. This would allow Bob to create stronger correlations as compared to the case of Alice remotely preparing his system is a mixed state. Hence, this corresponds to Alice's outcomes $a\in\{1,2\}$ being represented by rank-one projectors. Since the subsystems of $\ket{\maxent{d}}$ are maximally mixed, it follows that $p(a=1|x)=p(a=2|x)=1/d$ $\forall x$. Thus, we want to motivate Alice to employ a strategy in which she outputs $a=\perp$ with probability $p(a=\perp|x)=1-2/d$. Our tool for this purpose is to introduce a penalty. Specifically, whenever Alice decides to output $a\in\{1,2\}$, she is penalised by losing $\gamma_d$ points. Thus, the total score  (the Bell functional) reads
\begin{equation}\label{Fd}
\Bfunc{S}{MUB}\equiv \Bfunc{R}{MUB}-\gamma_d\sum_x \big( p(a=1|x)+p(a=2|x)\big).
\end{equation}
Now, outputting $a\in\{1,2\}$ contributes towards  $\Bfunc{R}{MUB}$ but also causes a penalty $\gamma_d$. Therefore, we expect to see a trade-off between $\gamma_d$ and the rate at which Alice outputs $a=\perp$. We must suitably choose $\gamma_d$ such that Alice's best strategy is to output $a=\perp$ with (on average over $x$) the desired probability $p(a=\perp|x)=1-2/d$. This accounts for the intuition that leads us to the following Bell inequalities for MUBs.

\begin{theorem}[Bell inequalities for MUBs]\label{thmBellForMUBs}
	The Bell functional $\Bfunc{S}{MUB}$ in Eq.~\eqref{Fd} with 
	\begin{equation}
	\gamma_d=\frac{1}{2}\sqrt{\frac{d-1}{d}},
	\end{equation}
	obeys the tight local bound 
	\begin{equation}\label{CMUB}
	\Bfunc{S}{MUB} \maxval{LHV} 2\left(d-1\right)\left(1-\frac{1}{2}\sqrt{\frac{d-1}{d}}\right),
	\end{equation}
	and the quantum bound
	\begin{equation}\label{QMUB}
	\Bfunc{S}{MUB} \maxval{Q} \sqrt{d\left(d-1\right)} .
	\end{equation}
	Moreover, the quantum bound can be saturated by sharing a maximally entangled state of local dimension $d$ and Bob performing measurements in any two mutually unbiased bases. 
\end{theorem}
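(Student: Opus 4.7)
The proof splits naturally into a local-bound part and a quantum-bound part, followed by exhibiting a saturating quantum strategy.

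\emph{Local bound.} By linearity it suffices to maximise over deterministic local strategies, i.e., a function $f:[d]^{2}\to\{1,2,\perp\}$ for Alice and constants $b_{1},b_{2}\in[d]$ for Bob. For fixed $(b_{1},b_{2})$ and fixed $x$, the three choices $f(x)\in\{\perp,1,2\}$ contribute $0$, $\delta_{b_{1},x_{1}}-\delta_{b_{2},x_{2}}-\gamma_{d}$, and $\delta_{b_{2},x_{2}}-\delta_{b_{1},x_{1}}-\gamma_{d}$ respectively. Since $0<\gamma_{d}<1$, the maximum is $1-\gamma_{d}$ exactly when precisely one of the two equalities $x_{y}=b_{y}$ holds, and $0$ otherwise. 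There are $2(d-1)$ inputs $x$ of the former type, so every deterministic strategy attains exactly $2(d-1)(1-\gamma_{d})$; substituting the value of $\gamma_{d}$ gives the stated classical bound, which is therefore tight.

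\emph{Quantum bound.} Writing $N_{x}=A_{x,1}-A_{x,2}$, $P_{x}=A_{x,1}+A_{x,2}$ for Alice's POVM elements and $C_{x}=B_{1}^{x_{1}}-B_{2}^{x_{2}}$ for Bob's, the Bell value equals $\bracket{\psi}{W}{\psi}$ with
\begin{equation}
W=\sum_{x}\bigl(N_{x}\otimes C_{x}-\gamma_{d}P_{x}\otimes I\bigr).
\end{equation}
After Naimark dilation to projective measurements, $A_{x,1}+A_{x,2}\leq I$ forces $A_{x,1}A_{x,2}=0$ and hence $N_{x}^{2}=P_{x}$, while $\sum_{b}B_{y}^{b}=I$ yields $\sum_{x}C_{x}^{2}=2(d-1)\,I$ by direct expansion. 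The key inequality is
\begin{equation}
N_{x}\otimes C_{x}\leq\tfrac{\lambda}{2}N_{x}^{2}\otimes I+\tfrac{1}{2\lambda}I\otimes C_{x}^{2},
\end{equation}
valid for any $\lambda>0$ as a consequence of $(\sqrt{\lambda}\,N_{x}\otimes I-\tfrac{1}{\sqrt{\lambda}}\,I\otimes C_{x})^{2}\geq 0$ and the commutativity of the two tensor factors. Summing over $x$ and substituting the identities above gives
\begin{equation}
W\leq\bigl(\tfrac{\lambda}{2}-\gamma_{d}\bigr)\sum_{x}P_{x}\otimes I+\tfrac{d-1}{\lambda}I,
\end{equation}
and choosing $\lambda=2\gamma_{d}=\sqrt{(d-1)/d}$ eliminates the first term, leaving $W\leq\sqrt{d(d-1)}\,I$.

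\emph{Saturation.} For $\ket{\psi}=\ket{\maxent{d}}$ and Bob measuring any two MUBs $\{\ket{e_{j}}\},\{\ket{f_{k}}\}$, one has $C_{x}=\ketbra{e_{x_{1}}}{e_{x_{1}}}-\ketbra{f_{x_{2}}}{f_{x_{2}}}$, which the MUB relation $|\braket{e_{x_{1}}}{f_{x_{2}}}|^{2}=1/d$ renders rank-two with eigenvalues $\pm\sqrt{(d-1)/d}$ on the span of $\ket{e_{x_{1}}},\ket{f_{x_{2}}}$. Taking Alice's measurement at input $x$ to be the rank-one projectors $A_{x,1},A_{x,2}$ onto the complex conjugates of those eigenvectors (with $A_{x,\perp}$ the orthogonal complement) aligns $N_{x}^{T}$ with $C_{x}$ on the supporting subspace, saturating the Cauchy--Schwarz step for every $x$ simultaneously; the value $\sqrt{d(d-1)}$ is then confirmed by direct computation via $\bracket{\maxent{d}}{M\otimes N}{\maxent{d}}=\tfrac{1}{d}\Tr(M^{T}N)$.

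The main technical point is that $\gamma_{d}$ and the Cauchy--Schwarz weight $\lambda$ must be calibrated jointly so that the residual $\sum_{x}P_{x}\otimes I$ contribution cancels; this is exactly what pins down the value $\gamma_{d}=\tfrac{1}{2}\sqrt{(d-1)/d}$ and, at the saturating strategy, compels Alice's non-$\perp$ POVM elements to be rank-one projectors so that each $P_{x}$ is a rank-two projector.
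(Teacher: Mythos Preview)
Your proof is correct and rests on the same two ingredients as the paper's: the identity $\sum_{x}(P_{x_{1}}-Q_{x_{2}})^{2}=2(d-1)\I$ for projective $d$-outcome measurements, and a Cauchy--Schwarz-type step relating $N_{x}\otimes C_{x}$ to $N_{x}^{2}$ and $C_{x}^{2}$. The difference is in packaging. The paper bounds each term $\bracket{\psi}{N_{x}\otimes C_{x}}{\psi}$ by Cauchy--Schwarz, then applies Cauchy--Schwarz again to the sum to obtain $\Bfunc{R}{MUB}\leq\sqrt{2(d-1)t}$ with $t=\sum_{x}\bracket{\psi}{P_{x}}{\psi}$, and finally maximises $\sqrt{2(d-1)t}-\gamma_{d}t$ over $t\geq 0$; the optimum $t=2d$ recovers the bound. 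You instead work at the operator level with the weighted AM--GM inequality and choose the weight $\lambda=2\gamma_{d}$ upfront so that the penalty term is absorbed and $W\leq\sqrt{d(d-1)}\,\I$ follows in one stroke. This is a mild streamlining: it avoids the auxiliary optimisation and yields an operator inequality rather than a scalar one, which is slightly stronger in form, though both arguments are tight at exactly the same realisations. Your local-bound argument and the saturation construction coincide with the paper's.
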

\begin{proof} A complete proof is presented in Supplementary Material (SM, section I.A). The essential ingredient to obtain the bound in Eq.~\eqref{QMUB} is the Cauchy--Schwarz inequality. Furthermore, for local models, by inspecting the symmetries of the Bell functional $\Bfunc{S}{MUB}$, one finds that the local bound can be attained by Bob always outputting $b=1$. This greatly simplifies the evaluation of the bound in Eq.~\eqref{CMUB}.

To see that the bound in Eq.~\eqref{QMUB} can be saturated in quantum theory, let us evaluate the Bell functional for a particular quantum realisation. Let $\ket{\psi}$ be the shared state, $\{P_{x_1}\}_{x_1=1}^d$ and $\{Q_{x_2}\}_{x_2=1}^d$ be the measurement operators of Bob corresponding to $y=1$ and $y=2$ respectively and $A_{x}$ be the observable of Alice defined as the difference between Alice's outcome-one and outcome-two measurement operators, i.e.~$A_x=A_x^1-A_x^2$. Then, the Bell functional reads
\begin{align}\label{eq:MUBfuncop}
\Bfunc{S}{MUB} =\sum_{x} \bracket{\psi}{A_x\otimes \left(P_{x_1}-Q_{x_2}\right)-\gamma_d\left(A_x^1+A_x^2\right)\otimes \I}{\psi}.
\end{align}
Now, we choose the maximally entangled state of local dimension $d$, i.e.~$\ket{\psi}=\ket{\maxent{d}}$, and define Bob's measurements as rank-one projectors  $P_{x_1}=\ketbra{\phi_{x_1}}{\phi_{x_1}}$ and $Q_{x_2}=\ketbra{\varphi_{x_2}}{\varphi_{x_2}}$ which correspond to MUBs, i.e.~$|\braket{\phi_{x_1}}{\varphi_{x_2}}|^2=1/d$. Finally, we choose Alice's observables as $A_x=\sqrt{d/(d-1)}(P_{x_1}-Q_{x_2})\tran$, where the pre-factor ensures the correct normalisation and $\tran$ denotes the transpose in the standard basis. Note that $A_{x}$ is a rank-two operator; the corresponding measurement operator $A_x^1$ ($A_x^2$) is a rank-one projector onto the eigenvector of $A_x$ associated to the positive (negative) eigenvalue. Since the subsystems of $\ket{\psi_d^\text{max}}$ are maximally mixed, this implies  $\bracket{\psi_d^\text{max}}{(A_x^1+A_x^2)\otimes \openone}{\psi_d^\text{max}}=2/d$. Inserting all this into the above quantum model and exploiting the fact that for any linear operator $O$ we have $O \otimes \I \ket{\maxent{d}} = \I \otimes O\tran \ket{\maxent{d}}$, we straightforwardly saturate the bound in Eq.~\eqref{QMUB}.
\end{proof}

We remark that for the case of $d=2$ one could also choose $\gamma_2=0$ and retain the property that qubit MUBs are optimal. In this case the marginal term is not necessary, because in the optimal realisation Alice never outputs $\perp$. Then, the quantum bound becomes $2\sqrt{2}$ and the local bound becomes $2$. The resulting Bell inequality resembles the Clauser--Horne--Shimony--Holt (CHSH) inequality \cite{CHSH}, not just because it gives the same local and quantum values, but also because the optimal realisations coincide. More specifically, the measurements of Bob are precisely the optimal CHSH measurements, whereas the four measurements of Alice correspond to two pairs of optimal CHSH measurements.

\section{Device-independent certification of  mutual unbiasedness}

Theorem~\ref{thmBellForMUBs} establishes that a pair of MUBs of any dimension can generate a maximal quantum violation in a Bell inequality test. We now turn to the converse matter, namely that of device-independent certification. Specifically, given that we observe the maximal quantum violation, i.e.~equality in Eq.~\eqref{QMUB}, what can be said about the shared state and the measurements? Since the measurement operators can only be characterised on the support of the state, to simplify the notation let us assume that the marginal states of Alice and Bob are full-rank.\footnote{Note that this is not a physical assumption but a mathematical convention which simplifies the notation in the rest of this work. Whenever the marginal state is not full-rank, the local Hilbert space naturally decomposes as a direct sum of two terms, where the state is only supported on one of them. Clearly, the measurement operators can only be characterised on the support of the state and that is precisely what we achieve. This convention allows us to only write out the part that can be characterised and leave out the rest.}

\begin{theorem}[Device-independent certification]\label{thmDIMUB}
	The maximal quantum value of the Bell functional $\Bfunc{S}{MUB}$ in Eq.~\eqref{Fd} implies that
	\begin{itemize}
		\item There exist local isometries which allow Alice and Bob to extract a maximally entangled state of local dimension $d$.
		\item If the marginal state of Bob is full-rank, the two $d$-outcome measurements he performs satisfy the relations
		\begin{equation}\label{sandwich}
		P_{a}=dP_{a} Q_{b} P_{a} \qquad \text{and}  \qquad  Q_{b}=dQ_{b} P_{a} Q_{b},
		\end{equation}
		for all $a$ and $b$.
	\end{itemize}
\end{theorem}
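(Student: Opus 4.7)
The plan is to convert the Cauchy--Schwarz argument that yields the quantum bound in Theorem~\ref{thmBellForMUBs} into a rigidity statement. Writing $\Bfunc{S}{MUB}$ in operator form as in Eq.~\eqref{eq:MUBfuncop}, the bound $\sqrt{d(d-1)}$ is obtained by chaining three inequalities: (i) a pointwise Cauchy--Schwarz on each $\langle A_x \otimes (P_{x_1}-Q_{x_2})\rangle$; (ii) a Cauchy--Schwarz on the sum over $x$; and (iii) the operator bounds $P_{x_1}^2 \leq P_{x_1}$, $Q_{x_2}^2 \leq Q_{x_2}$, and $(A_x)^2 \leq A_x^1 + A_x^2$ that couple Alice's norms to the penalty term. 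The coefficient $\gamma_d$ is tuned so that all three are saturable simultaneously; my strategy is to harvest the equality conditions and read off Theorem~\ref{thmDIMUB} from them.

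Saturation of (iii) should force, on the supports of the reduced states, that $\{P_{x_1}\}$ and $\{Q_{x_2}\}$ are projective and that $\{A_x^1, A_x^2, A_x^\perp\}$ is a projective three-outcome measurement, so $A_x$ is a $\pm 1, 0$ observable with $A_x^2 = A_x^1 + A_x^2$. Tightness of (i) and (ii), together with matching the overall magnitude, pins down the operator identity
\be
A_x \otimes \I \, \ket{\psi} \;=\; \sqrt{d/(d-1)}\;\I \otimes (P_{x_1} - Q_{x_2})\,\ket{\psi}
\ee
for every $x=(x_1,x_2)$, while per-$x$ saturation of the penalty additionally gives $\langle A_x^1 + A_x^2\rangle = 2/d$.

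From here I would derive the two conclusions of Theorem~\ref{thmDIMUB} in parallel. For the state, I would sum the displayed identity over $x_2$ at fixed $x_1$, using $\sum_{x_2} Q_{x_2} = \I$ to collapse the right-hand side to $\sqrt{d/(d-1)}\,\I \otimes (d P_{x_1} - \I)\,\ket{\psi}$; matching operator norms pointwise and using the symmetric sum over $x_1$ forces the Schmidt spectrum of $\ket{\psi}$ to be uniform, so that $\rho_B$ is proportional to the identity on its support and the standard swap-isometry extracts $\ket{\maxent{d}}$. For the sandwich identities, I would apply $\I \otimes P_{x_1}$ to both sides of the operator identity to get $A_x \otimes P_{x_1} \ket{\psi} = \sqrt{d/(d-1)}\,\I \otimes P_{x_1}(P_{x_1}-Q_{x_2}) \ket{\psi}$, then chain a second application (left-multiplying by $A_x \otimes \I$ and using $A_x^2 = A_x^1 + A_x^2$) to produce a Bob-side vector equation whose inner product with any state in the support of $\rho_B$ reduces to $P_{x_1} - d P_{x_1} Q_{x_2} P_{x_1} = 0$; the full-rank hypothesis then promotes this to the operator identity $P_a = d\,P_a Q_b P_a$. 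The symmetric identity follows by the $P \leftrightarrow Q$ symmetry of the functional (corresponding to $y \leftrightarrow \bar y$ and relabelling Alice's outcomes).

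The main obstacle, as I see it, is the bookkeeping around Alice's three-outcome POVM: showing that saturation of $(A_x)^2 \leq A_x^1 + A_x^2$ simultaneously forces orthogonal projectivity of $\{A_x^1, A_x^2, A_x^\perp\}$, pins $\langle A_x^1 + A_x^2\rangle = 2/d$ pointwise in $x$, and is compatible with the displayed operator identity requires organising the proof as a single sum-of-squares certificate for $\Bfunc{S}{MUB}$ rather than a chain of scalar inequalities. Once the operator structure of Alice's measurements is in hand, the remaining steps become algebra on $\{P, Q, A\}$ fuelled by the completeness relations, and I expect the state-extraction step to follow a standard template from the self-testing literature.
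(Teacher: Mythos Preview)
Your overall framework matches the paper: saturate the chain of Cauchy--Schwarz inequalities used to prove Theorem~\ref{thmBellForMUBs}, extract the cross relation $A_x\ket{\psi}=\sqrt{d/(d-1)}\,(P_{x_1}-Q_{x_2})\ket{\psi}$, and read off rigidity from it. But your route from the cross relation to the sandwich identities has a genuine gap, and your state-extraction plan is in the wrong order.

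For the sandwich relations, you propose to apply $\I\otimes P_{x_1}$ and then $A_x\otimes\I$ to the cross relation. This indeed yields
\[
(A_x)^2\otimes P_{x_1}\,\ket{\psi}=\tfrac{d}{d-1}\,\I\otimes\bigl(P_{x_1}-P_{x_1}Q_{x_2}P_{x_1}\bigr)\ket{\psi},
\]
but the left-hand side still carries Alice's operator $(A_x)^2$, and knowing only the scalar $\langle A_x^1+A_x^2\rangle$ (which, incidentally, is \emph{not} fixed to $2/d$ pointwise by the Cauchy--Schwarz equality conditions alone---only the sum $t=2d$ is) does not let you trace Alice out cleanly. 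The paper's key move is different: since Alice's measurements may be taken projective without loss of generality (by Naimark dilation, not by saturation), one has $(A_x)^3=A_x$, and applying the cross relation three times gives the purely Bob-side identity
\[
P_{x_1}-Q_{x_2}=\tfrac{d}{d-1}\,(P_{x_1}-Q_{x_2})^3.
\]
Expanding and then summing over $x_1$ (resp.~$x_2$), orthogonality of the projectors $\{P_{x_1}\}$ isolates $P_a=d\,P_aQ_bP_a$ (resp.~$Q_b=d\,Q_bP_aQ_b$). This cubic trick is the missing idea in your argument.

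For the state, your plan to sum over $x_2$ and ``match operator norms'' to force a flat Schmidt spectrum is too vague to work as stated. The paper instead \emph{uses the sandwich relations first}: once $P_a=d\,P_aQ_bP_a$ and $Q_b=d\,Q_bP_aQ_b$ are established, the operators $U_k=d\sum_j P_jQ_1P_{j+k}$ are genuine unitaries, and an explicit swap-type isometry $V_B=\sum_j\ket{j}\otimes U_jP_j$ (with an analogous $V_A$ built from Alice-side copies $\widetilde P,\widetilde Q$ defined through the cross relation) extracts $\ket{\maxent{d}}$. So the two parts of the theorem are sequential, not parallel: certify Bob's measurements, then build the extraction isometries from them.
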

\begin{proof}
The proof is detailed in SM (section I.A). Here, we briefly summarise the part concerning Bob's measurements. Since the Cauchy--Schwarz inequality is the main tool for proving the quantum bound in Eq.~\eqref{QMUB}, saturating it implies that also the Cauchy--Schwarz inequality is saturated. This allows us to deduce that the measurements of Bob are projective and moreover we obtain the following optimality condition:  
\begin{equation}\label{crossrelation}
A_x\otimes \I \ket{\psi}=\I \otimes \sqrt{\frac{d}{d-1}}\left(P_{x_1}-Q_{x_2}\right)\ket{\psi},
\end{equation}
for all $x_{1}, x_{2} \in [d]$ where the factor $\sqrt{d/(d-1)}$ can be regarded as a normalisation. Since we do not attempt to certify the measurements of Alice, we can without loss of generality assume that they are projective. This implies that the spectrum of $A_x$ only contains $\{+1,-1, 0\}$ and therefore $(A_x)^3=A_x$. This allows us to obtain a relation that only contains Bob's operators. Tracing out Alice's system and subsequently eliminating the marginal state of Bob (it is assumed to be full-rank) leads to
\begin{equation}
P_{x_1}-Q_{x_2}=\frac{d}{d-1}\left(P_{x_1}-Q_{x_2}\right)^3.
\end{equation}
Expanding this relation and then using projectivity and the completeness of measurements, one recovers the result in Eq.~\eqref{sandwich}.
\end{proof}
We have shown that observing the maximal quantum value of $\mathcal{S}_d^\text{MUB}$ implies that the measurements of Bob satisfy the relations given in Eq.~\eqref{sandwich}. It is natural to ask whether a stronger conclusion can be derived, but the answer turns out to be negative. In SM (section I.B) we show that any pair of $d$-outcome measurements (acting on a finite-dimensional Hilbert space) satisfying the relations in Eq.~\eqref{sandwich} is capable of generating the maximal Bell inequality violation. For $d = 2, 3$ the relations given in Eq.~\eqref{sandwich} imply that the unknown measurements correspond to a direct sum of MUBs (see SM section II.C) and since in these dimension there exists only a single pair of MUBs (up to unitaries and complex conjugation), our results imply a self-testing statement of the usual kind. However, since in higher dimensions not all pairs of MUBs are equivalent~\cite{Brierley}, our certification statement is less informative than the usual formulation of self-testing. In other words, our inequalities allow us to self-test the quantum state, but we cannot completely determine the measurements (see Refs.~\cite{Jeba, Kaniewski2} for related results). Note that we could also conduct a device-independent characterisation of the measurements of Alice. In fact, Eq.~(61) from the SM enables us to relate the measurements of Alice to the measurements of Bob, which we have already characterised. However, since we do not expect the observables of Alice to satisfy any simple algebraic relations and since they are not directly relevant for the scope of this work (namely MUBs and SICs), we do not pursue this direction.

The certification provided in Theorem~\ref{thmDIMUB} turns out to be sufficient to determine all the probabilities $p(a,b|x,y)$ that arise in the Bell experiment (see SM section I.C), which means that the maximal quantum value of $\Bfunc{S}{MUB}$ is achieved by a single probability distribution. Due to the existence of inequivalent pairs of MUBs in certain dimensions (e.g.~for $d = 4$), this constitutes the first example of an extremal point of the quantum set which admits inequivalent quantum realisations.\footnote{Recall that the notion of equivalence we employ is precisely the one that appears in the context of self-testing, i.e.~we allow for additional degrees of freedom, local isometries and a transposition.}

It is important to understand the relation between the condition given in Eq.~\eqref{sandwich} and the concept of MUBs. Naturally, if $\{P_{a}\}_{a=1}^d$ and $\{Q_{b}\}_{b=1}^d$ are $d$-dimensional MUBs, the relations \eqref{sandwich} are satisfied. Interestingly, however, there exist solutions to Eq.~\eqref{sandwich} which are neither MUBs nor direct sums thereof. While, as mentioned above, for $d = 2, 3$ one can show that any measurements satisfying the relations \eqref{sandwich} must correspond to a direct sum of MUBs, this is not true in general. For $d=4,5$ we have found explicit examples of measurement operators satisfying Eq.~\eqref{sandwich} which cannot be written as a direct sum of MUBs. In fact, they cannot even be transformed into a pair of MUBs via a completely positive unital map (see SM section II for details). These results beg the crucial question: how should one interpret the condition given in Eq.~\eqref{sandwich}?

To answer this question we resort to an operational formulation of what it means for two measurements to be mutually unbiased. An operational approach must rely on observable quantities (i.e.~probabilities), as opposed to algebraic relations between vectors or operators. This notion, which we refer to as mutually unbiased measurements (MUMs), was recently formalised by Tasca~\etal~\cite{Tasca}. Note that in what follows we use the term ``eigenvector'' to refer to eigenvectors corresponding to non-zero eigenvalues.
\begin{definition}[Mutually unbiased measurements]\label{defOp}
We say that two $n$-outcome measurements $\{P_a\}_{a=1}^n$ and $\{Q_b\}_{b=1}^n$ are mutually unbiased if they are projective and the following implications hold:
\begin{align}\label{opMUB}\nonumber
& \bracket{\psi}{P_a}{\psi}=1 \Rightarrow \bracket{\psi}{Q_b}{\psi}=\frac{1}{n}\\
&  \bracket{\psi}{Q_b}{\psi}=1 \Rightarrow \bracket{\psi}{P_a}{\psi}=\frac{1}{n},
\end{align} 
for all $a$ and $b$. That is, two projective measurements are mutually unbiased if the eigenvectors of one measurement give rise to a uniform outcome distribution for the other measurement.
\end{definition}
Note that this definition captures precisely the intuition behind MUBs without the need to specify the dimension of the underlying Hilbert space. Interestingly enough, MUMs admit a simple algebraic characterisation.

\begin{theorem}\label{thmOP}
Two $n$-outcome measurements $\{P_a\}_{a=1}^n$ and $\{Q_b\}_{b=1}^n$ are mutually unbiased if and only if
\begin{equation}
P_{a}=nP_{a} Q_{b} P_{a} \qquad \text{and}  \qquad  Q_{b}=nQ_{b} P_{a} Q_{b},
\end{equation}
for all $a$ and $b$.
\end{theorem}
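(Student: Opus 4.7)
The plan is to prove the two directions separately, both being short once the right viewpoint is fixed.

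For the forward direction, assume $\{P_a\}$ and $\{Q_b\}$ are mutually unbiased in the sense of Definition~\ref{defOp}, so in particular both families are projective. Fix arbitrary indices $a,b$ and observe that, for a unit vector $\ket{\psi}$, the condition $\bracket{\psi}{P_a}{\psi}=1$ is equivalent to $P_a\ket{\psi}=\ket{\psi}$, i.e.\ $\ket{\psi}$ lies in the range of $P_a$. The operational hypothesis then reads $\bracket{\psi}{P_a Q_b P_a}{\psi}=(1/n)\braket{\psi}{\psi}$ for every vector $\ket{\psi}$ in the range of $P_a$. Since both $P_a Q_b P_a$ and $(1/n)P_a$ are Hermitian and supported on that range (vanishing on its orthogonal complement, because of the outer $P_a$ factors), a standard polarization argument, obtained by expanding expectations on $\ket{\phi}\pm\ket{\psi}$ and $\ket{\phi}\pm i\ket{\psi}$ for $\ket{\phi},\ket{\psi}$ in that range, upgrades the equality of quadratic forms to the operator identity $P_a Q_b P_a = (1/n)P_a$, that is $P_a = n P_a Q_b P_a$. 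The companion identity $Q_b = n Q_b P_a Q_b$ follows immediately by exchanging the roles of the two measurements.

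For the reverse direction, I would first deduce projectivity before invoking Definition~\ref{defOp}. Summing the hypothesised identity $P_a = n P_a Q_b P_a$ over $b$ and using completeness $\sum_b Q_b = \I$ gives $n P_a = n P_a^2$, hence $P_a^2 = P_a$; symmetrically $Q_b^2 = Q_b$. With projectivity established, suppose $\bracket{\psi}{P_a}{\psi}=1$ for a unit vector $\ket{\psi}$; then $P_a\ket{\psi}=\ket{\psi}$, so
\begin{equation*}
\bracket{\psi}{Q_b}{\psi} = \bracket{\psi}{P_a Q_b P_a}{\psi} = \frac{1}{n}\bracket{\psi}{P_a}{\psi} = \frac{1}{n},
\end{equation*}
where the middle equality uses the algebraic hypothesis. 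The symmetric implication follows by the same argument with $P$ and $Q$ swapped, thus recovering both clauses of Definition~\ref{defOp}.

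The only mildly delicate step is the polarization argument in the forward direction: one must check that constancy of $\bracket{\psi}{Q_b}{\psi}$ on unit vectors inside the range of $P_a$ upgrades not just to an identity on that subspace, but to an operator identity on the whole Hilbert space. This works precisely because $P_a$ sandwiches both sides of the target equation and annihilates the orthogonal complement, so the subspace identity and the global one are equivalent. Everything else, namely extracting projectivity from completeness and the one-line operational conclusion, is immediate.
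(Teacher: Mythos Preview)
Your proof is correct and essentially identical to the paper's. The only difference is presentational: where you invoke polarization abstractly, the paper fixes an orthonormal basis $\{\ket{e_j^a}\}$ for the range of each $P_a$ and checks that the off-diagonal matrix elements $\bracket{e_j^a}{Q_b}{e_k^a}$ vanish by evaluating on the test vectors $(\ket{e_j^a}+e^{i\theta}\ket{e_k^a})/\sqrt{2}$ for $\theta\in\{0,\pi/2\}$, which is precisely the polarization identity written out in coordinates.
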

\begin{proof}
Let us first assume that the algebraic relations hold. By summing over the middle index, one finds that both measurements are projective. Moreover, if $\ket{\psi}$ is an eigenvector of $P_a$, then $\bracket{\psi}{Q_b}{\psi} = \bracket{\psi}{P_a Q_b P_a}{\psi} = \frac{1}{n} \bracket{\psi}{P_a}{\psi} = \frac{1}{n}$.
By symmetry, the analogous property holds if $\ket{\psi}$ is an eigenvector of $Q_b$.

Conversely, let us show that MUMs must satisfy the above algebraic relations. Since $\sum_{a} P_{a} = \I$ we can choose an orthonormal basis of the Hilbert space composed only of the eigenvectors of the measurement operators. Let $\{\ket{e_j^a}\}_{a,j}$ be an orthonormal basis, where $a \in [n]$ tells us which projector the eigenvector corresponds to and $j$ labels the eigenvectors within a fixed projector (if $P_{a}$ has finite rank, then $j \in [ \Tr P_{a} ]$, otherwise $j \in \mathbb{N}$). By construction for such a basis we have
$P_a\ket{e_j^{a'}}=\delta_{aa'}\ket{e_j^a}$. To show that $P_{a} = nP_{a} Q_{b} P_{a}$ it suffices to show that the two operators have the same coefficients in this basis. Since
\begin{align}
\bracket{e_j^{a'}}{nP_aQ_bP_a}{e_k^{a''}}&=n\delta_{aa'}\delta_{aa''}\bracket{e_j^a}{Q_b}{e_k^a},\\
 \bracket{e_j^{a'}}{P_a}{e_k^{a''}}&=\delta_{aa'}\delta_{aa''}\delta_{jk}
\end{align}
it suffices to show that $n\bracket{e_j^a}{Q_b}{e_k^a}=\delta_{jk}$. For $j=k$ this is a direct consequence of the definition in Eq.~\eqref{opMUB}. To prove the other case, define $\ket{\phi_{\theta}}=\left(\ket{e_j^a}+\mathrm{e}^{\mathrm{i} \theta} \ket{e_k^a}\right)/\sqrt{2}$, for $\theta \in [ 0, 2\pi )$. Since $P_a \ket{\phi_{\theta}}=\ket{\phi_{\theta}}$, we have $\bracket{\phi_{\theta}}{Q_b}{\phi_{\theta}}=1/n$. Writing this equality out gives
\begin{equation}
\frac{1}{n}=\frac{1}{2}\left(\frac{2}{n}+\mathrm{e}^{\mathrm{i} \theta} \bracket{e_j^a}{Q_b}{e_k^a} + \mathrm{e}^{-\mathrm{i} \theta} \bracket{e_k^a}{Q_b}{e_j^a}\right).
\end{equation}
Choosing $\theta = 0$ implies that the real part of $\bracket{e_j^a}{Q_b}{e_k^a}$ vanishes, while $\theta = \pi/2$ implies that the imaginary part vanishes. Proving the relation $Q_{b} = n Q_{b} P_{a} Q_{b}$ proceeds in an analogous fashion.
\end{proof}

Theorem~\ref{thmOP} implies that the maximal violation of the Bell inequality for MUBs certifies precisely the fact the Bob's measurements are mutually unbiased. To provide further evidence that MUMs constitute the correct device-independent generalisation of MUBs, we give two specific situations in which the two objects behave in the same manner.

Maassen and Uffink considered a scenario in which two measurements (with a finite number of outcomes) are performed on an unknown state. Their famous uncertainty relation provides a state-independent lower bound on the sum of the Shannon entropies of the resulting distributions~\cite{MaassenUffink}. While the original result only applies to rank-one projective measurements, a generalisation to non-projective measurements reads \cite{Krishna}
\begin{equation}
H(P)+H(Q)\geq -\log c, 
\end{equation}
where $H$ denotes the Shannon entropy and $c=\max_{a,b} \norm{\sqrt{P_a}\sqrt{Q_b}}^2$ where $\norm{\cdot}$ is the operator norm. If we restrict ourselves to rank-one projective measurements on a Hilbert space of dimension $d$, one finds that the largest uncertainty, corresponding to $c=1/d$, is obtained only by MUBs. It turns out that precisely the same value is achieved by any pair of MUMs with $d$ outcomes regardless of the dimension of the Hilbert space:
\begin{multline}
c=\max_{a,b} \norm{\sqrt{P_a} \sqrt{Q_b}}^2=\max_{a,b} \norm{P_aQ_b}^2\\
=\max_{a,b} \norm{P_aQ_bP_a}=\max_{a} \norm{P_a/d}=\frac{1}{d}.
\end{multline}

A closely related concept is that of measurement incompatibility, which captures the phenomenon that two measurements cannot be performed simultaneously on a single copy of a system. The extent to which two measurements are incompatible can be quantified e.g.~by so-called incompatibility robustness measures \cite{HMZ16}. In SM (section II.D), we show that according to these measures MUMs are exactly as incompatible as MUBs. Moreover, we can show that for the so-called generalised incompatibility robustness \cite{Haapasalo}, MUMs are among the most incompatible pairs of $d$-outcome measurements.

\section{Application: Device-independent Quantum Key Distribution}\label{sec:QKD}
The fact that the maximal quantum violation of the Bell inequalities introduced above requires a maximally entangled state and MUMs, and moreover that it is achieved by a unique probability distribution, suggests that these inequalities might be useful for device-independent quantum information processing. In the task of quantum key distribution~\cite{BB84, E91, QKD} Alice and Bob aim to establish a shared data set (a key) that is secure against a malicious eavesdropper. Such a task requires the use of incompatible measurements, and MUBs in dimension $d = 2$ constitute the most popular choice. Since in the ideal case the measurement outcomes of Alice and Bob that contribute to the key should be perfectly correlated, most protocols are based on maximally entangled states. In the device-independent approach to quantum key distribution, the amount of key and its security is deduced from the observed Bell inequality violation.

We present a proof-of-principle application to device-independent quantum key distribution based on the quantum nonlocality witnessed through the Bell functional in Eq.~\eqref{Fd}. In the ideal case, Alice and Bob follow the strategy that gives them the maximal violation, i.e.~they share a maximally entangled state of local dimension $d$ and Bob measures two MUBs. To generate the key we provide Alice with an extra setting that produces outcomes which are perfectly correlated with the outcomes of the first setting of Bob. This will be the only pair of settings from which the raw key will be extracted and let us denote them by $x=x^*$ and $y=y^*=1$. In most rounds of the experiment, Alice and Bob choose these settings and therefore contribute towards the raw key. However, to ensure security, a small number of rounds is used to evaluate the Bell functional. In these rounds, which are chosen at random, Alice and Bob randomly choose their measurement settings. Once the experiment is complete, the resulting value of the Bell functional is used to infer the amount of secure raw key shared between Alice and Bob. The raw key can then be turned into the final key by standard classical post-processing. For simplicity, we consider only individual attacks and moreover we focus on the limit of asymptotically many rounds in which fluctuations due to finite statistics can be neglected.

The key rate, $K$, can be lower bounded by \cite{Masanes}
\begin{equation}\label{keyrate}
K\geq -\log \left( P_g^\beta\right)-H(B_{y^*}|A_{x^*}),
\end{equation}
where $P_g^\beta$ denotes the highest probability that the eavesdropper can correctly guess Bob's outcome when his setting is $y^*$ given that the Bell inequality value $\beta$ was observed, and $H(\cdot|\cdot)$ denotes the conditional Shannon entropy. The guessing probability $P_g^\beta$ is defined as
\begin{equation}\label{guess}
P_g^\beta \equiv \sup \bigg\{ \sum_{c=1}^d \bracket{\psi_\text{ABE}}{\I \otimes P_c \otimes E_c}{\psi_\text{ABE}} \bigg\},
\end{equation}
where $\{ E_{c} \}_{c = 1}^{d}$ is the measurement employed by the eavesdropper to produce her guess, the expression inside the curly braces is the probability that her outcome is the same as Bob's for a particular realisation and the supremum is taken over all quantum realisations (the tripartite state and measurements of all three parties) compatible with the observed Bell inequality value $\beta$.

Let us first focus on the key rate in a noise-free scenario, i.e.~in a scenario in which $\Bfunc{S}{MUB}$ attains its maximal value. Then, one straightforwardly arrives at the following result.
\begin{theorem}[Device-independent key rate]
In the noiseless case the quantum key distribution protocol based on $\Bfunc{S}{MUB}$ achieves the key rate of
	\begin{equation}\label{qkey}
	K=\log d
	\end{equation}
	for any integer $d\geq 2$. 
\end{theorem}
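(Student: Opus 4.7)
The plan is to bound the two contributions in the key-rate lower bound \eqref{keyrate} separately by invoking the device-independent certification of Theorem~\ref{thmDIMUB}, and then combine them. Since Bob's $y^*=1$ measurement has $d$ outcomes, the Shannon entropy of its outcome distribution is at most $\log d$, so the value $K=\log d$ we aim to prove is simultaneously optimal.

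First I would dispatch the term $H(B_{y^*}|A_{x^*})$. The honest parties are free to design Alice's auxiliary key-generation setting $x^*$. By Theorem~\ref{thmDIMUB}, after applying local isometries the shared state equals $\ket{\maxent{d}}$ and Bob's $y^*=1$ measurement consists of rank-one projectors $\{P_c\}_{c=1}^d$ on the logical subspace. Alice chooses the complex-conjugate measurement $\{P_c\tran\}_{c=1}^d$ for her setting $x^*$. The transpose identity $O\otimes\I\ket{\maxent{d}}=\I\otimes O\tran\ket{\maxent{d}}$ then guarantees perfect correlation $p(a=b\mid x^*,y^*)=1$, whence $H(B_{y^*}|A_{x^*})=0$.

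For the guessing probability, let $\ket{\psi_{\text{ABE}}}$ be any tripartite pure state compatible with the observed value $\beta=\sqrt{d(d-1)}$. By Theorem~\ref{thmDIMUB} the self-testing statement yields local isometries $V_A\otimes V_B$ that bring the bipartite reduced state to $\ketbra{\maxent{d}}{\maxent{d}}\otimes\sigma_{\text{junk}}$. Since $\ket{\maxent{d}}$ is already pure, any purification to Eve decomposes, after the isometries, as $\ket{\maxent{d}}_{A'B'}\otimes\ket{\zeta}_{A''B''E}$; in particular Eve's system is in product with the logical subspace carrying the key. For any measurement $\{E_c\}_{c=1}^d$ she might employ, the expression inside the supremum in \eqref{guess} therefore factorises as $\sum_c (1/d)\,q_c=1/d$, where $q_c$ is her marginal outcome distribution. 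Hence $P_g^\beta\leq 1/d$.

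Substituting both bounds into \eqref{keyrate} gives $K\geq -\log(1/d)-0=\log d$, matching the trivial upper bound. The one subtlety is arguing carefully that Eve's system decouples from the logical subspace of Alice--Bob: Theorem~\ref{thmDIMUB} characterises only the bipartite marginal $\rho_{\text{AB}}$, so one has to verify that the junk degrees of freedom together with Eve's system can always be collected into a tensor factor disjoint from the key-generating subspace. This is the standard purification argument of device-independent cryptography and constitutes the main conceptual hurdle; everything else is bookkeeping.
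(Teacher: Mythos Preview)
Your proposal is correct but proceeds differently from the paper. For the conditional-entropy term the paper simply observes that the noiseless correlations are by definition those of the ideal honest setup, where Alice's extra setting is chosen to give perfect correlation; invoking self-testing there is harmless but unnecessary.

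The real divergence is in bounding $P_g^\beta$. The paper does \emph{not} use the state self-test plus purification/decoupling. Instead it lets Eve measure first: conditioned on her outcome $c$, Alice and Bob share some $\rho_{AB}^{(c)}$, and since the average Bell value is maximal each conditional realisation must also be maximal. The key input is that the maximal value of $\Bfunc{S}{MUB}$ is achieved by a \emph{unique} probability point (established in the supplementary material, Section~I.C), so all conditional distributions $p_c(a,b|x,y)$ coincide with the ideal one, whose marginal on Bob's side is uniform. Hence $P_g=\sum_c p(c)\cdot\tfrac{1}{d}=\tfrac{1}{d}$. This is the extremality/monogamy argument of Ref.~\cite{Franz}, applied directly at the level of probabilities.

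Your self-testing route is the standard DI-QKD template and also works, but be aware of a gap in how you invoke Theorem~\ref{thmDIMUB}: as stated, it certifies only the state extraction and the MUM relations~\eqref{sandwich}; it does \emph{not} assert that under the extraction isometry $V_B$ one has $V_B P_c V_B^\dagger=\ketbraq{c}_{B'}\otimes(\text{fixed junk projector})_{B''}$. That identity is true---it follows from the explicit form of $V_B$ constructed in the supplementary material---but without it one cannot conclude that Bob's key outcome is determined by the $B'$ register alone and is therefore independent of Eve. The paper's extremality argument sidesteps this by never unpacking the isometries, at the cost of relying on the separate result that the optimal probability point is unique.
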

\begin{proof}
In the noiseless case, Alice and Bob observe exactly the correlations predicted by the ideal setup. In this case the outcomes for settings $(x^{*}, y^{*})$ are perfectly correlated which implies that $H(B_{y^*}|A_{x^*}) = 0$. Therefore, the only non-trivial task is to bound the guessing probability.

Since the actions of the eavesdropper commute with the actions of Alice and Bob, we can assume that she performs her measurement first. If the probability of the eavesdropper observing outcome $c \in [d]$, which we denote by $p(c)$, is non-zero, then the (normalised) state of Alice and Bob conditioned on the eavesdropper observing that outcome is given by:
\begin{equation}
\rho_\text{AB}^{(c)} = \frac{1}{p(c)} \Tr_{C} \big[ (\I \otimes \I \otimes E_{c}) \ketbra{\psi_\text{ABE}}{\psi_\text{ABE}} \big].
\end{equation}
Now Alice and Bob share one of the post-measurement states $\rho_\text{AB}^{(c)}$ and when they perform their Bell inequality test, they will obtain different distributions depending on $c$, which we write as $p_c(a,b|x,y)$. However, since the statistics achieve the maximal quantum value of $\Bfunc{S}{MUB}$ and we have previously shown that the maximal quantum value is achieved by a single probability point, all the probability distributions $p_c(a,b|x,y)$ must be the same. Moreover, we have shown that for this probability point, the marginal distribution of outcomes on Bob's side is uniform over $[d]$ for both inputs. This implies that
\begin{equation}
P_g = \sum_{c=1}^{d} p(c) p_{c}(b = c | y = 1) =\frac{1}{d},
\end{equation}
because $p_{c}(b = c | y = 1) = p(b = c | y = 1) = \frac{1}{d}$ for all $c$.
\end{proof}

We remark that the argument above is a direct consequence of a more general result which states that if a bipartite probability distribution is a nonlocal extremal point of the quantum set, then no external party can be correlated with the outcomes~\cite{Franz}. 

It is interesting to note that the obtained key rate is the largest possible for general setups in which the key is generated from a $d$-outcome measurement. Also, the key rate is optimal for all protocols based on a pair of entangled $d$-dimensional systems subject to projective measurements. This follows from the fact that projective measurements in $\bC^d$ cannot have more than $d$ outcomes. It has recently been shown that the same amount of randomness can be generated using a modified version of the Collins--Gisin--Linden--Massar--Popescu inequalities~\cite{Sarkar}, but note that the measurements used there do not correspond to mutually unbiased bases (except for the special case of $d = 2$).

Let us now depart from the noise-free case and estimate the key rate in the presence of noise. To ensure that both the guessing probability and the conditional Shannon entropy can be computed in terms of a single noise parameter, we have to introduce an explicit noise model. We employ the standard approach in which the measurements remain unchanged, while the maximally entangled state is replaced with an isotropic state given by
\begin{equation}\label{isotropic}
\rho_v=v \ketbra{\maxent{d}}{\maxent{d}}+\frac{1-v}{d^2}\I,
\end{equation}
where $v\in[0,1]$ is the visibility of the state. Using this state and the ideal measurements for Alice and Bob, the relation between $v$ and $\Bfunc{S}{MUB}$ can be easily derived from \eqref{eq:MUBfuncop}, namely,
\begin{equation}
v=\frac{1}{2}\left(1+\frac{\Bfunc{S}{MUB}}{\sqrt{d(d-1)}}\right).
\end{equation}
Utilising this formula, we also obtain the value of $H(B_{y^*}|A_{x^*})$ as a function of the Bell violation. The remaining part of \eqref{keyrate} is the guessing probability \eqref{guess}. In the case of $d=3$, we proceed to bound this quantity through semidefinite programming.

Concretely, we implement the three-party semidefinite relaxation \cite{NPA} of the set of quantum correlations at local level one\footnote{We attribute one operator to each outcome of Bob and the eavesdropper, but only take into account  the first two outcomes of Alice.}. This results in a moment matrix of size $532\times532$ with $15617$ variables. The guessing probability is directly given by the sum of three elements of the moment matrix. It can then be maximised under the constraints that the value of the Bell functional $\mathcal{S}_3^\text{MUB}$ is fixed and the moment matrix is positive semidefinite. However, we notice that this problem is invariant under the following relabelling: $b\to \pi(b)$ for $y=1$, $c\to \pi(c)$, $x_1\to\pi(x_1)$, where $\pi\in S_3$ is a permutation of three elements. Therefore, it is possible to simplify this semidefinite program by requiring the matrix to be invariant under the group action of $S_{3}$ on the moment matrix (i.e.~it is a Reynolds matrix)~\cite{Cai, TavakoliSIC, Rosset}. This reduces the number of free variables in the moment matrix to $2823$. With the SeDuMi~\cite{sedumi} solver, this lowers the precision ($1.1\times 10^{-6}$ instead of $8.4\times 10^{-8}$), but speeds up the computation ($155$s  instead of $8928$s) and requires less memory ($0.1$GB  instead of $5.5$GB). For the maximal value of $\mathcal{S}_d^\text{MUB}$, we recover the noise-free result of $K=\log 3$ up to the fifth digit. Also, we have a key rate of at least one bit when $\Bfunc{S}{MUB} \gtrsim 2.432$ and a non-zero key rate when $\Bfunc{S}{MUB} \gtrsim 2.375$. The latter is close to the local bound, which is $\Bfunc{S}{MUB}\approx 2.367$. The resulting lower bound on the key rate as a function of the Bell inequality violation is plotted in Fig.~\ref{FigQKD}.

\begin{figure}
	\centering
	\includegraphics[width=\columnwidth]{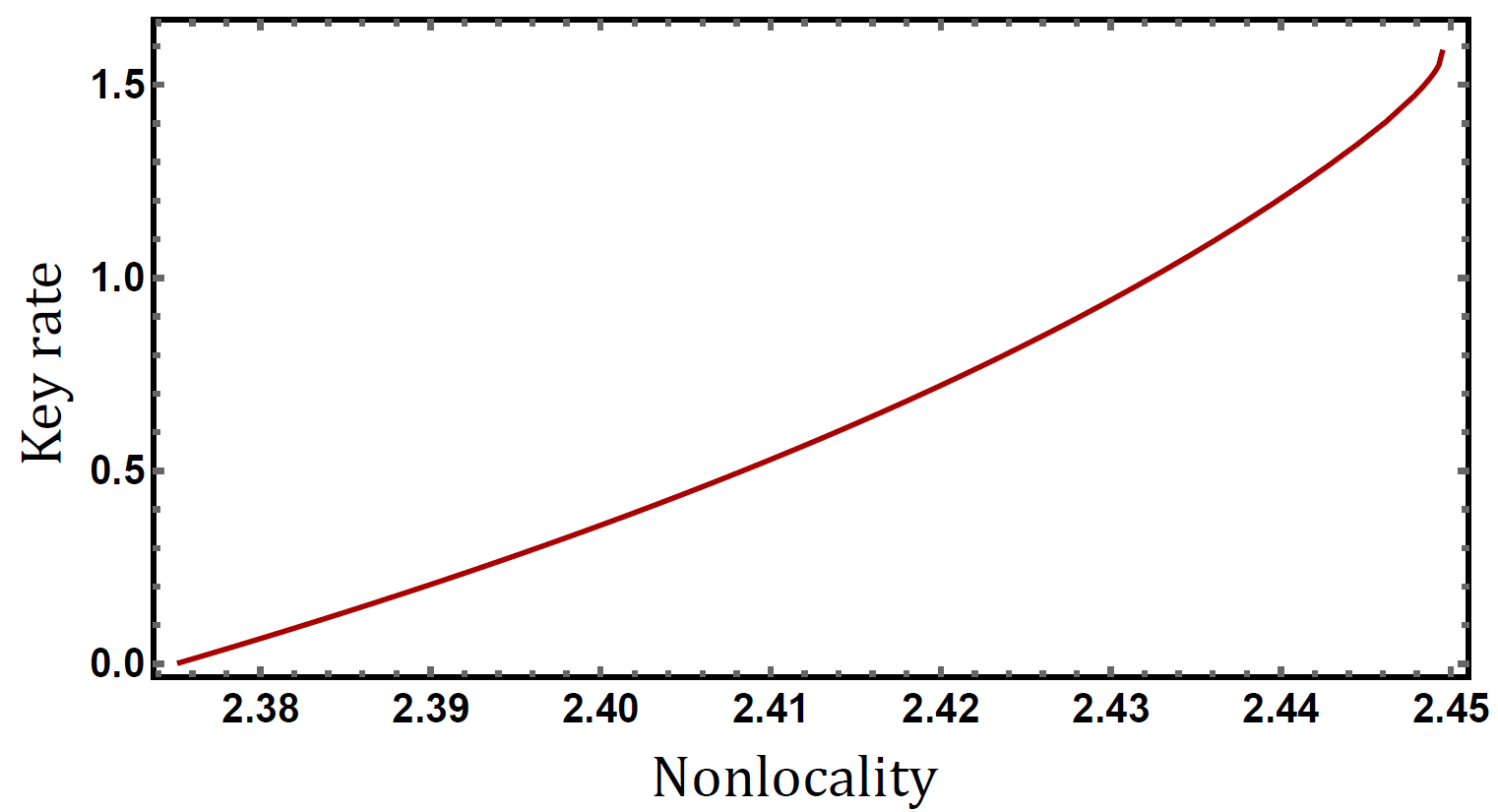}
	\caption{Lower bound on the key rate $K$ in the asymptotic limit versus the value of the Bell functional $\mathcal{S}_3^\text{MUB}$.}\label{FigQKD}
\end{figure}


\section{Nonlocality for symmetric informational completeness}
We now shift our focus from MUBs to SICs. We construct Bell inequalities whose maximal quantum violations are achieved with SICs. We formally define a SIC as a set of $d^2$ unit vectors in  $\bC^d$, namely $\{\ket{r_j}\}_{j=1}^{d^2}$, with the property that 
\begin{equation}\label{sicqq}
	|\braket{r_j}{r_k}|^2=\frac{1}{d+1}
\end{equation} 
for all $j\neq k$, where the constant on the right-hand-side is fixed by normalisation. The reason for there being precisely $d^2$ elements in a SIC\footnote{By SIC (in singular) we refer to one  set of symmetric informationally complete projectors. By SICs (in plural), we refer to all such sets.} is that this is the largest number of unit vectors in $\mathbb{C}^d$ that could possibly admit the uniform overlap property \eqref{sicqq}. Moreover, we formally distinguish between a SIC as the presented set of rank-one projectors and a SIC-POVM (positive operator-valued measure) which is the generalised quantum measurement with $d^2$ possible outcomes corresponding to the sub-normalised projectors $\{\frac{1}{d}\ketbra{r_k}{r_k}\}_{k=1}^{d^2}$.

Since the treatment of SICs in Bell nonlocality turns out to be more challenging than for the case of MUBs, we first establish the relevance of SICs in a simplified Bell scenario subject to additional constraints. This serves as a stepping stone to a subsequent relaxation which gives a standard (unconstrained) Bell inequality for SICs. We then focus on the device-independent certification power of these inequalities, which leads us to an operational notion of symmetric informational completeness. Finally, we extend the Bell inequalities so that their maximal quantum violations are achieved with both projectors forming SICs and a single generalised measurement corresponding to a SIC-POVM.

\subsection{Stepping stone: quantum correlations for SICs}
Consider a Bell scenario, parameterised by an integer $d\geq 2$, involving two parties Alice and Bob who share a physical system.  Alice receives an input labelled by a tuple $(x_{1}, x_{2})$ representing one of $\binom{d^2}{2}$ possible inputs, which we collectively refer to as $x = x_{1} x_{2}$. The tuple is randomly taken from the set $\text{Pairs}(d^2) \equiv \{ x | x_{1},x_{2} \in [d^2] \text{ and } x_{1} < x_{2} \}$. Alice performs a measurement on her part of the shared system and produces a ternary output labelled by $a\in\{1, 2, \perp\}$. Bob receives an input labelled by $y\in[d^2]$ and the associated measurement produces a binary outcome labelled by $b\in\{1, \perp\}$. The joint probability distribution is denoted by $p(a,b|x, y)$, and the Bell scenario is illustrated in Fig.~\ref{FigSICmarginal}.

\begin{figure}
	\centering
	\includegraphics[width=\columnwidth]{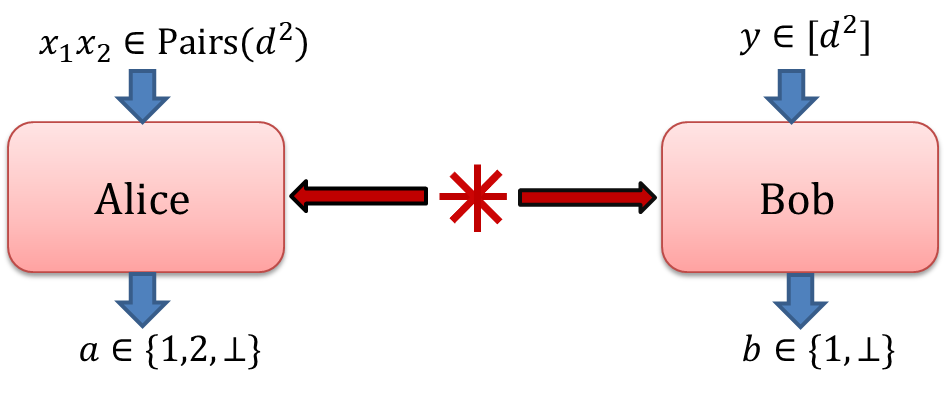}
	\caption{Bell scenario for SICs of dimension $d$. Alice receives one of $\binom{d^2}{2}$ inputs and returns a ternary outcome while Bob receives one of $d^2$ inputs and returns a binary outcome.}\label{FigSICmarginal}
\end{figure}

Similar to the case of MUBs, in order to make our choice of Bell functional transparent, we phrase it as a game played by Alice and Bob. We imagine that their inputs are supplied by a referee, who promises to provide $x = x_{1} x_{2}$ and $y$ such that either $y=x_{1}$ or $y=x_{2}$. Similar to the previous game Alice can output $a = \perp$ to ensure that no points are won or lost. However, in this game also Bob can ensure that no points are won or lost by outputting $b=\perp$. If neither of them outputs $\perp$, a point is either won or lost. Specifically, when $a = 1$ a point is won if $y = x_{1}$ (and lost otherwise), whereas if $a = 2$ then a point is won if $y = x_{2}$ (and lost otherwise). Let us remark that in this game Bob's only role is to decide whether in a given round points can be won/lost or not. For this game the total number of points (the Bell functional) reads
\begin{equation}\label{sicBell}
\begin{aligned}
	\mathcal{R}_{d}^\text{SIC}\equiv &\sum_{x_{1} < x_{2}} \Big(p(1,1|x, x_{1})-p(1,1|x, x_{2})\\
	&+p(2,1|x, x_{2})-p(2,1|x, x_{1})\Big),
\end{aligned}
\end{equation}
where the sum is taken over all $x\in \text{Pairs}(d^2)$.

Let us now impose additional constraints on the marginal distributions of the outputs. More specifically, we require that
\begin{equation}\label{marg}
\begin{aligned}
	& \forall x: && p\left(a=1|x\right)+p\left(a=2|x\right)=\frac{2}{d},\\
	& \forall y: && p(b=1|y)=\frac{1}{d}.
\end{aligned}
\end{equation}
The intuition behind these constraints is analogous to that discussed for the case of MUBs. Namely, we imagine that Alice and Bob perform measurements on a maximally entangled state of local dimension $d$. Then, we wish to fix the marginals such that the measurements of Alice (Bob) for the outcomes $a\in\{1,2\}$ ($b=1$) remotely prepare Bob's (Alice's) subsystem in a pure state. This corresponds to the marginals $p\left(a=1|x\right)=p\left(a=2|x\right)=p\left(b=1|x\right)=1/d$ which is reflected in the marginal constraints in Eq.~\eqref{marg}. We remark that imposing these constraints simplifies both the intuitive understanding of the game and the derivation of the results below. However, it merely serves as a stepping stone to a more general subsequent treatment in which the constraints \eqref{marg} will be removed.

To write the value of the Bell functional of a quantum realisation, let us introduce two simplifications. The measurement operators of Alice are denoted by $\{A_{x}^a\}$ and as before it is convenient to work with the observables defined as $A_{x}=A_{x}^{1} - A_{x}^{2}$. The measurements of Bob are denoted by $\{B_{y}^{b}\}$, but since they only have two outcomes, all the expressions can be written in terms of a single operator from each input $y$. In our case it is convenient to use the outcome-one operator and for convenience we will skip the superscript, i.e.~we will write $B_{y} \equiv B_{y}^{1}$ for all $y$. Then, the Bell functional evaluated on a specific quantum realisation reads
\begin{equation}
	\Bfunc{R}{SIC}=\sum_{x_{1} < x_{2}} \bracket{\psi}{A_{x}\otimes \left(B_{x_{1}}-B_{x_{2}}\right)}{\psi}.
\end{equation}
Note that the Bell functional, in particular when written in a quantum model, is much reminiscent of the expression $\Bfunc{R}{MUB}$ \eqref{Md} encountered for MUBs, with the key difference that the roles of the inputs and outputs of Bob are swapped. 

Let us consider a quantum strategy in which Alice and Bob share  a maximally entangled state $\ket{\maxent{d}}$. Moreover, Bob's measurements are defined as $B_{y} = \ketbra{\phi_y}{\phi_y}$, where
$\{ \ket{\phi_{y}} \}_{y = 1}^{d^{2}}$ is a set of unit vectors forming a SIC (assuming it exists in dimension $d$), i.e.~$|\braket{\phi_y}{\phi_{y'}}|^2=1/(d+1)$ for all $y\neq y'$. Also, we define Alice's observables as $A_{x}=\sqrt{(d+1)/d}\left(B_{x_{1}}-B_{x_{2}}\right)\tran$, where the pre-factor ensures normalisation. Firstly, since the subsystems of Alice and Bob are maximally mixed, and the outcomes $a\in\{1,2\}$ and $b=1$ each correspond to rank-one projectors, the marginal constraints in Eq.~\eqref{marg} are satisfied. Using the fact that for any linear operator $O$ we have $O \otimes \I \ket{\maxent{d}} = \I \otimes O\tran \ket{\maxent{d}}$, we find that
\begin{multline} \label{strat}
	\Bfunc{R}{SIC}=\\
	\sqrt{\frac{d+1}{d}}\sum_{x_{1} < x_{2}} \bracket{\maxent{d}}{\I\otimes \left(\ketbra{\phi_{x_{1}}}{\phi_{x_{1}}}-\ketbra{\phi_{x_{2}}}{\phi_{x_{2}}}\right)^2}{\maxent{d}}\\
	=\sqrt{\frac{d+1}{d}}\sum_{x_{1} < x_{2}}\left( \frac{2}{d}-\frac{2}{d(d+1)}\right)=d(d-1)\sqrt{d(d+1)}.
\end{multline}

In fact, this strategy relying on a maximally entangled state and a SIC achieves the maximal quantum value of $\Bfunc{R}{SIC}$ under the constraints of Eq.~\eqref{marg}. In SM (section III.A) we prove that under these constraints the tight quantum and no-signaling bounds on $\Bfunc{R}{SIC}$ read
\begin{align}\label{sicTsirelson}
	\mathcal{R}_{d}^\text{SIC} &\maxval{Q} d(d-1)\sqrt{d\left(d+1\right)}\\
	\Bfunc{R}{SIC} &\maxval{NS} d\left(d^2-1\right).
\end{align}

We remark that SICs are not known to exist in all Hilbert space dimensions. However, their existence in all dimensions is strongly conjectured and explicit SICs have been found in all dimensions up to $121$  \cite{Scott}. 

\subsection{Bell inequalities for SICs}
The marginal constraints in Eq.~\eqref{marg} allowed us to prove that the quantum realisation based on SICs achieves the maximal quantum value of $\Bfunc{R}{SIC}$. Our goal now is to remove these constraints to obtain a standard Bell functional. Analogously to the case of MUBs we add marginal terms to the original functional $\Bfunc{R}{SIC}$.

To this end, we introduce penalties for both Alice and Bob. Specifically, if Alice outputs $a\in\{1, 2\}$ they lose $\alpha_d$ points, whereas if Bob outputs $b=1$, they lose $\beta_d$ points. The total number of points in the modified game constitutes our final Bell functional 
\begin{multline}\label{sicBellfin}
	\Bfunc{S}{SIC}\equiv \Bfunc{R}{SIC}-\alpha_d \sum_{x_{1} < x_{2}}\left(p\left(a=1|x\right)+p\left(a=2|x\right)\right)\\
	-\beta_d \sum_{y}p\left(b=1|y\right).
\end{multline}
Hence, our aim is to suitably choose the penalties $\alpha_d$ and $\beta_d$ so that the maximal quantum value of $\Bfunc{S}{SIC}$ is achieved with a strategy that closely mimics the marginal constraints \eqref{marg} and thus maintains the optimality of Bob performing a SIC.

\begin{theorem}[Bell inequalities for SICs]\label{sictheorem}
	The Bell functional $ \Bfunc{S}{SIC}$  in Eq.~\eqref{sicBellfin} with 
	\begin{equation}
	\begin{aligned}
		& \alpha_d=\frac{1-\delta_{d,2}}{2}\sqrt{\frac{d}{d+1}}\\\label{coefs}
		& \beta_d=\frac{d-2}{2}\sqrt{d(d+1)},
	\end{aligned}
	\end{equation}
	obeys the tight local bound
	\begin{equation}\label{SIClocal}
	\Bfunc{S}{SIC} \maxval{LHV}
	\begin{cases}
	4 &\mbox{for} \quad d = 2,\\
	d^{2}(d - 1) - d(d^{2} - d - 1)\sqrt{\frac{d}{d+1}} &\mbox{for} \quad d \geq 3,
	\end{cases}
	\end{equation}
	and the quantum bound
	\begin{equation}\label{qboundsic}
		\Bfunc{S}{SIC} \maxval{Q} \frac{d+2\delta_{d,2}}{2}\sqrt{d\left(d+1\right)}.
	\end{equation}
	Moreover, the quantum bound is tight and can be saturated by sharing a maximally entangled state of local dimension $d$ and choosing Bob's outcome-one projectors to form a SIC.
\end{theorem}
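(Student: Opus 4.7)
The plan is to split the proof into three parts: achievability, the quantum upper bound, and the local bound, with the case $d = 2$ treated separately because both penalties vanish there. For achievability I would evaluate $\Bfunc{S}{SIC}$ directly on the strategy introduced just before the theorem: the maximally entangled state $\ket{\maxent{d}}$, Bob's $B_y = \ketbra{\phi_y}{\phi_y}$ with $\{\ket{\phi_y}\}$ a SIC, and Alice's $A_x = \sqrt{(d+1)/d}\,(B_{x_1}-B_{x_2})\tran$. From the previous subsection this already gives $\Bfunc{R}{SIC}=d(d-1)\sqrt{d(d+1)}$, and the marginals \eqref{marg} are met exactly, so the penalty contribution equals $\alpha_d\, d(d^2-1)+\beta_d\, d$. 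Substituting \eqref{coefs} and simplifying yields the claimed value $\tfrac{d+2\delta_{d,2}}{2}\sqrt{d(d+1)}$.

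For the quantum upper bound, I would first pass to projective measurements via Naimark dilation, so that $A_x^2 = A_x^1 + A_x^2$ and $B_y^2 = B_y$. Applying Cauchy--Schwarz to the pairing of the vectors $A_x \otimes \I\ket{\psi}$ and $\I \otimes (B_{x_1}-B_{x_2})\ket{\psi}$ summed over $x_1<x_2$ yields $\Bfunc{R}{SIC} \le \sqrt{T_A T_B}$ with
\begin{equation*}
T_A = \sum_{x_1<x_2}\bracket{\psi}{(A_x^1+A_x^2)\otimes\I}{\psi}, \quad T_B = \sum_{x_1<x_2}\bracket{\psi}{\I\otimes (B_{x_1}-B_{x_2})^2}{\psi}.
\end{equation*}
For $d \ge 3$ I would apply AM--GM in the form $\sqrt{T_A T_B}\le \alpha_d T_A + T_B/(4\alpha_d)$, so that the $T_A$ term cancels Alice's penalty exactly. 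Expanding $(B_{x_1}-B_{x_2})^2$ and using $B_y^2=B_y$ reduces $T_B$ to $d^2\langle S\rangle - \langle S^2\rangle$ where $S=\sum_y B_y$, leading to
\begin{equation*}
\Bfunc{S}{SIC} \le \Bigl(\tfrac{d^2}{4\alpha_d}-\beta_d\Bigr)\langle S\rangle - \tfrac{1}{4\alpha_d}\langle S^2\rangle.
\end{equation*}
The point of \eqref{coefs} is that $4\alpha_d\beta_d = d(d-2)$, so the concave quadratic $\lambda \mapsto (\tfrac{d^2}{4\alpha_d}-\beta_d)\lambda - \tfrac{\lambda^2}{4\alpha_d}$ attains its maximum at $\lambda^\ast = d$ with value $\tfrac{d}{2}\sqrt{d(d+1)}$; this matches precisely the SIC sum rule $\sum_y \ketbra{\phi_y}{\phi_y} = d\I$, and the spectral theorem then gives the bound. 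For $d=2$ the penalties vanish, but the same Cauchy--Schwarz step combined with the trivial bound $A_x^2 \le \I$ (hence $T_A \le \binom{4}{2}=6$) and the spectral bound $T_B \le d^4/4=4$ yields $\Bfunc{R}{SIC}\le \sqrt{6\cdot 4} = 2\sqrt{6}$ directly.

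For the local bound I would restrict to deterministic strategies and parameterise Bob by the set $S=\{y:b(y)=1\}$. Since $\alpha_d < 1$ for all $d \ge 2$, Alice's best reply to input $x=x_1x_2$ is to output $\perp$ unless exactly one of $x_1,x_2$ lies in $S$ (each such pair contributing $1-\alpha_d$), so the objective reduces to $L(k) = k(d^2-k)(1-\alpha_d) - \beta_d k$ with $k=|S|$. Substituting \eqref{coefs} and comparing integer values shows the maximum occurs at $k=d$ for $d\ge 3$ (and at $k=2$ for $d=2$), evaluating to the quoted local bound; tightness follows by exhibiting such a deterministic strategy. The main obstacle will be arranging the Cauchy--Schwarz and AM--GM steps so that the specific $(\alpha_d,\beta_d)$ in \eqref{coefs} simultaneously cancels the nuisance $T_A$ contribution against Alice's penalty and shifts the induced quadratic's peak to $\lambda^\ast = d$; this double role is what pins down the coefficients uniquely, and once it is in place the remaining algebra and combinatorics are routine.
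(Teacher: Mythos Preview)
Your proposal is correct. The achievability computation and the local-bound argument coincide with the paper's essentially line for line (the paper writes Bob's penalty as $|T_1^{(N)}|\tfrac{\beta_d}{d^2-1}+|T_2^{(N)}|\tfrac{2\beta_d}{d^2-1}$, which simplifies to your $\beta_d k$, and then carries out the same discrete optimisation to arrive at $k=d$).

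For the quantum upper bound your route is genuinely different and more streamlined. The paper applies two successive Cauchy--Schwarz inequalities, then invokes a separate moment-matrix lemma (an SDP argument bounding $\sum_{x_1<x_2}\langle\{B_{x_1},B_{x_2}\}\rangle$ from below in terms of $r=(d^2-1)\langle S\rangle$), and finally maximises the resulting two-variable function $F(r,s)$ by calculus. Your approach collapses this: the AM--GM step with parameter $2\alpha_d$ cancels $T_A$ against Alice's penalty in one stroke, and the identity $T_B=\langle d^2 S - S^2\rangle$ reduces what remains to a single operator inequality $(\tfrac{d^2}{4\alpha_d}-\beta_d)S-\tfrac{1}{4\alpha_d}S^2 \le g(d)\,\I$, handled by the spectral theorem. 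You thereby avoid both the SDP lemma and the two-variable optimisation. What the paper's longer route buys is that the separate optimality conditions $r_{\mathrm{opt}}=s_{\mathrm{opt}}=d(d^2-1)$ are recorded explicitly, which feeds directly into the subsequent device-independent certification (Theorem~\ref{thmDIsic}); for the purposes of Theorem~\ref{sictheorem} itself, your argument suffices and is shorter.
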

\begin{proof}
The proof is presented in SM (section III.B ). In order to obtain the quantum bound in Eq.~\eqref{qboundsic}, the key ingredients are the Cauchy--Schwarz inequality and semidefinite relaxations of polynomial optimisation problems. To derive the local bound in Eq.~\eqref{SIClocal}, the key observation is that the symmetries of the Bell functional allow us to significantly simplify the problem. 

The fact that the quantum bound is saturated by a maximally entangled state and Bob performing a SIC can be seen immediately from the previous discussion that led to Eq.~\eqref{strat}. With that strategy, we find $\Bfunc{R}{SIC}=d(d-1)\sqrt{d(d+1)}$. Since it also respects $p(a=1|x)+p(a=2|x)=2/d$ $\forall x$, as well as $p(b=1|y)=1/d$ $\forall y$, a direct insertion into Eq.~\eqref{sicBellfin} saturates the bound in Eq.~\eqref{qboundsic}.
\end{proof}

Note that in the limit of $d \to \infty$ the local bound  grows linearly in $d$ while the quantum bound grows quadratically in $d$.

We remark that for the special case of $d=2$, no penalties are needed to maintain the optimality of SICs (which is why the delta function appears in Eq.~\eqref{coefs}). The derived Bell inequality for a qubit SIC (which corresponds to a tetrahedron configuration on the Bloch sphere) can be compared to the so-called elegant Bell inequality \cite{ElegantBell} whose maximal violation is also achieved using the tetrahedron configuration. While we require six settings of Alice and four settings of Bob, the elegant Bell inequality requires only four settings of Alice and three settings of Bob. However, the additional complexity in our setup carries an advantage when considering the critical visibility of the shared state; i.e.~the smallest value of $v$ in Eq.~\eqref{isotropic} (defining an isotropic state) for which the Bell inequality is violated. The critical visibility for violating the elegant Bell inequality is $86.6\%$, whereas for our Bell inequality it is lowered to $81.6\%$. We remark that on the Bloch sphere, the anti-podal points corresponding to the four measurements of Alice and the six measurements of Bob form a cube and a cuboctahedron respectively, which constitutes an instance of the type of Bell inequalities proposed in Ref.~\cite{Plato}.

\subsection{Device-independent certification}
Theorem~\ref{sictheorem} shows that for any dimension $d \geq 2$ we can construct a Bell inequality which is maximally violated by a SIC in that dimension (provided a SIC exists). Let us now consider the converse question, namely that of device-independent certification. In analogy with the case of MUBs (Eq.~\eqref{sandwich}), we find a simple description of Bob's measurements.

\begin{theorem}[Device-independent certification]\label{thmDIsic}
The maximal quantum value of the Bell functional $\Bfunc{S}{SIC}$, provided the marginal state of Bob is full-rank, implies that his measurement operators $\{ B_{y} \}_{y = 1}^{d^{2}}$ are projective and satisfy 
\begin{equation}\label{sicsum}
\sum_{y} B_{y} = d \, \I
\end{equation}
and
\begin{equation}\label{sicsandwich}
B_{y}=(d+1) B_y B_{y'} B_y
\end{equation}
for all $y \neq y'$.
\end{theorem}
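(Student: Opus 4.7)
The proof follows the blueprint of Theorem~\ref{thmDIMUB}, adapted to the SIC scenario. First I extract the saturation conditions of the Cauchy--Schwarz estimate used in the proof of Theorem~\ref{sictheorem}: the penalty coefficients $\alpha_d, \beta_d$ are chosen so that the quantum maximum of $\Bfunc{S}{SIC}$ is attained exactly when the marginals satisfy $p(a=1|x)+p(a=2|x)=2/d$ and $p(b=1|y)=1/d$. At the maximum, each pointwise Cauchy--Schwarz inequality is tight and yields the vector identity
\begin{equation*}
A_x \otimes \I \ket{\psi} = \sqrt{\frac{d+1}{d}}\, \I \otimes (B_{x_1} - B_{x_2}) \ket{\psi}
\end{equation*}
for every $x_1 < x_2$, while saturation of the operator bound $B_y^2 \leq B_y$ used to control $(B_{x_1}-B_{x_2})^2$ gives projectivity $B_y^2 = B_y$, and the saturated marginal $\Tr(\rho_B B_y) = 1/d$ combined with $\rho_B$ being full rank yields the completeness relation $\sum_y B_y = d\I$.

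Next, I take Alice's measurements to be projective without loss of generality, so that $A_x$ has spectrum in $\{-1, 0, +1\}$ and $A_x^3 = A_x$. Iterating the vector identity three times and tracing out Alice against the full-rank $\rho_B$ produces $(B_{x_1} - B_{x_2})^3 = \frac{d}{d+1}(B_{x_1} - B_{x_2})$, and expanding the cube using projectivity gives the symmetrised sandwich identity
\begin{equation*}
B_{x_1} B_{x_2} B_{x_1} - B_{x_2} B_{x_1} B_{x_2} = \frac{1}{d+1}(B_{x_1} - B_{x_2}).
\end{equation*}

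The final step is to desymmetrise this into the individual relation $B_y = (d+1) B_y B_{y'} B_y$, which I expect to be the main obstacle. The cubing alone controls only the antisymmetric combination, and the SIC completeness $\sum_y B_y = d\I$ only fixes the average $\sum_{y' \neq y} B_y B_{y'} B_y = (d-1) B_y$ rather than each sandwich separately. The plan is to combine the cubing identity with the spectral restriction $\spec(B_{x_1}-B_{x_2}) \subseteq \{0, \pm\sqrt{d/(d+1)}\}$ that it implies, and with the pairwise overlap $\mathrm{Re}\,\Tr(\rho_B B_y B_{y'}) = 1/(d(d+1))$ which follows from the Alice marginal $\bracket{\psi}{A_x^2 \otimes \I}{\psi} = 2/d$ via the vector identity above. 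A pairwise Jordan-type decomposition of each $(B_y, B_{y'})$ then reduces the analysis to two-dimensional rank-one blocks of overlap $1/(d+1)$---on which the sandwich holds by direct computation---together with degenerate one-dimensional coincidence blocks; applying the completeness relation consistently across all pairs rules out the coincidence blocks where the sandwich would fail, completing the argument.
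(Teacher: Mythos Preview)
Your overall strategy matches the paper's, but there are two genuine gaps.

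First, your derivation of $\sum_y B_y = d\I$ does not work as stated. Saturation of the penalty terms only pins down the \emph{aggregate} quantities $r=s=d(d^2-1)$, hence $\sum_y \Tr(\rho_B B_y)=d$; it does not force the individual marginals $\Tr(\rho_B B_y)=1/d$, and knowing only $\Tr(\rho_B K)=d$ for $K=\sum_y B_y$ together with $\rho_B$ full-rank is far from enough to conclude $K=d\I$. The paper closes this by also extracting the second moment: saturation of the lower bound on $\sum_{x_1<x_2}\bracket{\psi}{\{B_{x_1},B_{x_2}\}}{\psi}$ (their Proposition~\ref{prop:SDP}) gives $\Tr(\rho_B K^2)=d^2$, and then Cauchy--Schwarz applied to $K\rho_B^{1/2}$ and $\rho_B^{1/2}$ yields $K\rho_B^{1/2}=d\,\rho_B^{1/2}$, hence $K=d\I$ by the full-rank assumption.

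Second, the desymmetrisation step is where your sketch diverges from what actually works. You propose to use the pairwise overlap $\mathrm{Re}\,\Tr(\rho_B B_y B_{y'})=\frac{1}{d(d+1)}$ deduced from the individual Alice marginal $\bracket{\psi}{A_x^2}{\psi}=2/d$---but as above, that individual marginal is not established; only the sum over $x$ is. Likewise, ``applying completeness consistently across all pairs'' does not by itself exclude the one-dimensional coincidence blocks in the Jordan decomposition of $(B_y,B_{y'})$: completeness only controls sums over $y$, not any particular pair. The paper's route is to first assume uniformity of the individual expectation values and anticommutators, and under that assumption use an anticommutation trick (pairing $A_x\otimes X$ against $\I\otimes Z$ on the two-dimensional Jordan blocks) to force the weight of the coincidence block $\cH_{B_3}$ against $\rho_B$ to vanish; full-rank then kills $\cH_{B_3}$. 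The uniformity assumption is removed afterwards by a symmetrisation over all permutations of $[d^2]$, which produces a new realisation with uniform expectation values whose spectra are unions of the original ones. This last symmetrisation is the key device you are missing.
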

A complete proof, which is similar in spirit to the proof of Theorem~\ref{thmDIMUB}, can be found in SM (section III.C).
	
For the special case of $d=2$, the conclusion can be made even more accurate: the maximal quantum violation of $\mathcal{S}_2^\text{SIC}$ implies that Bob's outcome-one projectors are rank-one projectors acting on a qubit whose Bloch vectors form a regular tetrahedron (up to the three standard equivalences used in self-testing).

Similar to the case of MUBs, we face the key question of interpreting the condition in Eq.~\eqref{sicsandwich} and its relation to SICs. Again in analogy with the case of MUBs, we note that the concept of a SIC references the dimension of the Hilbert space, which should not appear explicitly in a device-independent scenario. Hence we consider an operational approach to SICs, which must rely on observable quantities (i.e.~probabilities). This leads us to the following natural definition of a set of projectors being \textit{operationally symmetric informationally complete} (OP-SIC). 

\begin{definition}[Operational SIC]
We say that a set of projectors $\{B_a\}_{a=1}^{n^2}$ is operationally symmetric informationally complete (OP-SIC) if 
\begin{equation}
\sum_a B_a=n \, \I
\end{equation} 
and 
\begin{equation}\label{OPSIC}
\bracket{\psi}{B_a}{\psi}=1\Rightarrow \bracket{\psi}{B_b}{\psi}=\frac{1}{n+1},
\end{equation}
for all $a\neq b$.
\end{definition}
This definition trivially encompasses SICs as special instances of OP-SICs. More interestingly, an argument analogous to the proof of Theorem~\ref{thmOP} shows that this definition is in fact equivalent to the relations given in Eqs.~\eqref{sicsum} and \eqref{sicsandwich}.
Hence, in analogy with the case of MUBs, the property of Bob's measurements certified by the maximal violation of our Bell inequality is precisely the notion of OP-SICs.
\subsection{Adding a SIC-POVM}
The Bell inequalities proposed above (Bell functional $\Bfunc{S}{SIC}$) are tailored to sets of rank-one projectors forming a SIC. However, it is also interesting to consider a closely related entity, namely a SIC-POVM, which is obtained simply by normalising these projectors, so that they can be collectively interpreted as arising from a single measurement. That is, a SIC-POVM on $\bC^d$ is a measurement $\{E_a\}_{a=1}^{d^2}$ in which every measurement operator can be written as $E_a=\frac{1}{d}\ketbra{\phi_a}{\phi_a}$, where the set of rank-one projectors $\{\ketbra{\phi_a}{\phi_a}\}_a$ forms a SIC. Due to the simple relation between SICs and SIC-POVMs, we can extend the Bell inequalities for SICs proposed above such that they are optimally implemented with both a SIC (as before) and a SIC-POVM.

\begin{figure}
	\centering
	\includegraphics[width=\columnwidth]{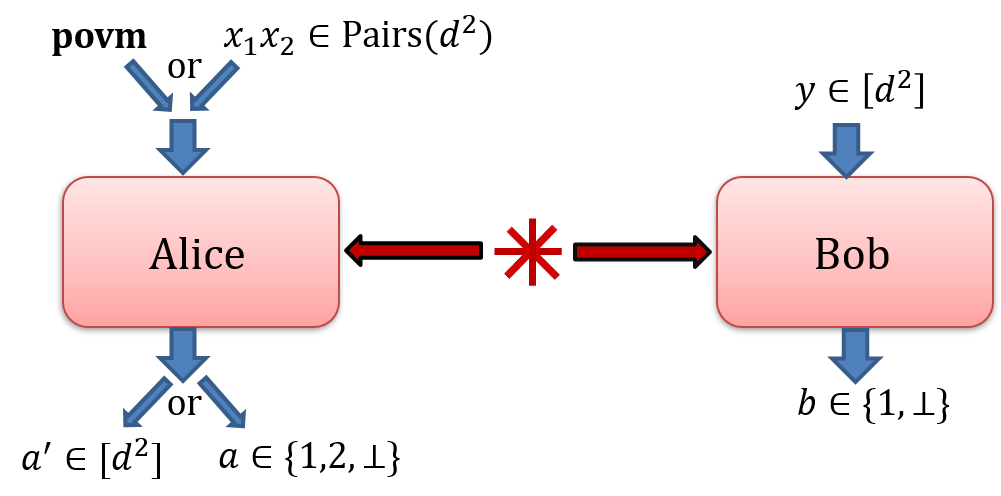}
	\caption{Bell scenario for SICs and SIC-POVMs of dimension $d$. This scenario modifies the original Bell scenario for SICs (see Figure~\ref{FigSICmarginal}) by supplying Alice with an extra setting labelled by $\mathbf{povm}$ which has $d^2$ possible outcomes.}\label{SICPOVMscenario}
\end{figure}

It is clear that in order to make SIC-POVMs relevant to the Bell experiment, it must involve at least one setting which corresponds to a $d^2$-outcome measurement. For the Bell scenario previously considered for SICs (see Figure~\ref{FigSICmarginal}), no such measurement is present. Therefore, we supplement the original Bell scenario by introducing a single additional measurement setting of Alice, labelled by $\mathbf{povm}$, which has $d^2$ outcomes labelled by $a'\in[d^2]$. The modified Bell scenario is illustrated in Figure~\ref{SICPOVMscenario}. We construct the Bell functional $\Bfunc{T}{SIC}$ for this scenario by modifying the previously considered Bell functional  $\Bfunc{S}{SIC}$:
\begin{equation}\label{Sd'}
\mathcal{T}_d^\text{SIC}=\Bfunc{S}{SIC}-\sum_{y=1}^{d^2}p(a'=y,b=\perp|\mathbf{povm},y).
\end{equation}
Hence, whenever Bob outputs ``$\perp$`` and the outcome associated to the setting $\mathbf{povm}$ coincides with the input of Bob, a point is lost.

Evidently, the largest quantum value of $\mathcal{T}_d^\text{SIC}$ is no greater than the largest quantum value of $\Bfunc{S}{SIC}$. In order for the former to equal the latter, we require that: i) $\Bfunc{S}{SIC}$ reaches its maximal quantum value (which is given in Eq.~\eqref{qboundsic}) and ii) that $p(a'=y,b=\perp|\mathbf{povm},y)=0$ $\forall y$. We have already seen that by sharing a maximally entangled state and Bob's outcome-one projectors $\{B_y\}_y$ forming a SIC, the condition i) can be satisfied. By normalisation, we have that Bob's outcome-$\perp$ projectors are $B_y^{\perp}=\I-B_y$.
Again noting that for any linear operator $O$ we have $O \otimes \I \ket{\maxent{d}} = \I \otimes O\tran \ket{\maxent{d}}$, observe that if Bob applies $B_y^{\perp}$, then Alice's local state is orthogonal to $B_y$.  Hence, if Alice chooses her POVM $\{E_{a'}\}$, corresponding to the setting $\mathbf{povm}$, as the SIC-POVM defined by $E_{a'}=\frac{1}{d}B_{a'}^\text{T}$, the probability of finding $a'=y$ vanishes. This satisfies condition ii). Hence, we conclude that in a general quantum model
\begin{equation}
\mathcal{T}_d^\text{SIC} \maxval{Q} \frac{d+2\delta_{d,2}}{2}\sqrt{d\left(d+1\right)},
\end{equation}
and that the bound can be saturated by supplementing the previous optimal realisation with a SIC-POVM on Alice's side.

\section{Application: Device-independent quantum random number generation}
The fact that the Bell functionals $\Bfunc{S}{SIC}$ and $\mathcal{T}_d^\text{SIC}$ achieve their maximal quantum values with a SIC and a SIC-POVM respectively, opens up the possibility for device-independent quantum information protocols for tasks in which SICs and SIC-POVMs are desirable. We focus on one such application, namely that of device-independent quantum random number generation \cite{Pironio2}. This is the task of certifying that the data generated by a party cannot be predicted by a malicious eavesdropper. In the device-independent setting, both the amount of randomness and its security is derived from the violation of a Bell inequality.

Non-projective measurements, such as SIC-POVMs, are useful for this task. The reason is that a Bell experiment implemented with entangled systems of local dimension $d$ and standard projective measurements cannot have more than $d$ outcomes. Consequently, one cannot hope to certify more than $\log d$ bits of local randomness. However, Bell experiment relying on $d$-dimensional entanglement implemented with (extremal) non-projective measurements can have up to $d^2$ outcomes \cite{Ariano}. This opens  the possibility of generating up to $2\log d$ bits of local randomness without increasing the dimension of the shared entangled state. Notably, for the case of $d=2$, such optimal quantum random number generation has been shown using a qubit SIC-POVM \cite{Acin}.

Here, we employ our Bell inequalities for SIC-POVMs to significantly outperform standard protocols relying on projective measurements on $d$-dimensional entangled states. To this end, we briefly summarise the scenario for randomness generation. Alice and Bob perform many rounds of the Bell experiment illustrated in Figure~\ref{SICPOVMscenario}. Alice will attempt to generate local randomness from the outcomes of her setting labelled by $\mathbf{povm}$. In most rounds of the Bell experiment, Alice performs $\mathbf{povm}$ and records the outcome $a'$. In a smaller number of rounds, she randomly chooses her measurement setting and the data is used towards estimating the value of the Bell functional $\mathcal{T}_d^\text{SIC}$ defined in Eq.~\eqref{Sd'}. A malicious eavesdropper may attempt to guess Alice's relevant outcome $a'$. To this end, the eavesdropper may entangle her system with that of Alice and Bob, and perform a well-chosen POVM $\{E_c\}_c$ to enhance her guess. In analogy to Eq.~\eqref{guess}, the eavesdropper's guessing probability reads
\begin{equation}\label{guessing}
P_g^\beta \equiv \sup \bigg\{ \sum_{c=1}^{d^2} \bracket{\psi_\text{ABE}}{A_{\mathbf{povm}}^c \otimes \I \otimes E_c}{\psi_\text{ABE}} \bigg\},
\end{equation}
where $\{ E_{c} \}_{c = 1}^{d^{2}}$ is the measurement employed by the eavesdropper to produce her guess, the expression inside the curly braces is the probability that her outcome is the same as Alice's outcome for the setting $\mathbf{povm}$ for a particular realisation and the supremum is taken over all quantum realisations (the tripartite state and measurements of all three parties) compatible with the observed Bell inequality violation $\beta=\mathcal{T}_d^\text{SIC}$.

We quantify the randomness generated by Alice using the conditional min-entropy $H_\text{min}(A_\mathbf{povm}|E)=-\log\left(P_g^\beta\right)$. To obtain a device-independent lower bound on the randomness, we must evaluate an upper bound on $P_g^\beta$ for a given observed value of the Bell functional. We saw in Section~\ref{sec:QKD} that if the eavesdropper is only trying to guess the outcome of a single measurement setting, we can without loss of generality assume that they are only classically correlated with the systems of Alice and Bob.
As before, we restrict ourselves to the asymptotic limit of many rounds, in which fluctuations due to finite statistics can be neglected.

\begin{figure}
	\centering
	\includegraphics[width=\columnwidth]{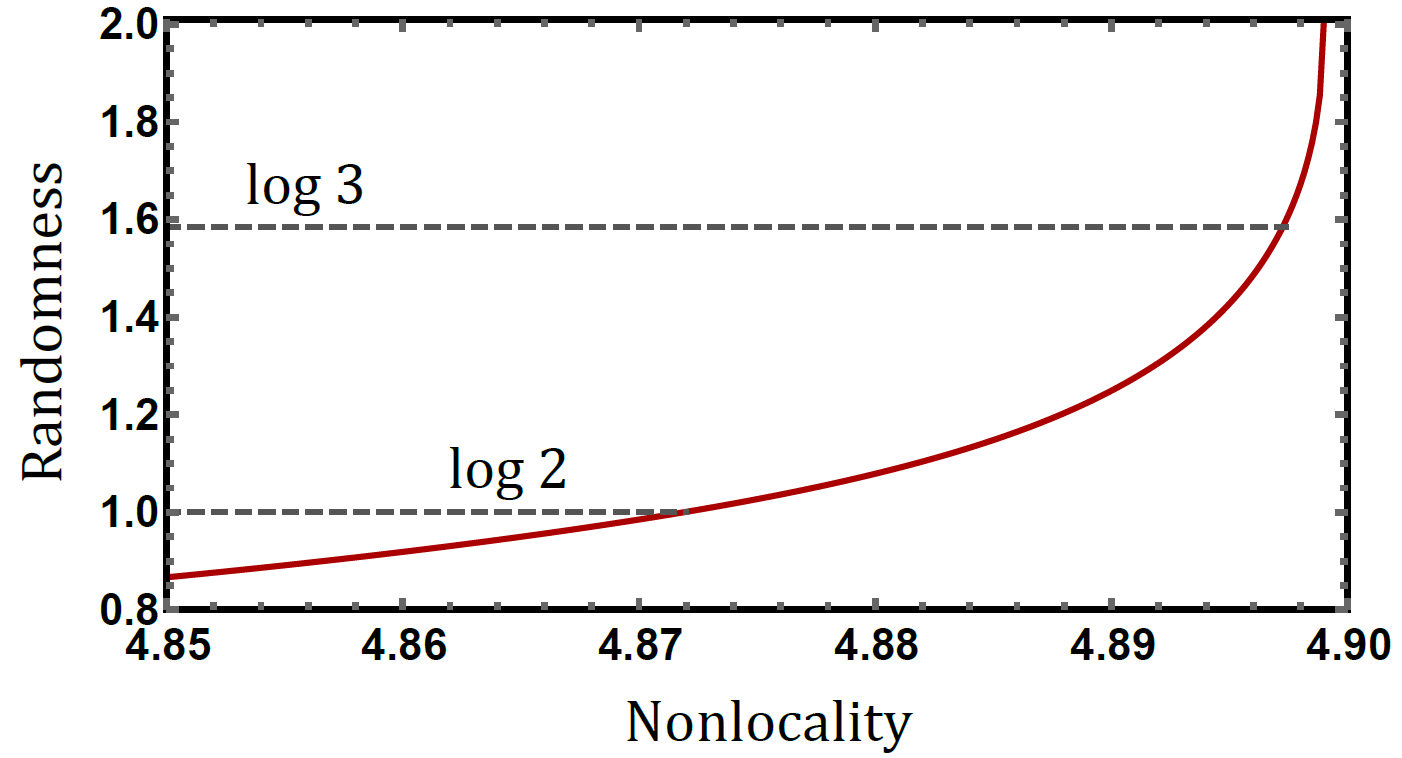}
	\caption{Lower bound on the amount of device-independent randomness versus the value of $\mathcal{T}_2^\text{SIC}$.}\label{FigRand}
\end{figure}

In order to bound the randomness for some given value of $\mathcal{T}_d^\text{SIC}$, we use the hierarchy of quantum correlations~\cite{NPA}. We restrict ourselves to the cases of $d=2$ and $d=3$. For the case of $d=2$, we construct a moment matrix with the operators $\left\{\left(\I, A_x\right)\otimes\left(\I, B_y\right)\otimes\left(\I, E\right)\right\} \cup \left\{A_\mathbf{povm}\otimes\left(\I, B_y,E\right)\right\}$, neglecting the $\perp$ outcome. The matrix is of size $361\times 361$ with $10116$ variables. Again, we can make use of symmetry to simplify the semidefinite program. In this case, the following permutation leaves the problem invariant: $x_1 \to \pi(x_1)$, $x_2 \to \pi(x_2)$, $a \to f_\pi(a,x_1,x_2)$, $a'\to\pi(a')$, $y\to\pi(y)$, $c\to\pi(c)$, where
\begin{equation}
f_\pi(a,x_1,x_2)=
\begin{cases}
a & \pi(x_1)<\pi(x_2)\\
2 & \pi(x_1)\geq\pi(x_2) \text{ and } a=1\\
1 & \pi(x_1)\geq\pi(x_2) \text{ and } a=2\\
\perp & \pi(x_1)\geq\pi(x_2) \text{ and } a=\perp\\
\end{cases}
\end{equation}
and $\pi\in S_4$. Using this symmetry reduces the number of free variables to $477$. The trade-off between the amount of certified randomness and the nonlocality is illustrated in Figure~\ref{FigRand}. We find that for sufficiently large values of $T_2^\text{SIC}$ (roughly $\mathcal{T}_2^\text{SIC}\geq 4.8718$), we outperform the one-bit limitation associated to projective measurements on entangled qubits. Notably, for even larger values of $T_2^\text{SIC}$, we also outperform the restriction of $\log 3$ bits associated to projective measurements on entangled systems of local dimension three. For the optimal value of $T_2^\text{SIC}$ we find $H_\text{min}(A_\mathbf{povm}|E)\gtrsim 1.999$, which is compatible up to numerical precision with the largest possible amount of randomness obtainable from qubit systems under general measurements, namely  two bits\footnote{It is impossible to obtain more than two bits of device-independent local randomness from qubits in a Bell scenario since every qubit measurement with more than four outcomes can be stochastically simulated with measurements of at most four outcomes \cite{Ariano}. Naturally, a four-outcome measurement cannot generate more than two bits of randomness.}.

For the case of $d=3$ we bound the guessing probability following the method of Ref~\cite{Pironio2}. This has the advantage of requiring only a bipartite, and hence smaller, moment matrix than the tripartite formulation. However, the amount of symmetry leaving the problem invariant is reduced, because the objective function only involves one outcome. Concretely, we construct a moment matrix of size $820\times 820$ with $263549$ variables. We then write the guessing probability as $P(a'=1|\mathbf{povm})$ and identify the following group of permutations leaving the problem invariant: $x_1 \to \pi(x_1)$, $x_2 \to \pi(x_2)$, $a \to f_\pi(a,x_1,x_2)$, $a'\to\pi(a')$, $y\to\pi(y)$, where $\pi\in S_9$ leaves element 1 invariant and permutes elements $2,\ldots,9$ in all possible ways. Taking this symmetry into account reduces the number of free variables to $460$. In order to further simplify the problem we make use of RepLAB, a recently developed tool which decomposes representations of finite groups into irreducible representations~\cite{replabWebsite, replab}. This allows us to write the moment matrix in a preferred basis in which it is block diagonal. The semidefinite constraint can then be imposed on each block independently, with the largest block of size $28\times28$ instead of $820\times 820$. Solving one semidefinite program with SeDuMi~\cite{sedumi} then takes $0.7$s with $<0.1$GB of memory instead of $162$s/$0.2$GB without block-diagonalisation, and fails due to lack of memory without any symmetrisation ($>400$GB required).

Using entangled states of dimension three and corresponding SIC-POVMs, one can attain the full range of values for $\mathcal{T}_3^\text{SIC}$. Importantly, the guessing probability is independent of the outcome guessed by the eavesdropper, and we can verify that the bound we obtain is convex, hence guaranteeing that no mixture of strategy by the eavesdropper must be considered~\cite{Pironio2}. The randomness is then given in Figure~\ref{FigRand3}, which indicates that by increasing the value of  $\mathcal{T}_3^\text{SIC}$, we can obtain more randomness than the best possible schemes relying on standard projective measurements and entangled systems of dimensions $3,4,5,6,7$. Especially, in the case of $\mathcal{T}_3^\text{SIC}$ being maximal, we find that $H_\text{min}(A_\mathbf{povm}|E)\approx 3.03$ bits. This is larger than what can be obtained by performing projective measurements on eight dimensional systems (since $\log 8=3$ bits). It is, however, worth noting that this last value is obtained at the boundary of the set of quantum correlations where the precision of the solver is significantly reduced\footnote{In particular, the DIMACS errors at this point are of the order of $10^{-4}$.}. It is not straightforward to estimate the extent to which this reduced precision may influence the guessing probability, so it would be interesting to reproduce this computation with a more precise solver such as SDPA~\cite{sdpa}.

\begin{figure}[t!]
	\centering
	\includegraphics[width=\columnwidth]{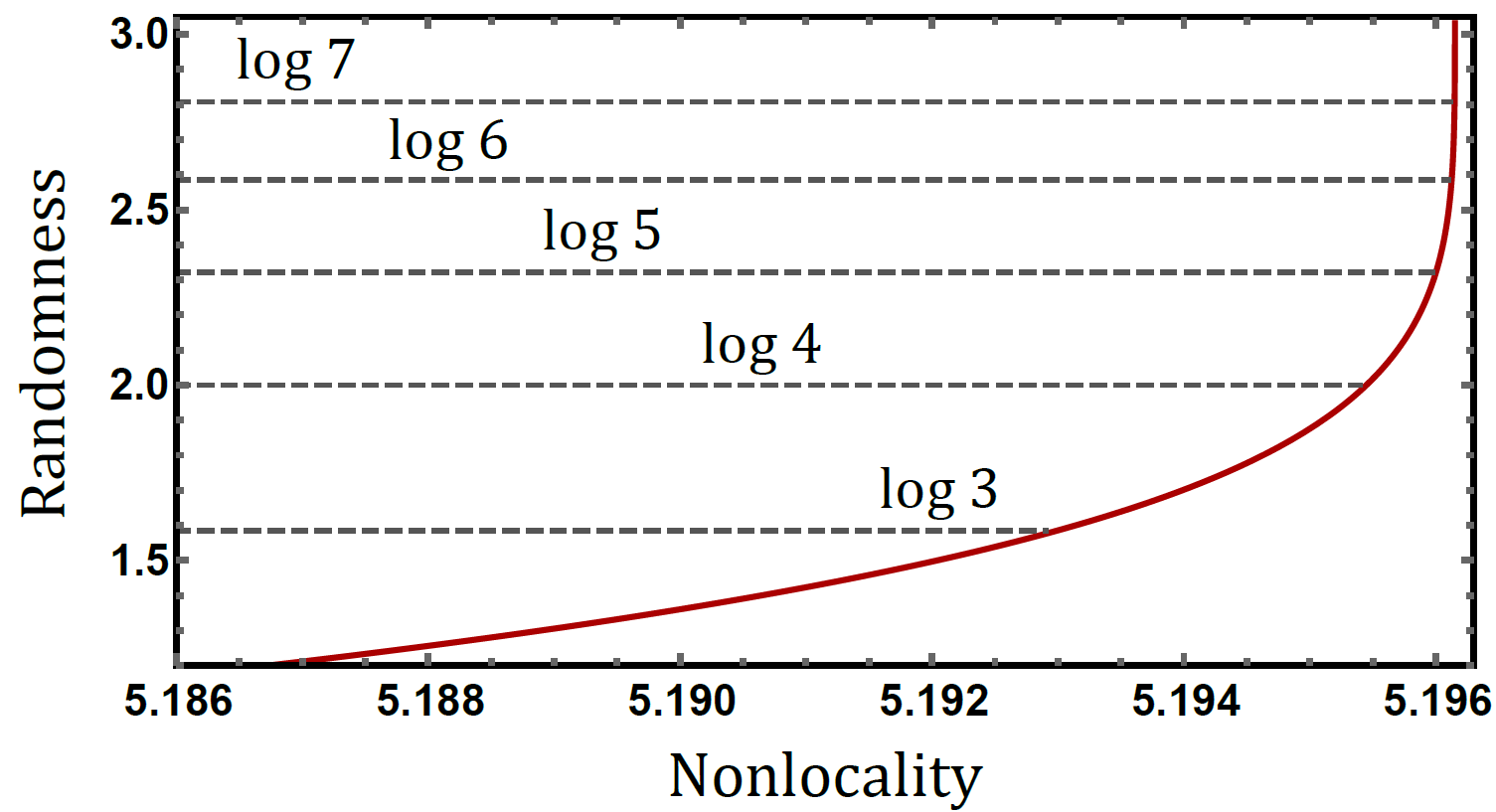}
	\caption{Lower bound on the amount of device-independent randomness versus the value of $\mathcal{T}_3^\text{SIC}$.}\label{FigRand3}
\end{figure}

%

\section{Conclusions}
MUBs and SICs are conceptually elegant, fundamentally important and practically useful features of quantum theory. We investigated their role in quantum nonlocality. For both MUBs and SICs (of any Hilbert space dimension) we presented families of Bell inequalities for which they produce the maximal quantum violations. Moreover, we showed that these maximal quantum violations certify natural operational notions of mutual unbiasedness and symmetric informational completeness. Then, we considered applications of both families of Bell inequalities in  practically relevant tasks. The Bell inequalities for MUBs turn out to be useful for the task of device-independent quantum key distribution and give the optimal key rate for measurements with $d$ outcomes. Moreover, for the case of qutrit systems we investigated the noise robustness of the protocol. For the Bell inequalities for SICs, we considered device-independent random number generation for qubits and qutrits based on SIC-POVMs. We showed (up to numerical precision) optimal randomness generation for qubit systems. For qutrit systems, we showed that more randomness can be generated than in any scheme using standard projective measurements and entanglement of up to dimension seven. These results were obtained using the RepLAB package, which helped to significantly reduce the complexity of the corresponding semidefinite programs by taking advantage of their symmetry.

This work opens many new research directions, so let us mention just a few of them. We showed that a maximal quantum violation of the Bell inequality for MUBs self-tests a maximally entangled state of local dimension $d$. In the case of the Bell inequality for SICs we have managed to certify the measurements of Bob, but we do not have a self-testing result for the state. If a self-test of the state is possible, what are the implications for the device-independent certification of the SIC-POVM setting? This may prove helpful towards solving another interesting question, namely that of proving optimal local randomness generation (i.e.~$2\log d$ bits) for any $d$ based on the Bell inequality for SIC-POVMs. Another avenue of exploration regards the concept of mutually unbiased measurements (MUMs). In this work, we have shown some of their basic properties with regard to MUBs as well as examples of how they are relevant in quantum information theory. However, a more systematic exploration of MUMs would be desirable. Similarly, a general exploration of operational SICs (OP-SICs) in quantum information theory, as well as their relation to SICs, would be of similar interest. Finally, we note that our noise-robust results for quantum key distribution and quantum random number generation may be relevant for experimental implementations.

\begin{acknowledgements}
We would like to thank Thais de Lima Silva and Nicolas Gisin for fruitful discussions. This work was supported by the Swiss National Science Foundation (Starting grant DIAQ, NCCR-QSIT). The project ``Robust certification of quantum devices'' is carried out within the HOMING programme of the Foundation for Polish Science co-financed by the European Union under the European Regional Development Fund. MF acknowledges support from the Polish NCN grant Sonata~UMO-2014/14/E/ST2/00020.
\end{acknowledgements}

\appendix
\onecolumngrid

\section{Bell inequalities for mutually unbiased bases}

In this appendix we fill in some details on the Bell inequalities for MUBs. We start by deriving the local and quantum bounds and proving the device-independent certification result stated in the main text. Then, we proceed to show that no stronger characterisation of Bob's measurement can be obtained from the maximal violation of our Bell inequality and, moreover, that the maximal violation is achieved by a single probability point.
\subsection{Proof of the local and quantum values (Theorem~\ref{thmBellForMUBs})}
\label{AppTheorem1}

We begin by proving the quantum bound of the Bell functional for MUBs ($\mathcal{S}_d^\text{MUB}$). We denote Alice's POVMs by $\{A_{x}^a\}$, Bob's POVMs by $\{P_b\}$ (for $y=1$)  and $\{Q_b\}$ (for $y=2$) and the shared state by $\ket{\psi}$ (without loss of generality we assume that the measurements are projective and the shared state is pure). For the moment, we focus on the first term contributing the the Bell functional, namely $\Bfunc{R}{MUB}$ defined in Eq.~\eqref{Md}. In a quantum model, we have
\begin{equation}\label{qMUB}
	\Bfunc{R}{MUB} =\sum_{x} \bracket{\psi}{A_x\otimes \left(P_{x_1}-Q_{x_2}\right)}{\psi},
\end{equation}
where the summation goes over $x = x_{1} x_{2} \in [d]^{2}$ and the observable $A_{x}$ is defined as $A_x\equiv A_x^1 - A_x^2$. In what follows, we will omit the tensor notation when Alice's or Bob's action is the identity. 

Applying the Cauchy--Schwarz inequality to every term in the summand of Eq.~\eqref{qMUB} gives
\begin{equation}\label{MUBcont}
	\Bfunc{R}{MUB}\leq \sum_x \sqrt{ \bracket{\psi}{A_x^1+A_x^2}{\psi}}\sqrt{\bracket{\psi}{\left(P_{x_1}-Q_{x_2}\right)^2}{\psi}},
\end{equation}
where we have used the fact that for projective measurements $(A_{x})^{2} = A_{x}^{1} + A_{x}^{2}$. In the next step we again use the Cauchy--Schwarz inequality but in a different form: for $s_i,r_i\geq 0$ it holds that $\sum_i \sqrt{s_i}\sqrt{r_i}\leq \sqrt{\sum_i s_i}\sqrt{\sum_i r_i}$ with equality if and only if $\forall i: s_i=k\cdot r_i$ for some proportionality constant $k$. This leads us to the bound	
\begin{equation}\label{holder}
	\Bfunc{R}{MUB}\leq \sqrt{\sum_x \bracket{\psi}{A_x^1+A_x^2}{\psi} }\sqrt{\sum_x \bracket{\psi}{\left(P_{x_1}-Q_{x_2}\right)^2}{\psi}}.
\end{equation}
We denote the sum under the first square-root by $t$. Also, we use projectivity and completeness to write
\begin{equation}
	\label{eq:sum-diff}
	\sum_x \left(P_{x_1}-Q_{x_2}\right)^2=\sum_x P_{x_1}+Q_{x_2}-\{P_{x_1},Q_{x_2}\}=2\left(d-1\right)\I.
\end{equation}
Then, we can return to the Bell functional $\Bfunc{S}{MUB}$ and bound it by
\begin{equation}\label{qBellfin}
	\Bfunc{S}{MUB}\leq \sqrt{2(d-1)t}-\frac{1}{2}\sqrt{\frac{d-1}{d}}t.
\end{equation}
To maximise the right-hand-side over $t\geq 0$ we differentiate with respect to $t$ to find a unique extremum at $t=2d$. Re-inserting this value into Eq.~\eqref{qBellfin} returns the quantum bound given in Eq.~\eqref{QMUB}.

To derive the classical bound on $\Bfunc{S}{MUB}$ given in Eq.~\eqref{CMUB} recall that it suffices to optimise over deterministic strategies. Moreover, once the strategy of Bob is fixed, finding the optimal strategy of Alice is easy. If Bob outputs $b = u_{1}$ for $y = 1$ and $b = u_{2}$ for $y = 2$, the Bell functional becomes
\begin{equation}
	\Bfunc{S}{MUB}=\sum_{x} \left(\delta_{x_1, u_{1}}-\delta_{x_2, u_{2}}\right)\bracket{\psi}{A_x^1-A_x^2}{\psi}-\frac{1}{2}\sqrt{\frac{d-1}{d}}\sum_{x}\bracket{\psi}{A_x^1+A_x^2}{\psi}.
\end{equation}
We define $R_\pm=\{x \in [d]^{2} |\delta_{x_1, u_{1}}-\delta_{x_2, u_{2}}=\pm 1\}$ and $R_0=[d]^2\setminus (R_+ \cup R_-)$. By expanding the above expression for $\Bfunc{S}{MUB}$ into the separate sums over $R_+$, $R_-$ and $R_0$ it becomes clear that the optimal choice of Alice is to choose $A_x^1=\I $ and $A_x^2=A_x^\perp=0$ (always output $a=1$) when $x\in R_+$, choose $A_x^2=\I $ and $A_x^1=A_x^\perp=0$ (always output $a=2$) when $x\in R_-$ and choose $A_x^\perp=\I $ and $A_x^1=A_x^2=0$ (always output $a=\perp$) when $x\in R_0$. Since $|R_\pm|=d-1$, this leads to the classical bound in Eq.~\eqref{CMUB}.

\subsection{Device-independent certification (Theorem~\ref{thmDIMUB})}\label{AppTheorem3}

In this section we show what can be deduced about the measurements of Bob and the shared state based only on observing the maximal violation. In the last part we show that the condition obtained for Bob's measurement is complete, i.e.~it cannot be strengthened in any way.

To simplify the notation we assume that the marginal states of Alice and Bob are full-rank (in any case no information can be deduced outside of the support of the state). Moreover, since we do not aim to certify the measurements of Alice we can without loss of generality assume that they are projective. For Bob, on the other hand, we do not make such an assumption and projectivity is rigorously deduced.

In the arguments below we assume that the state shared between Alice and Bob is pure. However, if Alice and Bob share a mixed state we simply purify it using an additional register and apply the arguments below to the purification. Since the purification register remains untouched throughout the argument, it is clear that the same conclusions hold if Alice and Bob share a mixed state.
\subsubsection{Measurements of Bob}
\label{app:meas-Bob}
In the previous section we have derived an upper bound on the quantum value of the Bell functional $\Bfunc{R}{MUB}$. Since the resulting bound is tight, our argument must be tight at every step. By examining each step of the argument we can deduce certain relations that must be satisfied by any quantum realisation that produces the maximal violation.

Eq.~\eqref{MUBcont} is obtained by applying the Cauchy--Schwarz inequality to the Bell functional $\Bfunc{R}{MUB}$ in Eq.~\eqref{qMUB}. Saturating the Cauchy--Schwarz inequality implies that for all $x$ we have
\begin{equation}\label{rcross}
	A_x\ket{\psi} = \mu_{x} \left(P_{x_1}-Q_{x_2}\right)\ket{\psi},
\end{equation}
for some complex number $\mu_{x}$. Left-multiplying both sides by $\bra{\psi}(P_{x_1}-Q_{x_2})$, one deduces that $\mu_{x}$ is real and non-negative. The non-negativity stems from the fact that we have previously used that $\bracket{\psi}{A_x\otimes \left(P_{x_1}-Q_{x_2}\right)}{\psi} \leq \lvert \bracket{\psi}{A_x\otimes \left(P_{x_1}-Q_{x_2}\right)}{\psi} \rvert$. Also, Eq.~\eqref{rcross} implies that
\begin{equation}\label{rrcors}
	\bracket{\psi}{(A_x)^2}{\psi} = \mu_{x}^{2} \bracket{\psi}{\left(P_{x_1}-Q_{x_2}\right)^2}{\psi}.
\end{equation}
The fact that the second application of the Cauchy--Schwarz inequality, which leads to Eq.~\eqref{holder}, is tight implies that $\bracket{\psi}{(A_x)^2}{\psi} = \bracket{\psi}{A_x^1+A_x^2}{\psi} = \nu \bracket{\psi}{\left(P_{x_1}-Q_{x_2}\right)^2}{\psi}$ for some constant $\nu > 0$. This implies that $\mu_{1} = \mu_{2} = \ldots = \mu_{d^2} \equiv \mu$. Finally, note that Eq.~\eqref{eq:sum-diff} holds if and only if all the measurements of Bob are projective and since we used it to derive our upper bound, we conclude that all the measurements of Bob must be projective.

Summing Eq.~\eqref{rrcors} over $x$ we obtain
\begin{equation}
	t=2 \mu^{2} (d-1),
\end{equation}
where $t\equiv \sum_x \bracket{\psi}{(A_x)^2}{\psi}$. Just after Eq.~\eqref{qBellfin} we found that the optimal value of $t$ is given by $t=2d$, which implies
\begin{equation}\label{r}
	\mu = \sqrt{\frac{d}{d-1}}.
\end{equation}
Thus, we have established the useful relation
\begin{equation}\label{cross}
	A_x\ket{\psi}=\sqrt{\frac{d}{d-1}}\left(P_{x_1}-Q_{x_2}\right)\ket{\psi}.
\end{equation}

Since the measurements of Alice are projective, the spectra of $A_x$ only contain the values $\{+1,-1,0\}$ and, hence, the observables must satisfy $(A_x)^3 = A_x$. Using this in Eq.~\eqref{cross} gives
\begin{equation}
	\sqrt{\frac{d}{d-1}}\left(P_{x_1}-Q_{x_2}\right)\ket{\psi}=\left(\frac{d}{d-1}\right)^{3/2}\left(P_{x_1}-Q_{x_2}\right)^3\ket{\psi}.
\end{equation}
Note that we have managed to eliminate the operators of Alice from the equation. Therefore, we can trace out Alice's system and subsequently right-multiply by the inverse of Bob's local state (which we assume to be full-rank). This leaves us with
\begin{equation}
	P_{x_1}-Q_{x_2}=\frac{d}{d-1}\left(P_{x_1}-Q_{x_2}\right)^3,
\end{equation}
which using projectivity can be simplified to
\begin{equation}\label{ttt}
	P_{x_1}-Q_{x_2}=d\left(P_{x_1}Q_{x_2}P_{x_1}-Q_{x_2}P_{x_1}Q_{x_2}\right).
\end{equation}
Summing over $x_1$ we obtain
\begin{equation}
	\I=d\sum_{x_1}P_{x_1}Q_{x_2}P_{x_1}.
\end{equation}
Since $\{P_{x_1}\}$ are orthogonal, this implies that
\begin{equation}\label{eq:sand1}
	P_{x_1} = d P_{x_1} Q_{x_2} P_{x_1},
\end{equation}
which is the desired relation. Analogously, summing Eq.~\eqref{ttt} over $x_2$ leads to
\begin{equation}\label{eq:sand2}
	Q_{x_2} = d Q_{x_2} P_{x_1} Q_{x_2}.
\end{equation}
\subsubsection{The shared state}
Here we show that if Alice and Bob observe the maximal quantum value of $\Bfunc{S}{MUB}$, then there exist local operations which allow them to extract a maximally entangled state of local dimension $d$. More specifically, if $\ket{\psi} \in \cH_{A} \otimes \cH_{B}$ is the state shared between Alice and Bob, then we explicitly construct isometries $V_{A} : \cH_{A} \to \bC^{d} \otimes \cH_{A}$ and $V_{B} : \cH_{B} \to \bC^{d} \otimes \cH_{B}$ such that
\begin{equation}
	( V_{A} \otimes V_{B} ) \ket{\psi} = \ket{\maxent{d}} \otimes \ket{\psi_{\textnormal{aux}}}.
\end{equation}
As usual these isometries are constructed out of the measurement operators: $\{ A_{x} \}_{x \in [d]^{2}}$ for Alice and $\{ P_{j} \}_{j = 1}^{d}$ and $\{ Q_{k} \}_{k = 1}^{d}$ for Bob.

Let us start with the construction on Bob's side. Bob starts by applying an isometry $R : \cH_{B} \to \bC^{d} \otimes \cH_{B}$ defined as
\begin{equation}
	R = \sum_{j} \ket{j} \otimes P_{j}.
\end{equation}
Note that all the summations in this section go over $[d]$. Then, he applies a unitary $S : \bC^{d} \otimes \cH_{B} \to \bC^{d} \otimes \cH_{B}$ defined as
\begin{equation}
	S = \sum_{k} \ketbraq{k} \otimes U_{k},
\end{equation}
where
\begin{equation}
	\label{eq:Uk}
	U_{k} = d \sum_{j} P_{j} Q_{1} P_{j + k}.
\end{equation}
Note that the integer $j + k$ in the subscript is taken modulo $d$ with the possible values in $\{1, 2, \ldots, d\}$. The fact that the operators $\{ U_{k} \}_{k = 1}^{d}$ are indeed unitaries is a direct consequence of Eqs.~\eqref{eq:sand1} and~\eqref{eq:sand2}. In fact, they correspond to a cyclic shift of the measurement operators $\{ P_{j} \}$:
\begin{equation}
	\label{eq:Uk-shift}
	U_{k} P_{j} U_{k}\hc = P_{j - k}.
\end{equation}
The combined extraction procedure on Bob's side reads:
\begin{equation}
	\label{eq:VB}
	V_{B} = SR = \sum_{j} \ket{j} \otimes U_{j} P_{j} = d \sum_{j} \ket{j} \otimes P_{d} Q_{1} P_{j}.
\end{equation}
To construct an isometry on Alice's side we first construct operators which on act on Alice's side analogous to how $P_{j}$ and $U_{j}$ act on Bob's side. Summing Eq.~\eqref{cross} over one of the indices implies that
\begin{align}
	P_{x_{1}} \ket{\psi} &= \frac{1}{d} \big( \I + \mu^{-1} \sum_{x_{2}} A_{x} \big) \ket{\psi},\\
	Q_{x_{2}} \ket{\psi} &= \frac{1}{d} \big( \I - \mu^{-1} \sum_{x_{1}} A_{x} \big) \ket{\psi}
\end{align}
and recall that $r = \sqrt{d/(d-1)}$. This motivates us to define the following operators on Alice's side:
\begin{align}
	\widetilde{P}_{x_{1}} &= \frac{1}{d} \big( \I + \mu^{-1} \sum_{x_{2}} A_{x} \big),\\
	\widetilde{Q}_{x_{2}} &= \frac{1}{d} \big( \I - \mu^{-1} \sum_{x_{1}} A_{x} \big).
\end{align}
The cross relations
\begin{align}
	\label{eq:relation-1}
	( \widetilde{P}_{x_{1}} \otimes \I ) \ket{\psi} = ( \I \otimes P_{x_{1}} ) \ket{\psi},\\
	( \widetilde{Q}_{x_{2}} \otimes \I ) \ket{\psi} = ( \I \otimes Q_{x_{2}} ) \ket{\psi}.
\end{align}
ensure that the new operators of Alice satisfy the same algebraic relations as the old operators of Bob. For instance to show projectivity note that
\begin{equation}
	( \widetilde{P}_{x_{1}} \otimes \I ) \ket{\psi} = ( \I \otimes P_{x_{1}} ) \ket{\psi} = ( \I \otimes P_{x_{1}}^{2} ) \ket{\psi} = ( \widetilde{P}_{x_{1}}^{2} \otimes \I ) \ket{\psi}.
\end{equation}
Tracing out the system of Bob gives
\begin{equation}
	\widetilde{P}_{x_{1}} \rho_{A} = \widetilde{P}_{x_{1}}^{2} \rho_{A}.
\end{equation}
Right-multiplying by $\rho_{A}^{-1}$ (recall the full-rank marginal assumption) leads to
\begin{equation}
	\widetilde{P}_{x_{1}} = \widetilde{P}_{x_{1}}^{2}.
\end{equation}
Summing over $x_{1}$ in Eq.~\eqref{eq:relation-1} gives
\begin{equation}
	( \sum_{x_{1}} \widetilde{P}_{x_{1}} \otimes \I ) \ket{\psi} = \ket{\psi}.
\end{equation}
Tracing over Bob's system and right-multiplying by $\rho_{A}^{-1}$ implies
\begin{equation}
	\sum_{x_{1}} \widetilde{P}_{x_{1}} = \I.
\end{equation}
To see that they satisfy relations analogous to Eqs.~\eqref{eq:sand1} note that
\begin{align}
	( \widetilde{P}_{x_{1}} \otimes \I ) \ket{\psi} &= ( \I \otimes P_{x_{1}} ) \ket{\psi} = d ( \I \otimes P_{x_{1}} Q_{x_{2}} P_{x_{1}} ) \ket{\psi} = d ( \widetilde{P}_{x_{1}} \otimes P_{x_{1}} Q_{x_{2}} ) \ket{\psi}\\
	&= d ( \widetilde{P}_{x_{1}} \widetilde{Q}_{x_{2}} \otimes P_{x_{1}} ) \ket{\psi} = d ( \widetilde{P}_{x_{1}} \widetilde{Q}_{x_{2}} \widetilde{P}_{x_{1}} \otimes \I ) \ket{\psi}
\end{align}
and therefore
\begin{equation}
	\widetilde{P}_{x_{1}} = d \widetilde{P}_{x_{1}} \widetilde{Q}_{x_{2}} \widetilde{P}_{x_{1}}.
\end{equation}
By symmetry they also satisfy
\begin{equation}
	\widetilde{Q}_{x_{2}} = d \widetilde{Q}_{x_{2}} \widetilde{P}_{x_{1}} \widetilde{Q}_{x_{2}}.
\end{equation}
This implies that
\begin{equation}
	\widetilde{U}_{k} = d \sum_{j} \widetilde{P}_{j} \widetilde{Q}_{1} \widetilde{P}_{j + k}
\end{equation}
are valid unitaries on $\cH_{A}$. Moreover, it is easy to check that
\begin{equation}
	\label{eq:U-move}
	( \widetilde{U}_{k} \otimes \I ) \ket{\psi} = ( \I \otimes U_{k}\hc ) \ket{\psi}.
\end{equation}
Now we are ready to define the local isometry on Alice's side. Again, it consists of two parts
\begin{align}
	\widetilde{R} &= \sum_{j} \ket{j} \otimes \widetilde{P}_{j},\\
	\widetilde{S} &= \sum_{k} \ketbraq{k} \otimes \widetilde{U}_{k}
\end{align}
and the combined extraction isometry on Alice's side reads
\begin{equation}
	V_{A} = \widetilde{S} \widetilde{R} = \sum_{j} \ket{j} \otimes \widetilde{U}_{j} \widetilde{P}_{j}.
\end{equation}
Applying the local isometries to the initial state gives
\begin{equation}
	\ket{\psi_{\textnormal{out}}} = ( V_{A} \otimes V_{B} ) \ket{\psi} = \sum_{jk} \ket{jk} \otimes ( \widetilde{U}_{j} \widetilde{P}_{j} \otimes U_{k} P_{k} ) \ket{\psi}.
\end{equation}
Since
\begin{equation}
	( \widetilde{P}_{j} \otimes P_{k} ) \ket{\psi} = ( \I \otimes P_{k} P_{j} ) \ket{\psi} = \delta_{jk} ( \I \otimes P_{j} ) \ket{\psi},
\end{equation}
the cross terms necessarily vanish:
\begin{equation}
	\ket{\psi_{\textnormal{out}}} = \sum_{j} \ket{jj} \otimes ( \widetilde{U}_{j} \otimes U_{j} P_{j} ) \ket{\psi}.
\end{equation}
Moreover, Eqs.~\eqref{eq:U-move} and \eqref{eq:Uk-shift} imply that
\begin{equation}
	( \widetilde{U_{j}} \otimes U_{j} P_{j} ) \ket{\psi} = ( \I \otimes U_{j} P_{j} U_{j}\hc ) \ket{\psi} = ( \I \otimes P_{d} ) \ket{\psi},
\end{equation}
which gives
\begin{equation}
	\label{eq:psiout-explicit}
	\ket{\psi_{\textnormal{out}}} = \ket{\maxent{d}} \otimes \sqrt{d} ( \I \otimes P_{d} ) \ket{\psi},
\end{equation}
where $\ket{\psi_{d}^\text{max}}$ is the standard maximally entangled state of local dimension $d$. Since $\ket{\psi_{\textnormal{out}}}$ must be a normalised state, we immediately deduce that
\begin{equation}
	\bracket{\psi}{ \I \otimes P_{d} }{\psi} = \frac{1}{d}.
\end{equation}
However, it is intuitively clear that our choice of $P_{d}$ is arbitrary. A slightly different definition of $U_{k}$ in Eq.~\eqref{eq:Uk} would lead to the same conclusion for an arbitrary $P_{x_{1}}$, while swapping the two measurements implies the same for all $Q_{x_{2}}$. Therefore, as a side result of this argument we conclude that the maximal violation necessarily implies that the marginal distributions on Bob are uniform, i.e.:
\begin{equation}
	\bracket{\psi}{ \I \otimes P_{x_{1}} }{\psi} = \bracket{\psi}{ \I \otimes Q_{x_{2}} }{\psi} = \frac{1}{d}
\end{equation}
for all $x_{1}, x_{2}$.
\subsubsection{The condition obtained for Bob's measurements is complete}
\label{app:Bobs-completeness}
In Appendix~\ref{app:meas-Bob} we have shown that if Bob's measurements are capable of producing the maximal violation of $\Bfunc{S}{MUB}$, they must satisfy
\begin{align}\label{rel}
	P_a=dP_aQ_bP_a \qquad \text{and} \qquad Q_b=dQ_bP_aQ_b
\end{align}
for all $a, b \in [d]$. To show that no stronger characterisation  of Bob's measurements is possible based only on the observed Bell violation, we show that any pair of measurements acting on a finite-dimensional Hilbert space which satisfy these relations can be used to produce the maximal violation of $\Bfunc{S}{MUB}$.

By summing the equations in Eq.~\eqref{rel} over $b$ and $a$ respectively, the completeness relation implies $P_a=P_a^2$ and $Q_b=Q_b^2$. Hence, Bob's measurements are projective. Moreover, from Ref.~\cite{Miguel} we know that Eq.~\eqref{rel} implies that all the measurement operators have equal traces, i.e.~for all $a, b$ we have
\begin{equation}
	\label{eq:equal-trace}
	\Tr\left(P_a\right)=\Tr\left(Q_b\right)=n.
\end{equation}
for some integer $n$. Denoting the dimension of the Hilbert space by $D$, the completeness relation gives
\begin{equation}\label{tr2}
	D=\Tr\left(\I\right)=\sum_{a=1}^d \Tr\left(P_a\right)=dn.
\end{equation}
Let us now define $K_{ab}=P_a-Q_b$. Expanding and applying relations~\eqref{rel} lead to
\begin{equation}
	K_{ab}=\frac{d}{d-1} K_{ab}^3.
\end{equation}
This implies that the possible eigenvalues of $K_{ab}$ belong to $\{0,\pm \sqrt{(d-1)/d}\}$. However, from Eq.~\eqref{eq:equal-trace} it follows that $\Tr\left(K_{ab}\right)=0$, which means that there are as many positive eigenvalues of $K_{ab}$ as there are negative ones. In order to find the number of such pairs of eigenvalues, we evaluate
\begin{equation}\label{tr}
	\Tr\left(K_{ab}^2\right)=\Tr\left(P_a+Q_b-\{P_a,Q_b\}\right)=2n-\Tr\left(\{P_a,Q_b\}\right)=2n-2\Tr\left(P_aQ_bP_a\right)=2n \, \frac{d-1}{d},
\end{equation}
where we have used that $\Tr(P_a Q_b)=\Tr(P_aQ_bP_a)$ and the relations \eqref{rel}. Hence, $K_{ab}$ has  $n$ pairs of non-zero eigenvalues.

Now we are ready to construct a quantum realisation which achieves the maximal quantum value of $\Bfunc{S}{MUB}$ using the measurements of Bob discussed above. We assume that the system of Alice is of the same dimension $D$, that Alice and Bob share the maximally entangled state $\ket{\maxent{D}}$ and define Alice's observables as $A_x=\sqrt{d/(d-1)}K_{x_1x_2}\tran$. Notably, the eigenvalues of $A_x$ come from $\{0,\pm 1\}$, so it is a valid observable. The Bell functional $\Bfunc{R}{MUB}$ then reads
\begin{equation}
	\Bfunc{R}{MUB}=\sum_x \bracket{\maxent{D}}{A_x\otimes K_{x_1x_2}}{\maxent{D}}=\sqrt{\frac{d}{d-1}}\sum_x \bracket{\maxent{D}}{ \I \otimes (K_{x_1x_2})^2}{\maxent{D}}=\frac{1}{D}\sqrt{\frac{d}{d-1}}\sum_x \Tr\left((K_{x_1x_2})^2\right)=2\sqrt{d(d-1)},
\end{equation}
where we have used that for any linear operator $O$ we have $O \otimes \I \ket{\maxent{D}} = \I \otimes O\tran \ket{\maxent{D}}$, the fact that the local state of Bob is $\I/D$, and the equations \eqref{tr} and \eqref{tr2}. Moreover, it is easy to check that
\begin{equation}
	\gamma_d\sum_x \bracket{\maxent{D}}{(A_x)^2\otimes \I}{\maxent{D}}=\gamma_d\sum_x \frac{d}{D(d-1)}\Tr\left((K_{x_1x_2})^2\right)=\sqrt{d(d-1)}.
\end{equation}
In conclusion, we arrive at
\begin{equation}
	\Bfunc{S}{MUB}=\Bfunc{R}{MUB}-\gamma_d\sum_x \bracket{\maxent{D}}{(A_x)^2\otimes \I}{\maxent{D}}=\sqrt{d(d-1)},
\end{equation}
which concludes the proof.
\subsection{Maximal quantum violations for MUBs imply a unique probability distribution}\label{AppPdist}
In this section we show that the relations derived in Appendix~\ref{AppTheorem3} are sufficient to reconstruct the entire probability distribution. This implies that the maximal violation is achieved by a unique probability point. In this section we only explicitly compute the probabilities which involve the first two outcomes of Alice. Clearly, the probabilities including the third outcome are determined by normalisation.

Recall that in Eq.~\eqref{cross} we established that
\begin{align}\label{e1}
	A_x\ket{\psi}=\left(A_x^1-A_x^2\right)\ket{\psi}= \mu \left(P_{x_1}-Q_{x_2}\right)\ket{\psi},
\end{align}
where $\mu = \sqrt{d / (d - 1)}$. This immediately implies that
\begin{equation}\label{e2}
	(A_x)^2\ket{\psi}=\left(A_x^1+A_x^2\right)\ket{\psi}= \mu^{2} \left(P_{x_1}-Q_{x_2}\right)^2\ket{\psi}.
\end{equation}
By taking the sum and difference of Eq.~\eqref{e1} and Eq.~\eqref{e2}, we obtain
\begin{equation}
	\begin{aligned}\label{rels}
		& A_x^1\ket{\psi}=\frac{\mu}{2}\left[\left(\mu+1\right)P_{x_1}+\left(\mu-1\right)Q_{x_2}-\mu\{P_{x_1},Q_{x_2}\}\right]\ket{\psi}\\
		& A_x^2\ket{\psi}=\frac{\mu}{2}\left[\left(\mu-1\right)P_{x_1}+\left(\mu+1\right)Q_{x_2}-\mu\{P_{x_1},Q_{x_2}\}\right]\ket{\psi}.
	\end{aligned}
\end{equation}
This allows us to write down the probabilities in terms of expectation values involving only Bob's operators:
\begin{align}
	&\bracket{\psi}{ A_x^1 \otimes P_{u} }{\psi} = \frac{\mu}{2} \left((\mu + 1) \delta_{x_1u} \ave{P_{x_1}} + \big[ \mu(1 - \delta_{x_1u}) - 1 \big] \ave{Q_{x_2} P_{u}} - \mu \ave{ P_{x_1} Q_{x_2} P_{u} } \right),\\
	&\bracket{\psi}{ A_x^{2} \otimes P_{u} }{\psi} = \frac{\mu}{2} \left( (\mu - 1) \delta_{x_1u} \ave{P_{x_1}} + \big[ \mu(1 - \delta_{x_1u}) + 1 \big] \ave{Q_{x_2} P_{u}} - \mu \ave{ P_{x_1} Q_{x_2} P_{u} } \right),\\
	&\bracket{\psi}{ A_x^{1} \otimes Q_{u} }{\psi} = \frac{\mu}{2} \left((\mu - 1) \delta_{x_2u} \ave{Q_{x_2}} + \big[ \mu (1 - \delta_{x_2u}) + 1 \big] \ave{P_{x_1} Q_{u}} - \mu \ave{ Q_{x_2} P_{x_1} Q_{u} } \right),\\
	&\bracket{\psi}{ A_x^{2} \otimes Q_{u} }{\psi} = \frac{\mu}{2} \left( (\mu + 1) \delta_{x_2u} \ave{Q_{x_2}} + \big[ \mu(1 - \delta_{x_2u}) - 1 \big] \ave{P_{x_1} Q_{u}} - \mu \ave{ Q_{x_2} P_{x_1} Q_{u} } \right),
\end{align}
where $\langle \cdot \rangle$ denotes the expectation value with respect to $\ket{\psi}$. In the previous section we have already showed that $\ave{ P_{u} } = \ave{ Q_{u} } = \frac{1}{d}$ for all $y$. To compute the remaining terms we take advantage of the extraction isometries proposed before. Intuitively, we manage to replace the expectation values of the original measurement operators on the original unknown states by the expectation values of the extracted measurement operators on the maximally entangled state. More specifically, we use the fact that since $V_{B}\hc V_{B} = \I$, for any linear operator $O$ on Bob's side we have
\begin{equation}
	\ave{O} = \Tr ( O \rho_{B} ) = \Tr ( V_{B}\hc V_{B} O V_{B}\hc V_{B} \rho_{B} ) = \Tr( V_{B} O V_{B}\hc \cdot V_{B} \rho_{B} V_{B}\hc ).
\end{equation}
In the previous section we have shown that
\begin{equation}
	V_{B} \rho_{B} V_{B}\hc = \I \otimes P_{d} \rho_{B} P_{d}.
\end{equation}
Then, Eq.~\eqref{eq:VB} implies that
\begin{equation}
	V_{B} P_{u} Q_{v} V_{B}\hc = d^{2} \sum_{k} \ketbra{u}{k} \otimes P_{d} Q_{1} P_{u} Q_{v} P_{k} Q_{1} P_{d},
\end{equation}
which leads to
\begin{equation}
	\ave{ P_{u} Q_{v} } = \Tr ( V_{B} P_{u} Q_{v} V_{B}\hc \cdot V_{B} \rho_{B} V_{B}\hc ) = \frac{1}{d} \ave{ P_{d} } = \frac{1}{d^{2}}.
\end{equation}
Clearly, we also have $\ave{ Q_{v} P_{u} } = \ave{ P_{u} Q_{v} }^{*} = 1/d^{2}$. Similarly, we have
\begin{equation}
	V_{B} P_{u} Q_{v} P_{w} V_{B}\hc = d^{2} \ketbra{u}{w} \otimes P_{d} Q_{1} P_{u} Q_{v} P_{w} Q_{1} P_{d},
\end{equation}
which leads to
\begin{equation}
	\ave{ P_{u} Q_{v} P_{w}} = \Tr ( V_{B} P_{u} Q_{v} P_{w} V_{B}\hc \cdot V_{B} \rho_{B} V_{B}\hc ) = \frac{\delta_{uw}}{d} \ave{ P_{d} } = \frac{\delta_{uw}}{d^{2}}.
\end{equation}
To compute the expectation values we have used a particular extraction. If we know consider an extraction procedure which swaps the roles of the first and second measurement of Bob, by symmetry we will conclude that
\begin{equation}
	\ave{ Q_{u} P_{v} Q_{w}} = \frac{\delta_{uw}}{d^{2}}.
\end{equation}
Having computed all the necessary terms we can simply write down the probabilities:
\begin{equation}
	\begin{aligned}
		\bracket{\psi}{ A_x^{1} \otimes P_{u} }{\psi} &=
		\begin{cases}
			\frac{1}{2d} \Big( 1 + \sqrt{ \frac{d - 1}{d} } \, \Big) &\nbox{if} x_1 = u,\\
			\frac{1}{2d (d - 1)} \Big( 1 - \sqrt{ \frac{d - 1}{d} } \, \Big) &\nbox{otherwise.}
		\end{cases}\\
		\bracket{\psi}{ A_x^{2} \otimes P_{u} }{\psi} &=
		\begin{cases}
			\frac{1}{2d} \Big( 1 - \sqrt{ \frac{d - 1}{d} } \, \Big) &\nbox{if} x_1 = u,\\
			\frac{1}{2d (d - 1)} \Big( 1 + \sqrt{ \frac{d - 1}{d} } \, \Big) &\nbox{otherwise.}
		\end{cases}
	\end{aligned}
\end{equation}
and
\begin{equation}
	\begin{aligned}
		\bracket{\psi}{ A_x^{1} \otimes Q_{u} }{\psi} &=
		\begin{cases}
			\frac{1}{2d} \Big( 1 - \sqrt{ \frac{d - 1}{d} } \, \Big) &\nbox{if} x_2 = u,\\
			\frac{1}{2d (d - 1)} \Big( 1 + \sqrt{ \frac{d - 1}{d} } \, \Big) &\nbox{otherwise.}
		\end{cases}\\
		\bracket{\psi}{ A_x^{2} \otimes Q_{u} }{\psi} &=
		\begin{cases}
			\frac{1}{2d} \Big( 1 + \sqrt{ \frac{d - 1}{d} } \, \Big) &\nbox{if} x_2 = u,\\
			\frac{1}{2d (d - 1)} \Big( 1 - \sqrt{ \frac{d - 1}{d} } \, \Big) &\nbox{otherwise.}
		\end{cases}
	\end{aligned}
\end{equation}
In particular, it is easy to check that
\begin{align}
	\bracket{\psi}{A_x^1}{\psi}=\bracket{\psi}{A_x^2}{\psi}=\frac{1}{d}.
\end{align}

\section{Mutually unbiased measurements}\label{AppMate}

In this appendix, we analyse the structure of mutually unbiased measurements (MUMs). As a reminder, we repeat the definition:
\begin{definition}
	Two $d$-outcome measurements $\{P_a\}_{a=1}^d$ and $\{Q_b\}_{b=1}^d$ acting on the Hilbert space $\cH$ are mutually unbiased if
	\begin{equation}\label{eq:MUM_app}
		P_{a}=dP_{a} Q_{b} P_{a} \qquad \text{and}  \qquad  Q_{b}=dQ_{b} P_{a} Q_{b},
	\end{equation}
	for all $a$ and $b$.
\end{definition}
In the following, we define three natural subclasses of MUMs, introduce technical tools for analysing their structures, and through low-outcome number examples we deduce how these subclasses relate to each other.

\subsection{Three relevant subclasses}
Let $P = \{ P_{a} \}_{a = 1}^{d}$ and $Q = \{ Q_{b} \}_{b = 1}^{d}$ be a pair of $d$-outcome MUMs acting on $\cH$.
\begin{definition}
	We say that $P$ and $Q$ are \textbf{mutually unbiased bases} (MUBs) if $\cH = \bC^{d}$ and the measurement operators are rank-one projectors $P_{a} = \ketbraq{u_{a}}$, $Q_{b} = \ketbraq{v_{b}}$. Note that the MUM conditions in Eq.~\eqref{eq:MUM_app} imply that ${ \abs{ \braket{ u_{a} }{ v_{b} } } } = 1/\sqrt{d}$.
\end{definition}
\begin{definition}
	We say that $P$ and $Q$ are a \textbf{direct sum of mutually unbiased bases} if $\cH = \bigoplus_{j} \bC^{d}$ and $P_{a} = \bigoplus_{j} P_{a}^{j}, Q_{b} = \bigoplus_{j} Q_{b}^{j}$, where for every $j$ the pair $P^{j}$ and $Q^{j}$ are mutually unbiased bases.
	In the following, we will denote this class by $\textnormal{MUB}_{\oplus}$.
\end{definition}
\begin{definition}
	We say that $P$ and $Q$ are \textbf{MUB-extractable} if there exists a completely positive unital map $\Lambda : \cB( \cH ) \to \cB( \bC^{d} )$ such that the measurements defined as
	\begin{equation}
		P_{a}' = \Lambda( P_{a} ) \nbox{and} Q_{b}' = \Lambda( Q_{b} )
	\end{equation}
	are mutually unbiased bases.
	In the following, we will denote this class by $\textnormal{MUB}_\textnormal{ext}$.
\end{definition}

It follows directly from the above definitions that
\begin{equation}\label{eq:MUMinclusions_def}
	\text{MUB} \subsetneq \text{MUB}_\oplus \subseteq \text{MUB}_\text{ext} \subseteq \text{MUM},
\end{equation}
where the first inclusion is trivially seen to be strict. In the remainder of this appendix, we show that in general all of the above inclusions are strict.

\subsection{Technical tools}
In this section we introduce some technical tools to analyse the structure of MUMs. First, following Ref.~\cite{Miguel}, we derive a canonical form of MUMs in which they are fully characterised by a collection of unitary operators. Then, we provide a necessary and sufficient condition for MUMs to be unitarily equivalent to a direct sum of MUBs in terms of these unitaries. Lastly, we derive a generic lemma on the extractability of arbitrary sets of Hermitian operators via completely positive unital maps. In the subsequent section, we apply these techniques to analyse the inclusions within the MUM subclasses for outcome numbers $d = 2, 3, 4, 5$.

In this appendix we use $\cH$ to denote a separable Hilbert space, $\cB(\cH)$ to denote the set of bounded operators acting on $\cH$ and $\cB_{H}(\cH)$ to denote the set of bounded Hermitian operators acting on $\cH$.

\subsubsection{Canonical form of mutually unbiased measurements}
\label{app:canonical-form}

Let us take a pair of MUMs $\{P_a\}_{a = 1}^d$ and $\{Q_b\}_{b = 1}^d$ on a separable Hilbert space $\cH$. That is, measurements that satisfy the relations in Eq.~\eqref{eq:MUM_app}, namely
\begin{equation}\label{sandwichapp}
	\forall a,b: \quad P_a=dP_a Q_{b} P_a  \quad  \text{and}  \quad  Q_{b}=dQ_{b} P_a Q_{b}.
\end{equation}
Following Ref.~\cite{Miguel} we provide a characterisation of such a measurement pair.

First, note that these conditions imply projectivity, which can be seen by summing over the middle term. Then, defining $O_{ab} = \sqrt{d} P_{a} Q_{b}$, it is easy to see that
\begin{align}
	O_{ab} O_{ab}\hc = P_{a},\\
	O_{ab}\hc O_{ab} = Q_{b}.
\end{align}
This means that all the projectors are isomorphic, i.e.~either all $P_{a}, Q_{b}$ are finite-rank (and then $\Tr P_{a} = \Tr Q_{b} = n$ for all $a, b$ for some fixed $n \in \mathbb{N}$) or none of them is. Let $\cH_{a}$ denote the subspace on which $P_{a}$ projects. The completeness relation $\sum_{a} P_{a} = \I$ implies that
\begin{equation}
	\cH \simeq \bigoplus_{a = 1}^{d} \cH_{a}.
\end{equation}
However, the fact that all these Hilbert spaces are isomorphic allows us to write
\begin{equation}
	\cH \simeq \cH' \otimes \bC^{d}
\end{equation}
for some other (potentially infinite-dimensional but still separable) Hilbert space $\cH'$. Then, the first measurement reads
\begin{equation}
	\label{eq:Pa-operator}
	P_{a} = \I \otimes \ketbraq{a},
\end{equation}
where $\{ \ket{a} \}_{a = 1}^{d}$ is an orthonormal basis on $\bC^{d}$. The second measurement can be written as
\begin{equation}
	Q_{b} = \frac{1}{d} \sum_{jk} X_{jk}^{b} \otimes \ketbra{j}{k}
\end{equation}
for some operators $X_{jk}^{b} \in \cB( \cH' )$. Since $Q_{b} = Q_{b}\hc$, we must have that $X_{jk}^{b} = [ X_{kj}^{b} ]\hc$. Moreover, it is easy to show that all the $X_{jk}^{b}$ are unitary. Note that
\begin{equation}
	d^{2} P_{j} Q_{b} P_{k} Q_{b} P_{j} = d P_{j} Q_{b} P_{j} = P_{j} = \I \otimes \ketbraq{j}.
\end{equation}
On the other hand, a direct calculation gives
\begin{equation}
	d^{2} P_{j} Q_{b} P_{k} Q_{b} P_{j} = X_{jk}^{b} X_{kj}^{b} \otimes \ketbraq{j}.
\end{equation}
Since $X_{kj}^{b} = [ X_{jk}^{b} ]\hc$, we obtain $X_{jk}^{b} [X_{jk}^{b}]\hc = \I$, and by symmetry $[X_{jk}^{b}]\hc X_{jk}^{b} = \I$.

The first MUM condition $P_{j} Q_{b} P_{j} = P_{j} / d$ implies that
\begin{equation}
	X_{jj}^{b} = \I
\end{equation}
for all $j$ and $b$. The second MUM condition has an interesting implication: $Q_{b} P_{j} Q_{b} = Q_{b} / d$ leads to
\begin{equation}
	X_{jk}^{b} = X_{ja}^{b} X_{ak}^{b}
\end{equation}
for all $j, k, a, b$. In other words, the entire projector $Q_{b}$ is determined by only $d$ of the $X_{jk}^{b}$ operators. For instance, we can write
\begin{equation}
	X_{jk}^{b} = X_{j1}^{b} X_{1k}^{b}
\end{equation}
and recall that $X_{11}^{b} = \I$.

The structure derived above allows us to introduce a unitary transformation on $\cH$. Let
\begin{equation}
	U \equiv \sum_{j} X_{1j}^{1} \otimes \ketbraq{j}.
\end{equation}
Clearly, $U P_{a} U\hc = P_{a}$, and
\begin{equation}
	Q_{b}' \equiv U Q_{b} U\hc = \frac{1}{d} \sum_{jk} X_{1j}^{1} X_{jk}^{b} X_{k1}^{1} \otimes \ketbra{j}{k}.
\end{equation}
It is easy to see that
\begin{equation}
	\label{eq:Qb'-operator}
	Q_{1}' = \I \otimes \ketbraq{v},
\end{equation}
where
\begin{equation}
	\ket{v} = \frac{1}{\sqrt{d}} \sum_{j = 1}^{d} \ket{j}
\end{equation}
is a uniform superposition of all the basis states. In other words, this unitary transformation fixes the form of $Q_{1}'$.

Now we characterise the remaining $Q_{b}'$. Let
\begin{equation}
	Y_{jk}^{b} \equiv X_{1j}^{1} X_{jk}^{b} X_{k1}^{1}.
\end{equation}
These are still unitary and it holds for these operators as well that $Y_{jj}^b = \I$ and $Y_{jk}^b = Y_{j1}^b Y_{1k}^b$. Let us denote $V_j^b \equiv Y_{j1}^b$, and therefore
\begin{equation}
	Q'_b = \frac1d \sum_{jk} V_j^b (V_k^b)\hc \otimes \ketbra{j}{k}.
\end{equation}
From the previous arguments we have that $V_j^1 = V_1^b = \I$ for all $j$, $b$.
The orthogonality constraint $Q'_b Q'_{b'} = 0$ for $b \neq b'$ implies
\begin{equation}
	\sum_j ( V_j^b )\hc V_j^{b'} = 0 ~~~ \forall b \neq b',
\end{equation}
whereas the completeness relation $\sum_b Q'_b = \I$ gives
\begin{equation}
	\sum_b V_j^b (V_k^b)\hc = \delta_{jk} d \, \I.
\end{equation}

The following two propositions, whose results appeared originally in Ref.~\cite{Miguel}, summarise the observations we have made so far.
\begin{proposition}\label{prop:sandwichchar}
	Let $\{P_a\}_{a = 1}^d$ and $\{Q_b\}_{b = 1}^d$ be two $d$-outcome mutually unbiased measurements on a separable Hilbert space $\cH$. Then, the Hilbert space $\cH$ is isomorphic to $\cH' \otimes \bC^d$, for some Hilbert space $\cH'$, and the measurement operators can be written as
	\begin{equation}\label{eq:sandwich_alt}
		\begin{split}
			& \left. P_{a} = \I \otimes \ketbraq{a}, \right. \\
			& \left. Q_b = \frac1d \sum_{jk} V_j^b (V_k^b)\hc \otimes \ketbra{j}{k}, \right. \\
			& \left. V_j^b (V_j^b)\hc = (V_j^b)\hc V_j^b = \I ~~~ \forall j,b, \right. \\
			& \left. V_j^1 = V_1^b = \I ~~~ \forall j, b, \right. \\
			& \left. \sum_j (V_j^b)\hc V_j^{b'} = 0 ~~~ \forall b \neq b', \right. \\
			& \left. \sum_b V_j^b (V_k^b)\hc = \delta_{jk} d \, \I ~~~ \forall j,k. \right.
		\end{split}
	\end{equation}
\end{proposition}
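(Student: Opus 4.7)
The plan is to exhibit the claimed decomposition constructively, building it out of the defining MUM relations \eqref{eq:MUM_app}. I will proceed in four stages: first extract projectivity, then show that all measurement ranges are mutually isomorphic, then fix a tensor product structure in which $P_a$ takes the simple form, and finally analyse the resulting blocks of $Q_b$.

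First, projectivity is a quick consequence of the MUM relations: summing $P_a = d P_a Q_b P_a$ over $b$ and using $\sum_b Q_b = \I$ gives $P_a = P_a^2$, and symmetrically $Q_b = Q_b^2$. Next, the operator $O_{ab} = \sqrt{d}\, P_a Q_b$ satisfies $O_{ab} O_{ab}\hc = d P_a Q_b P_a = P_a$ and $O_{ab}\hc O_{ab} = Q_b$, so it is a partial isometry whose initial and final projectors are $Q_b$ and $P_a$. Hence every $P_a$ and every $Q_b$ has the same rank $n$, which may be infinite but (by separability of $\cH$) is at most countable. Combined with $\sum_a P_a = \I$, this yields $\cH \cong \cH' \otimes \bC^d$ with $\dim \cH' = n$, and in a suitable orthonormal basis we can write $P_a = \I \otimes \ketbraq{a}$.

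Next I would expand $Q_b$ in this basis as $Q_b = \tfrac{1}{d}\sum_{jk} X_{jk}^b \otimes \ketbra{j}{k}$, where Hermiticity forces $X_{jk}^b = (X_{kj}^b)\hc$. Applying one MUM relation twice, $d^2 P_j Q_b P_k Q_b P_j = d P_j Q_b P_j = P_j$, and computing the left-hand side directly in product form, gives $X_{jk}^b (X_{jk}^b)\hc = \I$, so every $X_{jk}^b$ is unitary. The relation $d P_j Q_b P_j = P_j$ itself forces $X_{jj}^b = \I$, and $d Q_b P_j Q_b = Q_b$ yields a multiplicativity identity $X_{jk}^b = X_{ja}^b X_{ak}^b$, so each block is determined by $V_j^b := X_{j1}^b$ via $X_{jk}^b = V_j^b (V_k^b)\hc$.

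At this point the form of $Q_b$ is pinned down up to the unitaries $V_j^b$, but $Q_1$ still carries an arbitrary unitary freedom. I would remove it with the global unitary $U = \sum_j X_{1j}^1 \otimes \ketbraq{j}$, which commutes with every $P_a$ and transforms $Q_1$ into $\I \otimes \ketbraq{v}$ with $\ket{v}$ the uniform superposition; in the new basis the normalisations $V_j^1 = \I$ and $V_1^b = \I$ hold automatically. The remaining two identities in the proposition then follow by comparing tensor coefficients in $\sum_b Q_b = \I$ and in $Q_b Q_{b'} = \delta_{bb'} Q_b$. I expect the main obstacle to be bookkeeping — keeping dagger conventions and index gymnastics consistent across the identification $\cH \cong \cH' \otimes \bC^d$ — rather than any deep technical difficulty; separability is what ensures the common rank $n$ gives a well-defined Hilbert space $\cH'$ to tensor with.
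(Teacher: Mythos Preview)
Your proposal is correct and follows essentially the same route as the paper's own argument: projectivity from the MUM relations, the partial isometry $O_{ab}=\sqrt{d}\,P_aQ_b$ to show all ranges are isomorphic and obtain $\cH\cong\cH'\otimes\bC^d$, the block expansion of $Q_b$ with unitary blocks $X_{jk}^b$ satisfying the multiplicative relation, the gauge-fixing unitary $U=\sum_j X_{1j}^1\otimes\ketbraq{j}$, and finally reading off the remaining identities from orthogonality and completeness. The only cosmetic difference is that the paper applies the unitary $U$ first and then names the resulting blocks $V_j^b$, whereas you name them first and then transform; either order works.
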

We will refer to the representation of $\{ P_{a} \}$ and $\{ Q_{b} \}$ in terms of $\{V_{j}^{b}\}$ as the \emph{canonical form}.

\subsubsection{Condition for equivalence to direct sum of MUBs}

In the following, we show a necessary and sufficient condition for a pair of MUMs to be unitarily equivalent to a direct sum of MUBs.
\begin{proposition}\label{prop:comm}
	For a pair of MUMs $\{P_a\}_{a = 1}^d$ and $\{Q_b\}_{b = 1}^d$ the following statements are equivalent:
	\begin{enumerate}
		\item If we represent $\{P_{a}\}$ and $\{Q_{b}\}$ in the canonical form, then all the $\{V_{j}^{b}\}$ matrices commute:
		\begin{equation}\label{eq:Vcomm}
			[V_j^b , V_{j'}^{b'}] = 0 ~~~ \forall j, j', b, b'.
		\end{equation}
		\item The measurements $\{P_{a}\}$ and $\{Q_{b}\}$ correspond to a direct sum of MUBs.
	\end{enumerate}
	%
	%
\end{proposition}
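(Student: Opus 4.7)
The approach is to leverage the canonical form of Proposition~\ref{prop:sandwichchar}, in which the pair $\{P_a\},\{Q_b\}$ is represented on $\cH' \otimes \bC^d$ by $P_a = \I \otimes \ketbraq{a}$ and $Q_b = \frac{1}{d}\sum_{jk} V_j^b (V_k^b)\hc \otimes \ketbra{j}{k}$ for a family of unitaries $\{V_j^b\}$ on $\cH'$ satisfying the relations~\eqref{eq:sandwich_alt}. The guiding observation is that the commutativity of these unitaries is precisely the algebraic signature of $\cH'$ splitting into joint eigenspaces on which every $V_j^b$ acts as a scalar of unit modulus; in the tensor-product picture this splits the whole Hilbert space into $d$-dimensional blocks on which $\{P_a\}$ and $\{Q_b\}$ look like a genuine pair of MUBs.

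For the implication~\eqref{eq:Vcomm}~$\Rightarrow$~direct sum of MUBs, I would invoke the spectral theorem for a finite commuting family of unitaries. This yields a decomposition $\cH' \simeq \bigoplus_\alpha \bC$ (a direct integral in the genuinely continuous case) together with scalars $\lambda_{j,b}(\alpha)$ of modulus one such that $V_j^b = \sum_\alpha \lambda_{j,b}(\alpha)\ketbraq{\alpha}'$. Substitution gives $Q_b = \sum_\alpha \ketbraq{\alpha}' \otimes \ketbraq{v_b^\alpha}$ with $\ket{v_b^\alpha} = \frac{1}{\sqrt{d}}\sum_j \lambda_{j,b}(\alpha)\ket{j}$, while $P_a = \sum_\alpha \ketbraq{\alpha}' \otimes \ketbraq{a}$. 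The orthogonality constraint $\sum_j (V_j^b)\hc V_j^{b'} = 0$ translates to orthonormality of $\{\ket{v_b^\alpha}\}_{b}$ inside each block, the completeness constraint ensures it is a basis, and the unit-modulus condition gives $|\braket{a}{v_b^\alpha}|^2 = 1/d$, so each block is a pair of MUBs in $\bC^d$.

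For the converse, I would start from $\cH = \bigoplus_\alpha \bC^d$ carrying MUBs $P_a^{(\alpha)} = \ketbraq{u_a^\alpha}$ and $Q_b^{(\alpha)} = \ketbraq{v_b^\alpha}$ on each block and explicitly reconstruct the canonical form of Section~\ref{app:canonical-form}. The identification $\ket{\alpha,a} \leftrightarrow \ket{u_a^\alpha}$ places $P_a$ in the required form, and expanding $\ket{v_b^\alpha} = \frac{1}{\sqrt d}\sum_j e^{i\phi_{b,\alpha,j}}\ket{u_j^\alpha}$ yields $X_{jk}^b = \sum_\alpha e^{i(\phi_{b,\alpha,j} - \phi_{b,\alpha,k})}\ketbraq{\alpha}'$, all simultaneously diagonal in the basis $\{\ket{\alpha}'\}$. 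Consequently the gauge unitary $U = \sum_j X_{1j}^1 \otimes \ketbraq{j}$ used to reach the canonical form is also diagonal in this basis, and each $V_j^b = X_{1j}^1 X_{j1}^b$ is a product of mutually commuting diagonal operators, so the whole family $\{V_j^b\}$ commutes.

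The principal obstacle I anticipate is a clean treatment of the infinite-dimensional case, where the spectral theorem produces a direct integral rather than a direct sum and one must verify that the $\bC^d$ tensor factor survives the decomposition intact; measurability issues for the eigenvalue functions $\lambda_{j,b}(\alpha)$ also require care. A secondary but worth-mentioning subtlety is that the canonical form is only unique up to the residual conjugation $V_j^b \mapsto W V_j^b W\hc$ on $\cH'$; both commutativity and the direct-sum-of-MUBs property are invariant under it, so the proposition is a well-posed statement about the measurement pair rather than about a particular representative.
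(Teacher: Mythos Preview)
Your proposal is correct and follows essentially the same route as the paper: simultaneously diagonalise the commuting family of unitaries, read off a block decomposition of $Q_b$ into rank-one projectors with flat overlap $1/d$ against the computational basis, and for the converse observe that a genuine direct sum of MUBs produces diagonal (hence commuting) $V_j^b$. The only cosmetic difference is that the paper diagonalises the products $Y_{jk}^b = V_j^b (V_k^b)\hc$ rather than the $V_j^b$ themselves, and dispatches the converse direction with a one-line ``is clear'' where you spell out the gauge step; your remark about the direct-integral subtlety in infinite dimension is apt and is not addressed in the paper either.
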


\begin{proof}
	The fact that $(2) \implies (1)$ is clear, so let us focus on $(1) \implies (2)$. From Eq.~\eqref{eq:Vcomm} it follows that all the $Y_{jk}^{b}$ operators commute, i.e.~that
	\begin{equation}
		[ Y_{jk}^{b}, Y_{j'k'}^{b'} ] = 0
	\end{equation}
	for all $j, j', k, k', b, b'$. This implies that one can find a basis on $\cH'$ denoted by $\{ \ket{e_{n}} \}_{n}$ such that
	\begin{equation}
		Y_{jk}^{b} = \sum_{n} \lambda_{n}^{jk, b} \ketbraq{e_{n}},
	\end{equation}
	with $| \lambda_{n}^{jk, b} | = 1$. Then
	\begin{equation}\label{eq:Qdiag}
		Q_{b} = \sum_{n} \ketbraq{e_{n}} \otimes T_{n, b},
	\end{equation}
	where
	\begin{equation}
		T_{n, b} \equiv \frac{1}{d} \sum_{jk} \lambda_{n}^{jk, b} \ketbra{j}{k}.
	\end{equation}
	Since $Q_{b}$ is Hermitian, so must be the operators $T_{n, b}$, and since $Q_{b}$ is projective, the operators $T_{n, b}$ are also projectors. In fact, one can compute the trace to verify that they are rank-one projectors. Since $Y_{jj}^{b} = \I$, we have that $\lambda_{n}^{jj, b} = 1$, and
	\begin{equation}
		\Tr T_{n, b} = \frac{1}{d} \sum_{j} \lambda_{n}^{jj, b} = 1.
	\end{equation}
	We also have that $ \bra{a} T_{n, b} \ket{a} = \frac1d$, and therefore $\{ \ketbraq{a} \}_{a = 1}^d$ and $\{ T_{n, b} \}_{b = 1}^d$ form a pair of MUBs in dimension $d$. By looking at Eqs.~\eqref{eq:sandwich_alt} and \eqref{eq:Qdiag} we immediately see that $\{P_a\}$ and $\{Q_b\}$ can be written as direct sums of MUBs.
\end{proof}
%
%
\subsubsection{Condition for non-extractability}
In this section we derive a condition guaranteeing that a pair of MUMs cannot be transformed into a pair of MUBs under the action of a completely positive unital map. This condition can be understood as a certificate that the particular pair of MUMs does not belong to the class MUB$_\text{ext}$ defined earlier. Let us start with a more general technical statement. In the following we use the fact that for a Hermitian bounded operator $M$ its spectrum, denoted by $\spec(M)$, is a bounded and closed subset of $\mathbb{R}$ and hence $\max \{ \spec(M) \}$ is well-defined.

\begin{lemma}\label{lem:CPunital}
	Let $\{A_k\}_{k=1}^n \subset \cB_{H}(\cH_A)$ be a set of bounded Hermitian operators on $\cH_A$, and $\{B_k\}_{k=1}^n \subset \cB_{H}(\cH_B)$ be a set of bounded Hermitian operators on $\cH_B$ such that $\dim \cH_B < \infty$. Then, if
	\begin{equation}\label{eq:condition}
		\dim \cH_B \cdot \max \Big\{ \spec \Big( \sum_k A_k\tran \otimes B_k \Big) \Big\} < \sum_k \Tr B_k^2,
	\end{equation}
	then there does not exist a completely positive unital map $\Lambda: \cB( \cH_A ) \to \cB( \cH_B )$ such that $\Lambda(A_k) = B_k$ for all $k = 1, \ldots, n$.
\end{lemma}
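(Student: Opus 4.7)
The plan is to assume for contradiction that such a completely positive unital map $\Lambda$ exists and derive a violation of the hypothesis \eqref{eq:condition} by computing $\sum_k \Tr(B_k^2)$ in two different ways via the Choi--Jamio\l kowski isomorphism. The whole argument hinges on the fact that the unitality constraint fixes the trace of the Choi operator to $\dim \cH_B$, which then provides the denominator needed to convert an operator inequality into a trace inequality.

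First, I would associate to $\Lambda$ its Choi operator
\begin{equation*}
J \equiv \sum_{ij} \ketbra{i}{j} \otimes \Lambda(\ketbra{i}{j}) \in \cB(\cH_A \otimes \cH_B),
\end{equation*}
where $\{\ket{i}\}$ is a fixed orthonormal basis of $\cH_A$ and the transposes $A_k\tran$ are taken with respect to this basis. Complete positivity of $\Lambda$ is equivalent to $J \geq 0$, and unitality $\Lambda(\I_A) = \I_B$ translates to $\Tr_A J = \I_B$, which implies $\Tr J = \dim \cH_B$. Because $\dim \cH_B < \infty$, this makes $J$ a well-defined positive trace-class operator even when $\cH_A$ is infinite-dimensional.

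Next, I would invoke the reconstruction identity $\Lambda(X) = \Tr_A[(X\tran \otimes \I_B) J]$ valid for all $X \in \cB(\cH_A)$. Multiplying by $B_k$, tracing over $\cH_B$, and using $\Lambda(A_k) = B_k$ gives $\Tr(B_k^2) = \Tr[(A_k\tran \otimes B_k) J]$, so summing over $k$ yields
\begin{equation*}
\sum_k \Tr(B_k^2) = \Tr\!\left[ \Big( \sum_k A_k\tran \otimes B_k \Big) J \right].
\end{equation*}
Now for any bounded Hermitian operator $M$ and any positive trace-class operator $J$, the spectral theorem applied to $M$ immediately yields $\Tr(M J) \leq \max\{\spec(M)\} \cdot \Tr J$. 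Applying this bound with $M = \sum_k A_k\tran \otimes B_k$ (which is bounded since the $A_k$ and $B_k$ are bounded and the sum is finite) and $\Tr J = \dim \cH_B$ produces
\begin{equation*}
\sum_k \Tr(B_k^2) \leq \dim \cH_B \cdot \max\!\Big\{ \spec\Big( \sum_k A_k\tran \otimes B_k \Big) \Big\},
\end{equation*}
in direct contradiction with \eqref{eq:condition}. Hence no such $\Lambda$ can exist.

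The one technical point I expect to require care is the rigorous extension of the Choi--Jamio\l kowski construction and the reconstruction identity to the case $\dim \cH_A = \infty$, where the formal sum defining $J$ requires a limit interpretation. However, since $\dim \cH_B$ is finite and $\Tr J = \dim \cH_B$, the operator $J$ is automatically trace-class and all subsequent manipulations (partial traces, products with bounded operators, and the spectral bound) are standard; invoking Refs.~\cite{Choi, Jamiolkowski} suffices to make this precise. No finite-dimensionality of $\cH_A$ is actually used in the core computation.
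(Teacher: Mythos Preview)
Your proof is correct and takes a genuinely different, more elementary route than the paper's. The paper formulates the existence of $\Lambda$ as a primal SDP feasibility problem in the Choi operator $C$, writes down the Lagrangian, derives the dual feasibility problem, and then exhibits an explicit dual certificate (the ansatz $Y = y\I_B$, $Z_k = -zB_k$) whose feasibility is equivalent to condition~\eqref{eq:condition}; infeasibility of the primal then follows by weak duality. You bypass the duality machinery entirely: assuming $\Lambda$ exists, you compute $\sum_k \Tr(B_k^2) = \Tr\big[(\sum_k A_k\tran \otimes B_k) J\big]$ directly from the Choi reconstruction formula and bound the right-hand side by $\lambda_{\max}\cdot \Tr J = \lambda_{\max}\cdot \dim\cH_B$ using only $J \geq 0$. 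Your argument is shorter and uses nothing beyond the Choi--Jamio\l kowski isomorphism and the elementary inequality $\Tr(MJ) \leq \lambda_{\max}(M)\Tr J$ for $J \geq 0$; the paper's SDP approach, while heavier, has the advantage of being systematic---it shows how one would \emph{discover} conditions like~\eqref{eq:condition} rather than verify them post hoc. Your caveat about the infinite-dimensional $\cH_A$ case is appropriate and, in fact, the paper's proof is no more careful on this point than yours.
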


\begin{proof}
	The existence of the above unital map can be written as the semidefinite programming feasibility problem \cite{Boyd}
	\begin{equation}\label{eq:primal}
		\begin{split}
			& \left. C \in \cB_{H}( \cH_A \otimes \cH_B ) \right. \\
			& \left. C \ge 0 \right. \\
			& \left. \Tr_A C = \I_B \right. \\
			& \left. \Tr_A \left[ C \left( A_k\tran \otimes \I_B \right) \right] = B_k ~~~ \forall k, \right.
		\end{split}
	\end{equation}
	due to the Choi--Jamio\l kowski isomorphism \cite{Choi, Jamiolkowski}. Let us define Hermitian dual variables $X \in \cB_{H}( \cH_A \otimes \cH_B )$, $Y \in \cB_{H}( \cH_B )$ and $Z_k \in \cB_{H}( \cH_B )$ for the above constraints, and define the Lagrangian function
	\begin{equation}\label{eq:Lagrangian}
		\cL(X, Y, \{Z_k\} ) \equiv \sup_{ C \in \cB_{H}( \cH_A \otimes \cH_B ) } \bigg\{ \Tr(XC) + \Tr \left[ Y \left( \I_B - \Tr_A C \right) \right] + \sum_k \Tr\Big[ Z_k \big\{ B_k - \Tr_A \left[ C \left( A_k\tran \otimes \I_B \right) \right] \big\} \Big] \bigg\},
	\end{equation}
	which is guaranteed to be real. We also define the dual problem
	\begin{equation}\label{eq:dual}
		\begin{split}
			& \left. \cL(X, Y, \{Z_k\} ) < 0 \right. \\
			& \left. X \ge 0. \right.
		\end{split}
	\end{equation}
	The primal in Eq.~\eqref{eq:primal} and the dual are \emph{weak alternatives}, that is, they cannot be both feasible (there do not exist variables $C$, $X$, $Y$ and $Z_k$ satisfying all the constraints in both Eqs.~\eqref{eq:primal} and \eqref{eq:dual}), because in that case we would have
	\begin{equation}
		0 > \cL(X, Y, \{Z_k\} ) \ge \Tr(XC) + \Tr \left[ Y \left( \I_B - \Tr_A C \right) \right] + \sum_k \Tr\Big[ Z_k \big\{ B_k - \Tr_A \left[ C \left( A_k\tran \otimes \I_B \right) \right] \big\} \Big] \ge 0,
	\end{equation}
	which is a contradiction. Therefore, if we find a feasible point for the dual in Eq.~\eqref{eq:dual}, then the primal in Eq.~\eqref{eq:primal} is infeasible.
	
	Let us rewrite the Lagrangian in Eq.~\eqref{eq:Lagrangian} as
	\begin{equation}
		\begin{split}
			\cL(X, Y, \{Z_k\} ) & \left. = \sup_{ C \in \cB_{H}( \cH_A \otimes \cH_B ) } \Tr \bigg\{ C \Big[ X - \I_A \otimes Y - \sum_k \left(A_k\tran \otimes \I_B \right) \left( \I_A \otimes Z_k \right) \Big] \bigg\} + \Tr Y + \sum_k \Tr (Z_k B_k) \right. \\
			& \left. \equiv \sup_{ C \in \cB_{H}( \cH_A \otimes \cH_B ) } \Tr (C X') + \Tr Y + \sum_k \Tr (Z_k B_k), \right.
		\end{split}
	\end{equation}
	where we have defined $X' = X - \I_A \otimes Y - \sum_k \left(A_k\tran \otimes Z_k \right)$ and used the fact that the dual map of the partial trace is $\Tr_A^\ast(.) = \I_A \otimes (.)$. In order to have $\cL(X, Y, \{Z_k\} ) < +\infty$, we need to set $X' = 0$, because there is no restriction imposed on $C$ in the dual problem. Therefore, an equivalent formulation of the dual feasibility problem in Eq.~\eqref{eq:dual} is
	\begin{equation}\label{eq:dual_eqv}
		\begin{split}
			& \left. \Tr Y + \sum_k \Tr (Z_k B_k) < 0 \right. \\
			& \left. \I_A \otimes Y + \sum_k \left(A_k\tran \otimes Z_k \right) \ge 0. \right.
		\end{split}
	\end{equation}
	
	We choose the ansatz $Y = y \I_B$ and $Z_k = -z B_k$ with $y, z \in \mathbb{R}$, and substitute it into Eq.~\eqref{eq:dual_eqv}, which gives
	\begin{equation}\label{eq:dual_yz}
		\begin{split}
			& \left. y \cdot \dim \cH_B - z \cdot \sum_k \Tr B^2_k < 0 \right. \\
			& \left. y \I - z \cdot \sum_k \left(A_k\tran \otimes B_k \right) \ge 0. \right.
		\end{split}
	\end{equation}
	Satisfying the second constraint in general requires $y \ge 0$ (specifically if $\sum_k \left(A_k\tran \otimes B_k \right)$ is not full rank), and therefore we also need $z \ge 0$ in order to satisfy the first constraint. In order to get the lowest value in the first constraint, it is desirable to satisfy the second inequality with equality, which leads to
	\begin{equation}
		y = z \cdot \max \Big\{ \spec \Big( \sum_k A_k\tran \otimes B_k \Big) \Big\}.
	\end{equation}
	Plugging this into Eq.~\eqref{eq:dual_yz} gives
	\begin{equation}
		\begin{split}
			& \left. z \cdot \bigg[
			\dim \cH_B \cdot \max \Big\{ \spec \Big( \sum_k A_k\tran \otimes B_k \Big) \Big\} - \sum_k \Tr B_k^2 \bigg] < 0 \right. \\
			& \left. z \ge 0, \right.
		\end{split}
	\end{equation}
	which is feasible whenever
	\begin{equation}
		\mathrm{dim}\cH_B \cdot \max \Big\{ \spec \Big( \sum_k A_k\tran \otimes B_k \Big) \Big\} < \sum_k \Tr B_k^2.
	\end{equation}
\end{proof}
\begin{remark}\label{rem:CPunital}
	Consider two sets of $n$ projective measurements with $m$ outcomes, $\{ A^y_b \}_{b = 1, \ldots, m}^{y = 1, \ldots, n} \subset \cB_{H}(\cH_A)$ and $\{ B^y_b \}_{b = 1, \ldots, m}^{y = 1, \ldots, n} \subset \cB_{H}(\cH_B)$. Since for positive semidefinite operators $\max \{ \spec(M) \} = \norm{M}$, where $\norm{.}$ is the operator norm, in this case the criterion in Eq.~\eqref{eq:condition} reads
	\begin{equation}\label{eq:CPnormcondition}
		\norm[\Big]{ \sum_{y, b} (A_b^y)\tran \otimes B_b^y } < n.
	\end{equation}
	Using the triangle inequality and the fact that every $\sum_{b} (A_b^y)\tran \otimes B_b^y$ is a projection, we have that
	\begin{equation}
		\norm[\Big]{ \sum_{y, b} (A_b^y)\tran \otimes B_b^y } \le \sum_y \norm[\Big]{ \sum_{b} (A_b^y)\tran \otimes B_b^y } = n.
	\end{equation}
	The saturation of this inequality is equivalent to the existence of a state $\ket{\psi} \in \cH_A \otimes \cH_B$ such that
	\begin{equation}\label{eq:statecondition}
		\bra{\psi} \sum_{b} (A_b^y)\tran \otimes B_b^y \ket{\psi} = 1 ~~ \forall y.
	\end{equation}
	Therefore, Eq.~\eqref{eq:condition} for two sets of $n$ projective measurement is equivalent to the non-existence of a state $\ket{\psi}$ such as in Eq.~\eqref{eq:statecondition}.
\end{remark}

\subsection{Examples}

In this section, using the above techniques, we show that for outcome numbers $d = 2$ (3), every MUM pair is unitarily equivalent to a direct sum of MUBs in dimension $2$ (3). Therefore, for $d = 2$ and 3, $\text{MUB}_\oplus = \text{MUB}_\text{ext} = \text{MUM}$. However, we also show that this is not the case for $d = 4$ and 5, where we show explicit examples of MUM pairs that are not MUB-extractable. This shows that in general $\text{MUB}_\text{ext} \subsetneq \text{MUM}$.

\subsubsection{Outcome numbers 2 and 3}
\label{app:outcome23}

\begin{proposition}
	For $d = 2$ and $d = 3$, every pair of mutually unbiased measurements can be written as a direct sum of $d$-dimensional MUBs.
\end{proposition}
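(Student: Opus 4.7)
The plan is to invoke the canonical form of Proposition~\ref{prop:sandwichchar}, which parameterises every MUM pair by a family of unitaries $\{V_j^b\}$ subject to the boundary conditions $V_1^b = V_j^1 = \I$ and to the orthogonality and completeness identities listed there. By Proposition~\ref{prop:comm}, it then suffices to show that all of the $V_j^b$ pairwise commute, and the desired direct-sum-of-MUBs structure follows.

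For $d = 2$ the argument is essentially trivial: the boundary conditions already force $V_1^1 = V_2^1 = V_1^2 = \I$, so the only potentially non-trivial matrix is $V_2^2$, which commutes with itself; hence Proposition~\ref{prop:comm} applies immediately.

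For $d = 3$ the non-trivial unitaries are $V_2^2, V_3^2, V_2^3, V_3^3$. Setting $A \equiv V_2^2$, I would first apply the orthogonality identity with $(b, b') = (1, 2)$ to obtain $V_3^2 = -\I - A$, then the completeness identity with $(j, k) = (1, 2)$ (after taking an adjoint) to obtain $V_2^3 = -\I - A$, and finally the orthogonality identity with $(b, b') = (1, 3)$ to obtain $V_3^3 = -\I - V_2^3 = A$. Thus every non-trivial $V_j^b$ is an affine function of the single matrix $A$, so they all commute and the hypothesis of Proposition~\ref{prop:comm} is verified. The remaining constraints, namely unitarity of $-\I - A$ together with the orthogonality/completeness identities at $(b, b') = (2, 3)$ and $(j, k) = (2, 3)$, will collapse to the single condition $A + A\hc = -\I$, equivalently $A^2 + A + \I = 0$, which merely restricts the spectrum of $A$ to the non-trivial cube roots of unity but is irrelevant for commutativity.

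The main obstacle is recognising that in $d = 3$ the orthogonality relations (summed over $j$ for fixed $b, b'$) and the completeness relations (summed over $b$ for fixed $j, k$) over-determine this particular four-matrix system just enough to collapse it onto the commutative algebra generated by $V_2^2$. This conspiracy breaks down for $d \geq 4$, as demonstrated by the explicit non-MUB-extractable examples given later in the appendix, so no dimension-independent extension of the argument should be expected.
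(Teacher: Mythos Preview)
Your proposal is correct and matches the paper's proof essentially step for step: both invoke the canonical form of Proposition~\ref{prop:sandwichchar}, specialise the orthogonality/completeness identities to the boundary cases $b=1$ or $j=1$ to obtain the zero-sum relations $\sum_j V_j^b = 0$ and $\sum_b V_j^b = 0$, deduce that all $V_j^b$ lie in the commutative algebra generated by a single matrix (trivially for $d=2$, and $V_2^2$ for $d=3$), and then appeal to Proposition~\ref{prop:comm}. Your additional remark about the residual constraint $A + A\hc = -\I$ is correct but, as you note, irrelevant for the commutativity conclusion.
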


\begin{proof}
	From Eq.~\eqref{eq:sandwich_alt} it is clear that for every pair of MUMs we have that
	\begin{equation}\label{eq:zerosum}
		\begin{split}
			& \left. V^1_j = V^b_1 = \I ~~~ \forall b, j, \right. \\
			& \left. \sum_j V^b_j = 0 ~~~ \forall b \neq 1, \right. \\
			& \left. \sum_b V^b_j = 0 ~~~ \forall j \neq 1. \right.
		\end{split}
	\end{equation} For $d = 2$ this fixes all $V^b_j$, that is, $V^1_1 = V^1_2 = V^2_1 = \I$, whereas $V^2_2 = -\I$. Then all $V^b_j$ commute, and by Proposition \ref{prop:comm} the measurements are direct sums of 2-dimensional MUBs.
	
	For the $d = 3$ case, we have that $V^1_1 = V^1_2 = V^1_3 = V^2_1 = V^3_1 = \I$. Let us denote $V^2_2 = V$. Then from Eq.~\eqref{eq:zerosum} it follows that $V^2_3 = V^3_2 = - \I - V$ and $V^3_3 = V$, and again all the matrices $\{ V^b_j \}$ commute.
\end{proof}

\subsubsection{Outcome numbers 4 and 5}
\label{app:outcome45}

Let us now construct two 5-outcome MUMs $\{P_a\}_{a = 1}^5$ and $\{Q_b\}_{b = 1}^5$ in dimension 10, such that they are not MUB-extractable. We will define the operators $P_a$ and $Q_b$ formally on the space $\cB( \bC^{2} \otimes \bC^{5} )$, and we will denote the qubit Pauli operators by $X$, $Y$ and $Z$. We define the $P_a$ operators as
\begin{equation}\label{eq:P5}
	P_a = \I_2 \otimes \ketbraq{a},
\end{equation}
where $\{ \ket{a} \}_{a = 1}^5$ is the computational basis on $\bC^5$, and the first $Q$ operator as
\begin{equation}\label{eq:Q51}
	Q_1 = \I_2 \otimes \ketbraq{v}, ~~ \ket{v} = \frac{1}{ \sqrt{5} } \sum_{j = 1}^5 \ket{j}.
\end{equation}
For the next three $Q$ operators, we define unitaries that will transform $Q_1$ into $Q_2$, $Q_3$ and $Q_4$:
\begin{equation}
	U_b = \sum_{j = 1}^5 U^b_j \otimes \ketbraq{k}, ~~ b = 2, 3, 4,
\end{equation}
where
\begin{equation}
	\begin{split}
		& \left. U^2_1 = Z, ~~~ U^2_2 = X, ~~~ U^2_3 = - \frac12 X + \frac{ \sqrt{3} }{2} Y, ~~~ U^2_4 = - \frac12 X - \frac{ \sqrt{3} }{2} Y, ~~~ U^2_5 = - Z, \right. \\
		& \left. U^3_1 = Z, ~~~ U^3_2 = - \frac12 X - \frac{ \sqrt{3} }{2} Y, ~~~ U^3_3 = - Z, ~~~ U^3_4 = X, ~~~ U^3_5 = - \frac12 X + \frac{ \sqrt{3} }{2} Y, \right. \\
		& \left. U^4_1 = Z, ~~~ U^4_2 = - Z, ~~~ U^4_3 = - \frac12 X - \frac{ \sqrt{3} }{2} Y, ~~~ U^4_4 = - \frac12 X + \frac{ \sqrt{3} }{2} Y, ~~~ U^4_5 = X \right. \\
	\end{split}
\end{equation}
(note that these unitaries are not the same as the $V^b_j$ in Eq.~\eqref{eq:sandwich_alt}). Using these unitaries, we define
\begin{equation}\label{eq:Q5234}
	Q_b = U_b Q_1 U_b^\dagger, ~~ b = 2, 3, 4,
\end{equation}
and finally,
\begin{equation}\label{eq:Q55}
	Q_5 = \I_{10} - Q_1 - Q_2 - Q_3 - Q_4.
\end{equation}

\begin{proposition}\label{prop:d5}
	The measurements $\{P_a\}_{a = 1}^5$ and $\{Q_b\}_{b = 1}^5$ defined in Eqs.~\eqref{eq:P5}, \eqref{eq:Q51}, \eqref{eq:Q5234} and \eqref{eq:Q55} are mutually unbiased, but they are not MUB-extractable.
\end{proposition}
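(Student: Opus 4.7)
I would verify the sandwich characterisation $P_a = d P_a Q_b P_a$ and $Q_b = d Q_b P_a Q_b$ of MUMs case by case. The pair $(P_a, Q_1)$ is already in the canonical form of Proposition~\ref{prop:sandwichchar}, so both relations reduce to $|\braket{a}{v}|^2 = 1/5$. For $b\in\{2,3,4\}$ the unitary $U_b = \sum_j U_j^b \otimes \ketbraq{j}$ commutes with every $P_a = \I_2 \otimes \ketbraq{a}$, so $Q_b = U_b Q_1 U_b^\dagger$ inherits the sandwich relations from $Q_1$ by a direct conjugation. For the remaining outcome I would define $Q_5 = \I - \sum_{b=1}^4 Q_b$ and first establish that $\{Q_b\}_{b=1}^4$ are pairwise orthogonal; in the canonical form this reduces to $\sum_j U_j^b U_j^{b'} = 0$ for $b \neq b'$ (using Hermiticity of each $U_j^b$), which is a finite Pauli computation that for $b=1$ simplifies further to $\sum_j U_j^{b'}=0$, a visible property of the listed table. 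Then $Q_5$ is a rank-two projector orthogonal to the others and $P_a Q_5 P_a = (1/5) P_a$ follows from $P_a = \sum_b P_a Q_b P_a$ together with the cases already proved. The reverse $Q_5 P_a Q_5 = (1/5) Q_5$ is most transparent from the extended canonical form: writing $V_j^5 (V_k^5)^\dagger = 5\delta_{jk}\I - \sum_{b=1}^4 V_j^b (V_k^b)^\dagger$, the diagonal $j=k$ yields $\I$ automatically, a direct check produces unitaries off-diagonally, and the consistency of the resulting $\{V_j^5\}$ with the remaining conditions of Proposition~\ref{prop:sandwichchar} certifies the sandwich relation for $b=5$.

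\textbf{Part (b).} I would prove non-extractability by a multiplicative-domain argument. Suppose, for contradiction, that some unital completely positive map $\Lambda: \cB(\bC^{10}) \to \cB(\bC^5)$ sends $P_a$, $Q_b$ to rank-one projectors $A_a$, $B_b$ forming a pair of MUBs. Since $\Lambda(P_a)^2 = \Lambda(P_a) = \Lambda(P_a^2)$ (and likewise for $Q_b$), the Kadison--Schwarz inequality $\Lambda(X^\dagger X) \geq \Lambda(X)^\dagger \Lambda(X)$ is saturated at $X = P_a$ and $X = Q_b$, so all of these operators lie in the multiplicative domain of $\Lambda$. This domain is a $*$-subalgebra, hence contains the $*$-algebra $\mathcal{A}$ generated by $\{P_a\}\cup\{Q_b\}$. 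I would then show $\mathcal{A}=\cB(\bC^{10})$: the identities $P_a Q_1 P_{a'} = (1/5)\,\I_2 \otimes \ketbra{a}{a'}$ generate the subalgebra $\I_2 \otimes \cB(\bC^5)$, while sandwiching $Q_b$ (for $b\in\{2,3,4\}$) by $\I_2 \otimes \ketbra{\alpha}{j}$ on the left and $\I_2 \otimes \ketbra{k}{\beta}$ on the right isolates $U_j^b U_k^b \otimes \ketbra{\alpha}{\beta}$. An explicit Pauli calculation (for instance $U_1^2 U_2^2 = iY$, $U_2^2 U_3^2 = -\I/2 + (\sqrt{3}i/2)Z$, and combining $U_1^2 U_3^2$ with the earlier $iY$ isolates $X$) shows that $\{U_j^b U_k^b\}$ spans all of $\cB(\bC^2)$. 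Hence $\mathcal{A} \supseteq \cB(\bC^2)\otimes \cB(\bC^5) = \cB(\bC^{10})$, and $\Lambda$ restricted to $\mathcal{A}$ is a unital $*$-homomorphism from $\cB(\bC^{10})$ into $\cB(\bC^5)$. By simplicity of $\cB(\bC^{10})$ this map must be injective, which is impossible since $\dim\cB(\bC^{10}) = 100 > 25 = \dim\cB(\bC^5)$.

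The main obstacle in part~(a) is the sandwich relation for $Q_5$, which does not follow formally from the $b\leq 4$ relations and orthogonality alone; the cleanest route is the canonical-form extension above, which at the price of a short explicit calculation reduces the problem to checking that four additional $2\times 2$ unitaries $V_j^5$ fit into the canonical structure. In part~(b) the delicate step is the identification $\mathcal{A}=\cB(\bC^{10})$, which hinges on the specific Pauli structure of the $U_j^b$: one must verify by direct multiplication that at least three linearly independent non-identity Paulis appear among the products $\{U_j^b U_k^b\}$, so that the first tensor factor is saturated. An alternative route for part~(b) would be to verify the spectral condition of Lemma~\ref{lem:CPunital} directly by bounding $\|\sum_a P_a^T\otimes A_a + \sum_b Q_b^T\otimes B_b\|$ uniformly over all pairs of five-dimensional MUBs $\{A_a,B_b\}$, but this appears less transparent than the algebraic argument sketched above.
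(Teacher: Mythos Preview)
Your argument is correct, and Part~(b) takes a genuinely different route from the paper. The paper invokes Lemma~\ref{lem:CPunital} via Remark~\ref{rem:CPunital}: it uses the classification of MUBs in dimension~5 (all equivalent to the Fourier pair up to relabelling), and then for every permutation $\sigma\in S_5$ verifies the norm bound $\norm{\sum_a P_a\tran\otimes A_a + \sum_b Q_b\tran\otimes B_{\sigma(b)}}<2$ by passing to the Schatten 2-norm and computing a trace (delegated to a Mathematica file). Your multiplicative-domain argument bypasses all of this: once you know $\mathcal{A}=\cB(\bC^{10})$, Choi's theorem forces any hypothetical extracting map to be a unital $*$-homomorphism on the full matrix algebra, which is impossible into $\cB(\bC^5)$ by simplicity and dimension count. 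This is cleaner in two respects: it requires no knowledge of what MUBs in dimension~5 look like, and it in fact proves more, namely that no unital CP map can send the $P_a,Q_b$ to \emph{any} set of projections in dimension below~10 (not merely to a MUB pair). The price is that the conclusion hinges on the irreducibility check $\mathcal{A}=\cB(\bC^{10})$, which is specific to this example; your Pauli computation handles it, but the method will not transfer to a hypothetical MUM pair whose generated algebra has a non-trivial commutant. The paper's spectral criterion, by contrast, is a more flexible numerical tool that could in principle detect non-extractability even when the algebra is reducible.

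For Part~(a) both approaches are computational; the paper simply refers to a symbolic verification of the sandwich relations, while you impose more structure by exploiting that $U_b$ commutes with each $P_a$ and reducing the orthogonality checks to $\sum_j U_j^b U_j^{b'}=0$. Your identification of the relation $Q_5 P_a Q_5=(1/5)Q_5$ as the one step that does not follow formally from the others is accurate, and the canonical-form extension you outline is an adequate way to close it.
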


\begin{proof}
	
	It is straightforward to check that these measurements satisfy the relations in Eq.~\eqref{sandwichapp}, and are therefore MUMs (see the attached Mathematica file ``dim5.nb''). Now we will show, using Lemma \ref{lem:CPunital}, that there does not exist a completely positive unital map $\Lambda : \cB( \bC^{10} ) \to \cB( \bC^5 )$ such that $\Lambda( P_a ) = A_a$ and $\Lambda( Q_b ) = B_b$, where $A_a$ and $B_b$ are projectors onto a pair of MUBs in dimension~5.
	
	From Ref.~\cite{Brierley}, we know that up to a global unitary transformation and the reordering of the elements of $\{B_b\}$, every pair of MUBs in dimension 5 can be written as
	\begin{equation}
		A_a = \ketbraq{a}, ~~~ B_b = F_5 A_b F_5^\dagger,
	\end{equation}
	where $F_5$ is the Fourier matrix in dimension five, defined by its elements
	\begin{equation}
		\bra{a} F_5 \ket{b} = \frac{1}{ \sqrt{5} } \omega^{ (a - 1)(b - 1) }, ~~~ a, b = 1, \ldots, 5, ~~~ \omega = \mathrm{e}^{ \frac{ 2 \pi \mathrm{i} }{5} }.
	\end{equation}
	According to Remark \ref{rem:CPunital} and Eq.~\eqref{eq:statecondition} therein, a completely positive unital map $\Lambda : \cB( \bC^{10} ) \to \cB( \bC^5 )$ such that $\Lambda( P_a ) = A_a$ and $\Lambda( Q_b ) = B_b$ does not exist, if there does not exist a state $\ket{\psi} \in \bC^{10} \otimes \bC^5$, such that
	\begin{equation}
		\bra{\psi} \sum_a P_a\tran \otimes A_a \ket{\psi} = \bra{\psi} \sum_b Q_b\tran \otimes B_b \ket{\psi} = 1.
	\end{equation}
	Note that due to the freedom in permuting the elements of $\{B_b\}$, we need to check whether a state $\ket{\psi} \in \bC^{10} \otimes \bC^5$ exists such that
	\begin{equation}\label{eq:finalcondition}
		\bra{\psi} \sum_a P_a\tran \otimes A_a \ket{\psi} = \bra{\psi} \sum_b Q_b\tran \otimes B_{ \sigma(b) } \ket{\psi} = 1
	\end{equation}
	for every permutation $\sigma \in S_5$ on the set $\{1, 2, 3, 4, 5\}$.
	
	In order to rule out the existence of such a state, first notice that the operator $\sum_a P_a\tran \otimes A_a =: \cP$ is a projection. Therefore, if a state $\ket{\psi}$ such as in Eq.~\eqref{eq:finalcondition} exists, then we have that $\cP \ket{\psi} = \ket{\psi}$, and therefore
	\begin{equation}
		\bra{\psi} \cP \sum_b Q_b\tran \otimes B_{ \sigma(b) } \cP \ket{\psi} = 1.
	\end{equation}
	However, using the fact that for any operator $M$, $\norm{M}_\infty \le \norm{M}_2$, where $\norm{\cdot}_{p}$ is the Schatten $p$-norm, we have that
	\begin{equation}
		\begin{split}
			\bra{\psi} \cP \sum_b Q_b\tran \otimes B_{ \sigma(b) } \cP \ket{\psi} & \left. \le \norm[\Big]{ \cP \sum_b Q_b\tran \otimes B_{ \sigma(b) } \cP }_\infty \le \norm[\Big]{ \cP \sum_b Q_b\tran \otimes B_{ \sigma(b) } \cP }_2 \right. \\
			& \left. = \sqrt{ \Tr \left( \cP \sum_b Q_b\tran \otimes B_{ \sigma(b) } \cP \sum_{b'} Q_{b'}\tran \otimes B_{ \sigma(b') } \cP \right) } < 1 ~~~ \forall \sigma \in S_5, \right.
		\end{split}
	\end{equation}
	which is straightforward to verify (see the attached Mathematica file ``dim5.nb'').
	
\end{proof}

Let us also provide an explicit example of a pair of 4-outcome MUMs $\{P_a\}_{a = 1}^4$ and $\{Q_b\}_{b = 1}^4$ in dimension 8, such that they are not MUB-extractable. Similarly to the previous example, we write
\begin{equation}\label{eq:P4Q41}
	\begin{split}
		& \left. P_a = \I_2 \otimes \ketbraq{a}, \right. \\
		& \left. Q_1 = \I_2 \otimes \ketbraq{v}, ~~ \ket{v} = \frac12 \sum_{j = 1}^4 \ket{j}, \right.
	\end{split}
\end{equation}
and we define $U_2$ that will transform $Q_1$ into $Q_2$:
\begin{equation}
	U_2 = \sum_{j = 1}^4 U^2_j \otimes \ketbraq{k},
\end{equation}
where
\begin{equation}
	U^2_1 = \frac{1}{ \sqrt{3} } (X + Y + Z), ~~~ U^2_2 = \frac{1}{ \sqrt{3} } (X - Y - Z), U^2_3 = \frac{1}{ \sqrt{3} } (- X + Y - Z), U^2_4 = \frac{1}{ \sqrt{3} } (- X - Y + Z).
\end{equation}
Using these unitaries, we define
\begin{equation}\label{eq:Q42}
	Q_2 = U_2 Q_1 U_2^\dagger.
\end{equation}
Next, let us define
\begin{equation}\label{eq:Q43}
	Q_3 = \frac14
	\begin{pmatrix}
		1 & 0 & 0 & -1 & -1 & 0 & 0 & 1 \\
		0 & 1 & 1 & 0 & 0 & -1 & -1 & 0 \\
		0 & 1 & 1 & 0 & 0 & -1 & -1 & 0 \\
		-1 & 0 & 0 & 1 & 1 & 0 & 0 & -1 \\
		-1 & 0 & 0 & 1 & 1 & 0 & 0 & -1 \\
		0 & -1 & -1 & 0 & 0 & 1 & 1 & 0 \\
		0 & -1 & -1 & 0 & 0 & 1 & 1 & 0 \\
		1 & 0 & 0 & -1 & -1 & 0 & 0 & 1 \\
	\end{pmatrix},
\end{equation}
and finally,
\begin{equation}\label{eq:Q44}
	Q_4 = \I_{8} - Q_1 - Q_2 - Q_3.
\end{equation}

\begin{proposition}\label{prop:d4}
	The measurements $\{P_a\}_{a = 1}^4$ and $\{Q_b\}_{b = 1}^4$ defined in Eqs.~\eqref{eq:P4Q41}, \eqref{eq:Q42}, \eqref{eq:Q43} and \eqref{eq:Q44} are mutually unbiased, but they are not MUB-extractable.
\end{proposition}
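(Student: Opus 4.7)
The plan is to mirror the proof of Proposition \ref{prop:d5}, while accounting for the richer equivalence structure of MUBs in dimension four. The first step is to verify, by direct matrix computation on $\cB(\bC^{8})$, that the operators defined in Eqs.~\eqref{eq:P4Q41}, \eqref{eq:Q42}, \eqref{eq:Q43}, \eqref{eq:Q44} satisfy the MUM relations $P_{a} = 4\,P_{a}Q_{b}P_{a}$ and $Q_{b} = 4\,Q_{b}P_{a}Q_{b}$ for every $a,b\in[4]$. Because $Q_{3}$ is given by a fixed rational matrix and $Q_{1},Q_{2}$ arise from the qubit-Pauli construction, these identities are finite algebraic checks on integer-entry matrices, best handled in a companion symbolic file. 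Positivity of the $Q_{b}$ (in particular of $Q_{4}$, which is defined by the completeness relation) is likewise a finite spectral check.

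For the non-extractability part I will apply Lemma~\ref{lem:CPunital} via Remark~\ref{rem:CPunital}. Because any candidate target MUBs $\{A_{a}\},\{B_{b}\}$ in $\bC^{4}$ can be unitarily transformed so that $A_{a}=\ketbraq{a}$ without changing the existence of a unital CP map (the unitary is absorbed into $\Lambda$), I may assume this normal form. I then invoke the classification of MUB pairs in dimension four from Ref.~\cite{Brierley}: up to a global unitary and a relabelling $\sigma\in S_{4}$ of the second basis, every such pair is of the form $B_{b}(t)=F_{4}(t)\,A_{b}\,F_{4}(t)\hc$, where $F_{4}(t)$ is the known one-parameter family of $4\times4$ complex Hadamards, with $t$ ranging over a compact interval.

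Next, for fixed $t$ and fixed $\sigma$, set $\cP=\sum_{a}P_{a}\tran\otimes A_{a}$ and $\cQ_{\sigma}(t)=\sum_{b}Q_{b}\tran\otimes B_{\sigma(b)}(t)$; both are orthogonal projections on $\bC^{8}\otimes\bC^{4}$. Following the argument in Proposition~\ref{prop:d5}, if a state $\ket{\psi}$ satisfying Eq.~\eqref{eq:finalcondition} existed then $\cP\ket{\psi}=\ket{\psi}$ and $\bra{\psi}\cP\,\cQ_{\sigma}(t)\,\cP\ket{\psi}=1$. It therefore suffices to show
\begin{equation}
\norm[\big]{\cP\,\cQ_{\sigma}(t)\,\cP}_{\infty}^{2}\le\Tr\!\big[(\cP\,\cQ_{\sigma}(t)\,\cP)^{2}\big]<1
\end{equation}
for every $t$ and every $\sigma\in S_{4}$. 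The trace on the right is a trigonometric polynomial in the parameters of $F_{4}(t)$ whose coefficients come from the explicit $8\times 8$ matrices $P_{a},Q_{b}$, so its supremum over $t$ is a closed-form expression that can be bounded by one either by elementary estimates or, as in the $d=5$ case, by symbolic computation attached to the paper.

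The main obstacle, relative to the dimension-five argument, is precisely the presence of the continuous family $F_{4}(t)$: a single matrix computation no longer suffices, and one must establish a strict upper bound uniform in $t$. I anticipate the trigonometric polynomial $\Tr[(\cP\,\cQ_{\sigma}(t)\,\cP)^{2}]$ to attain its maximum at distinguished symmetric values of $t$ (e.g.\ where $F_{4}(t)$ reduces to the Fourier matrix or a tensor product of $2\times 2$ Hadamards) and the bound to be comfortably below one there; if the margin is tight at some $\sigma$ and $t$, one can sharpen the estimate by replacing the $\ell_{2}$ bound with the operator norm computed block-diagonally using the symmetry of $\cP$. Once this uniform bound is in place across the $4!=24$ permutations, the contradiction with $\bra{\psi}\cP\,\cQ_{\sigma}(t)\,\cP\ket{\psi}=1$ rules out every putative unital CP map and establishes non-extractability.
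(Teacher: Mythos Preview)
Your proposal is correct and follows the same strategy as the paper: verify the MUM relations by direct computation, then rule out extractability via Lemma~\ref{lem:CPunital}/Remark~\ref{rem:CPunital} using the Brierley--Weigert--Bengtsson classification of $4$-dimensional MUB pairs by the one-parameter Hadamard family $F_{4}(x)$, ranging over all $\sigma\in S_{4}$.

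The one technical difference worth noting: you propose the Schatten-$2$ estimate $\Tr\big[(\cP\,\cQ_{\sigma}(t)\,\cP)^{2}\big]<1$ (as in the paper's $d=5$ argument), whereas the paper for $d=4$ instead works directly with the operator-norm condition $\norm{\cP+\cQ_{\sigma}(x)}_{\infty}<2$ and verifies it numerically on a dense grid in $x$, maximising over $\sigma$ (see Fig.~\ref{fig:ass}). The paper explicitly presents this as ``convincing numerical evidence'' rather than a closed-form bound. Your plan to treat $\Tr\big[(\cP\,\cQ_{\sigma}(t)\,\cP)^{2}\big]$ as a trigonometric polynomial in the Hadamard parameter is in principle more rigorous, but be aware that the $\ell_{2}$ bound may be too coarse here: since $\Tr[\cP\cQ_{\sigma}(t)]=2$ independently of $t$ and the top eigenvalue of $\cP\cQ_{\sigma}\cP$ can be large (the paper's plot suggests $\norm{\cP+\cQ_{\sigma}}_{\infty}$ approaches $2$ fairly closely), the Frobenius norm could exceed $1$. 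Your stated fallback to the operator norm is exactly what the paper does, so you would land in the same place.
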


\begin{proof}
	It is straightforward to check that these measurements satisfy the relations in Eq.~\eqref{sandwichapp}, and are therefore mutually unbiased (see the attached Mathematica file ``dim4.nb''). Now we will show, using Lemma \ref{lem:CPunital}, that there does not exist a completely positive unital map $\Lambda : \cB( \bC^{8} ) \to \cB( \bC^4 )$ such that $\Lambda( P_a ) = A_a$ and $\Lambda( Q_b ) = B_b$, where $A_a$ and $B_b$ are projectors onto a pair of MUBs in dimension 4.
	
	From Ref.~\cite{Brierley}, we know that up to a global unitary transformation and the reordering of the elements of $\{B_b\}$, every pair of MUBs in dimension 4 can be written as
	\begin{equation}
		A_a = \ketbraq{a}, ~~~ B_b(x) = F_4(x) A_b F_4(x)^\dagger, ~~~ x \in [0, 2\pi),
	\end{equation}
	where $F_4(x)$ is the one-parameter family of complex Hadamard matrices in dimension 4, defined as 
	\begin{equation}
		F_4(x) = \frac12
		\begin{pmatrix}
			1 & 1 & 1 & 1 \\
			1 & 1 & -1 & -1 \\
			1 & -1 & \mathrm{i} \mathrm{e}^{ \mathrm{i} x } & - \mathrm{i} \mathrm{e}^{ \mathrm{i} x } \\
			1 & -1 & - \mathrm{i} \mathrm{e}^{ \mathrm{i} x } & \mathrm{i} \mathrm{e}^{ \mathrm{i} x }
		\end{pmatrix}
		~~~ x \in [0, 2\pi).
	\end{equation}
	According to Remark \ref{rem:CPunital} and Eq.~\eqref{eq:CPnormcondition} therein, a completely positive unital map $\Lambda : \cB( \bC^{8} ) \to \cB( \bC^4 )$ such that $\Lambda( P_a ) = A_a$ and $\Lambda( Q_b ) = B_{ \sigma(b) }(x)$ does not exist, if
	\begin{equation}\label{eq:assop}
		\norm[\Big]{ \sum_a P_a\tran \otimes A_a + \sum_b Q_b\tran \otimes B_{ \sigma(b) }(x) }_\infty < 2,
	\end{equation}
	and we need to check this condition for all $\sigma \in S_4$ permutations and all values of $x \in [0, 2\pi)$. This can be done numerically up to machine precision by computing the largest eigenvalue of the above operator. Fig.~\ref{fig:ass} contains the largest eigenvalue, also maximised over $\sigma \in S_4$, for 10000 different values of $x \in [0, 2\pi)$ (also see the attached Matlab files ``dim4\_plot.m'' and ``dim4\_example\_plot'' (written for Octave) to generate the plot). It is apparent that the norm is always strictly smaller than 2, and hence the numerical evidence is convincing.
	
	\begin{figure}[h!]
		\centering
		\includegraphics[width=0.8\columnwidth]{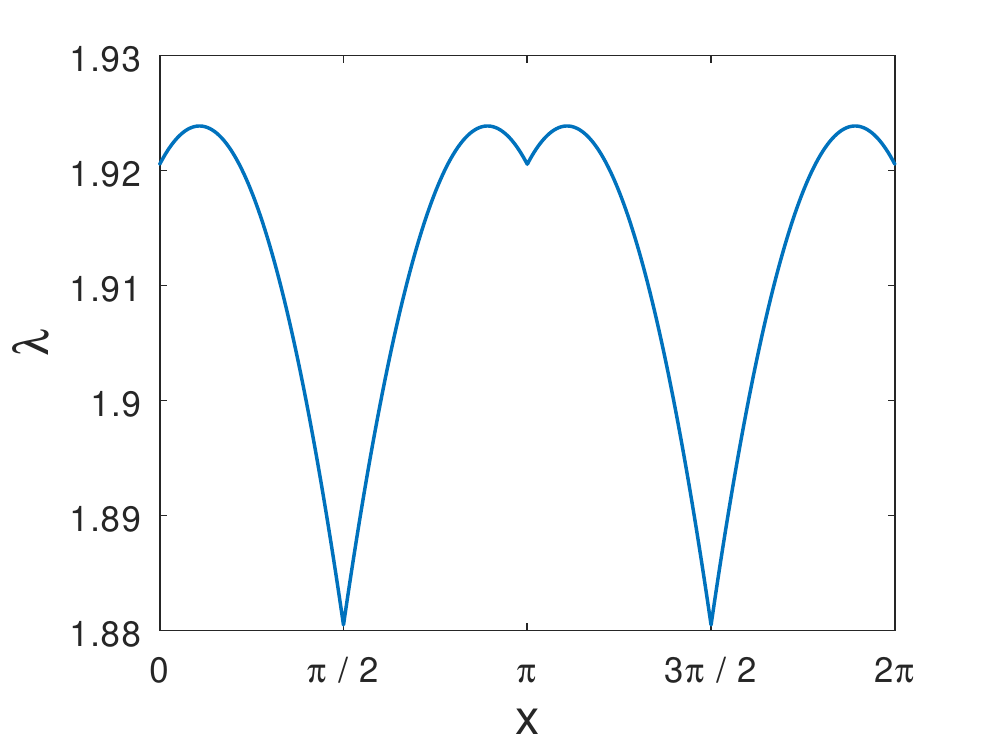}
		\caption{Maximal eigenvalue of the operator in Eq.~\eqref{eq:assop} for 10000 different values of the MUB parameter $x \in [0, 2\pi)$, each maximised over $\sigma \in S_4$.}\label{fig:ass}
	\end{figure} 
	
\end{proof}

From the above examples, it is clear that in general the set MUM is strictly larger than the set MUB$_\text{ext}$, which is in turn strictly larger than the set MUB.
Finally, let us consider a direct sum of a pair of MUBs with a pair of MUMs that cannot be written as a direct sum of MUBs. This gives a pair of measurements that is not a direct sum of MUBs, but that can be mapped to MUBs via a completely positive unital map. This shows that the set MUB$_\text{ext}$ is strictly larger than the set MUB$_\oplus$.

All the above considerations lead to the following classification of MUMs (also see Fig.~\ref{fig:class}): 

\begin{equation}
	\text{MUB} \subsetneq \text{MUB}_\oplus \subsetneq \text{MUB}_\text{ext} \subsetneq \text{MUM},
\end{equation}
\begin{figure}[h!]
	\centering
	\includegraphics[width=0.6\columnwidth]{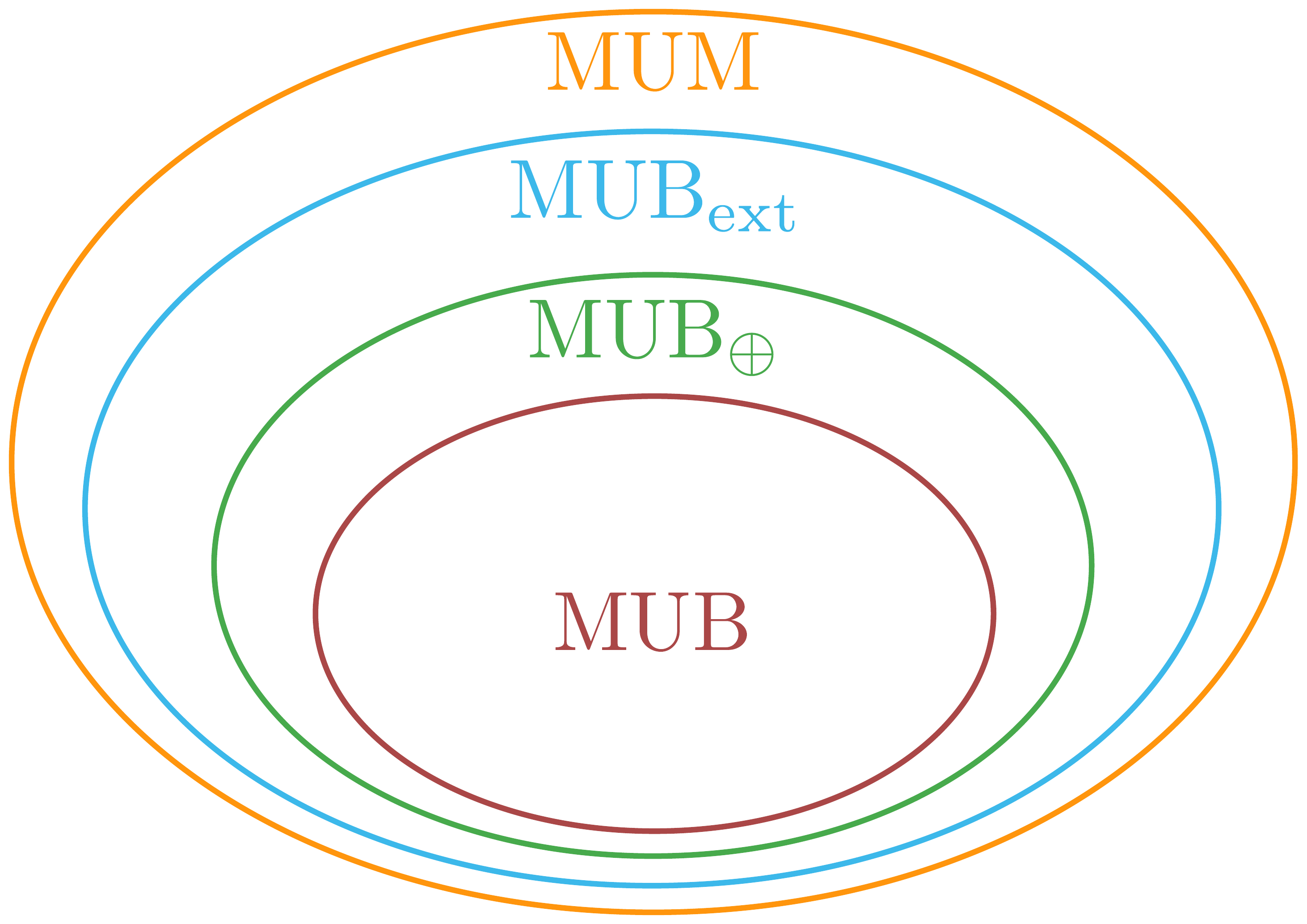}
	\caption{Classification of mutually unbiased measurements. Note that these classes partially collapse ($\textnormal{MUB}_{\oplus} = \textnormal{MUM}$) for outcome numbers 2 and 3, but they are strictly different for outcome numbers 4 and 5.}\label{fig:class}
\end{figure}
where in general all the inclusions are strict.

\subsection{Incompatibility robustness}\label{AppIncomp}
In Ref.~\cite{Designolle} five measures of incompatibility robustness have been considered. In particular these measures have been evaluated for MUBs in arbitrary dimension $d$. In this appendix we show that for measurements acting on a finite-dimensional Hilbert space the algebraic relations given in Eq.~\eqref{sandwichapp}  are sufficient to guarantee precisely the same value as for MUBs for all five measures. This implies that for the generalised incompatibility robustness MUMs are among the most incompatible pairs of measurements with $d$-outcomes.

Explaining the concept of incompatibility robustness is beyond the scope of this work, so we restrict ourselves to giving explicit constructions and arguments necessary to prove our claims. We are extensively using the language and notation introduced in Ref.~\cite{Designolle}. To prove that a pair of measurements gives a certain value of incompatibility robustness we must provide an explicit construction of a joint measurement and argue that no higher value of incompatibility robustness is possible. In the rest of this appendix we first specify a construction and finally argue that it saturates the upper bounds derived in Ref.~\cite{Designolle}. In a nutshell, the reason why the constructions proposed for MUBs work for measurements satisfying the algebraic relations given in Eq.~\eqref{sandwichapp} is the fact that in these constructions at any point we only consider a single measurement operator from the first measurement and a single operator from the second measurement. We have shown before that in this case operators satisfying the algebraic relations given in Eq.~\eqref{sandwichapp} are indistinguishable from MUBs.

What turns out to be crucial is that the MUM conditions completely determine the spectrum of the operator $P_{a} + Q_{b}$. In Appendix~\ref{app:canonical-form} we showed that if a pair of $d$-outcome MUMs exists in a Hilbert space $\cH$, then we must have $\cH \simeq \cH' \otimes \bC^{d}$ for some other Hilbert space $\cH'$ and there exists a unitary $U : \cH \to \cH' \otimes \bC^{d}$ such that
\begin{equation}
	U P_{a} U\hc = \I \otimes \ketbraq{a} \nbox{and} U Q_{1} U\hc = \I \otimes \ketbraq{v},
\end{equation}
where $\{ \ket{j} \}_{j = 1}^{d}$ is an orthonormal basis on $\bC^{d}$ and $\ket{v} = \frac{1}{\sqrt{d}} \sum_{j = 1}^{d} \ket{j}$. Clearly, computing the spectrum of $P_{a} + Q_{1}$ reduces to computing the spectrum of a rank-two operator acting on $\bC^{d}$. Finally, note that the choice of $Q_{1}$ as the projector that should take a particularly simple form after applying the unitary was arbitrary and we can easily find a unitary which achieves the same goal for any $Q_{b}$. From this we conclude that
\begin{equation}
	\label{Pa+Qb-spec}
	\spec( P_{a} + Q_{b} ) =
	\begin{cases}
		\Big\{ 1 + \frac{1}{\sqrt{2}}, 1 - \frac{1}{\sqrt{2}} \Big\} \quad \mbox{for} \quad d = 2,\\
		\Big\{ 1 + \frac{1}{\sqrt{d}}, 1 - \frac{1}{\sqrt{d}}, 0 \Big\} \quad \mbox{for} \quad d \geq 3.
	\end{cases}
\end{equation}

\textbf{Construction:} Let $\{ P_{a} \}_{a = 1}^{d}$ and $\{ Q_{b} \}_{b = 1}^{d}$ be a pair of $d$-outcome measurements acting on $\bC^{n}$ which satisfies the algebraic relations given in Eq.~\eqref{sandwichapp} (in particular, they must be projective). Consider the joint measurement given by operators $\{ G_{ab} \}_{a, b = 1}^{d}$ defined as
\begin{equation}
	G_{ab} = \frac{1}{2 (d + \sqrt{d})} \big( P_{a} + Q_{b} + \sqrt{d} \{P_{a}, Q_{b}\} \big).
\end{equation}
To verify that these are positive semidefinite note that we can explicitly compute their spectrum. Since $\{ P_{a}, Q_{b} \} = ( P_{a} + Q_{b} )^{2} - ( P_{a} + Q_{b} )$ and we already know the spectrum of $P_{a} + Q_{b}$ we immediately conclude that
\begin{equation}
	\label{eq:spec-Gab}
	\spec( G_{ab} ) = \Big\{ \frac{1}{d}, 0 \Big\}.
\end{equation}
It is easy to verify that they are also normalised, i.e.~$\sum_{ab} G_{ab} = \I$. Computing the marginals gives
\begin{equation}
	\sum_{b} G_{ab} = \frac{1}{2 (d + \sqrt{d})}\left( d P_{a} + \I + 2 \sqrt{d} P_{a} \right) = \frac{1}{2} \Big( 1 + \frac{1}{\sqrt{d} + 1} \Big) P_{a} + \frac{ \I }{ 2 (d + \sqrt{d}) }.
\end{equation}
This implies that the depolarising, random and probabilistic incompatibility robustness $\etad, \etar, \etap$ of these two measurements satisfy
\begin{equation}
	\etad, \etar, \etap \geq \frac{1}{2} \Big( 1 + \frac{1}{\sqrt{d} + 1} \Big).
\end{equation}
Moreover, using the fact that $\I \geq P_{a}$ we obtain
\begin{equation}
	\sum_{b} G_{ab} \geq \frac{ d + 2\sqrt{d} + 1}{ 2(d + \sqrt{d}) } P_{a} = \frac{1}{2} \Big( 1 + \frac{1}{ \sqrt{d} } \Big) P_{a},
\end{equation}
which implies that the generalised incompatibility robustness $\etag$ satisfies
\begin{equation}
	\etag \geq \frac{1}{2} \Big( 1 + \frac{1}{ \sqrt{d} } \Big).
\end{equation}
For the jointly-measurable incompatibility robustness we have to provide an explicit construction of a subnormalised jointly measurable noise. For $d = 2$ we choose
\begin{equation}
	\tilde{H}_{ab} = \frac{ 3 - 2 \sqrt{2} }{2} ( \I - 2 G_{ab} ),
\end{equation}
whose positivity follows immediately from Eq.~\eqref{eq:spec-Gab}. This immediately implies that
\begin{equation}
	\etajm \geq 2 ( \sqrt{2} - 1 ).
\end{equation}
For $d \geq 3$ let
\begin{equation}
	\tilde{H}_{ab} = \frac{1 - \gamma_{d}}{d (d - 2)} \Big[ \I + \frac{d}{d-1} ( P_{a} + Q_{b} ) ( P_{a} + Q_{b} - 2 \I ) \Big],
\end{equation}
where $\gamma_{d} = \frac{1}{2}( 1 + \frac{1}{\sqrt{d}} )$. To see that $\tilde{H}_{ab} \geq 0$ it is sufficient to observe that its spectrum can be computed from the spectrum of $P_{a} + Q_{b}$. A direct computation gives
\begin{equation}
	\sum_{b} \tilde{H}_{ab} = \frac{1 - \gamma_{d}}{d - 1} ( \I - P_{a} ),
\end{equation}
which implies that
\begin{equation}
	\etajm \geq \frac{1}{2} \Big( 1 + \frac{1}{ \sqrt{d} } \Big).
\end{equation}

\textbf{Upper bounds:} The upper bounds we use are collected in Table I in terms of quantities specified in Eqs.~(18) and (19) of Ref.~\cite{Designolle}, so let us first compute these quantities. Since the measurements are projective and since all the measurement operators must have the same trace we immediately deduce that $f = 2$ and $g^{d} = g^{r} = g^{p} = \frac{2}{d}$. The last two quantitities $\lambda$ and $\mu$ can be read off directly from the spectra given in Eq.~\eqref{Pa+Qb-spec}: 
\begin{align}
	\lambda &= \max_{a, b} \lmax ( P_{a} + Q_{b} ) = 1 + \frac{1}{\sqrt{d}},\\
	\gjm &= \min_{a, b} \lmin ( P_{a} + Q_{b} ) =
	\begin{cases}
		1 - \frac{1}{\sqrt{2}} \quad \mbox{for} \quad d = 2,\\
		0 \quad \mbox{for} \quad d \geq 3,
	\end{cases}
\end{align}
where $\lmax(M)$ and $\lmin(M)$ denote the maximal and minimal eigenvalues of $M$, respectively. Plugging these quantities into the upper bounds stated in Table I of Ref.~\cite{Designolle} shows that the lower bounds derived above are in fact tight.

Finally, let us show that the MUB value for the generalised incompatibility robustness coincides with the lowest value achievable by any pair of $d$-outcome measurements.

\begin{proposition}
	Given an arbitrary pair of POVMs, $\{ A_a \}_{a = 1}^{n_A}$ and $\{ B_b \}_{b = 1}^{n_B}$ acting on a separable Hilbert space, their generalised incompatibility robustness satisfies
	\begin{equation}
		\etag \ge \frac12 \left( 1 + \frac{ \sqrt{n_A} + \sqrt{n_B} + 2 }{ n_A + n_B + \sqrt{n_A} + \sqrt{n_B} } \right).
	\end{equation}
\end{proposition}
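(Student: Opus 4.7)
The plan is to exhibit, for any pair of POVMs $\{A_a\}$ and $\{B_b\}$, an explicit joint POVM $\{G_{ab}\}$ whose marginals dominate $\eta\,A_a$ and $\eta\,B_b$ for the value $\eta$ appearing on the right-hand side of the claimed inequality; the normalised residuals will then be valid noise POVMs $\tilde A_a,\tilde B_b$, establishing the desired lower bound on $\etag$. Guided by the MUM construction of Appendix II.D, I would take the ansatz
\[
G_{ab}=c\bigl(p\,A_a+q\,B_b+r\,\{A_a,B_b\}+s\,\I\bigr),
\]
with real non-negative parameters to be optimised. The crucial new ingredient relative to the MUM construction is the identity term $s\,\I$: the MUM relations $P_aQ_bP_a=P_a/d$ automatically enforce $P_a+Q_b+\sqrt{d}\{P_a,Q_b\}\ge 0$, but for generic POVMs the anticommutator $\{A_a,B_b\}$ can fail to be PSD and must be tempered by a multiple of $\I$.

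First, normalisation $\sum_{ab}G_{ab}=\I$ fixes $c^{-1}=pn_B+qn_A+2r+sn_An_B$. Second, positivity of $G_{ab}$ should be established universally: a spectral reduction to the joint support of $A_a$ and $B_b$ reduces the worst case to rank-one projectors in a two-dimensional subspace with arbitrary overlap $t=|\braket{u}{v}|^2\in[0,1]$, and analysing the resulting $2\times 2$ determinant produces the algebraic constraints $(r-p)^2\le 4rs$ and $(r-q)^2\le 4rs$. Third, a direct computation yields $\sum_b G_{ab}=c[(pn_B+2r)A_a+(q+sn_B)\I]$; the operator bound $\I\ge A_a$ then implies $\sum_b G_{ab}\ge c(pn_B+q+2r+sn_B)\,A_a=:\eta_A\,A_a$, so $\tilde A_a:=(1-\eta_A)^{-1}\bigl(\sum_b G_{ab}-\eta_A A_a\bigr)$ is a valid noise POVM, and analogously on the $B$-side one obtains $\eta_B:=c(qn_A+p+2r+sn_A)$. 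The achievable value of $\eta$ from this ansatz is therefore $\min(\eta_A,\eta_B)$.

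The remaining task, which I expect to be the main technical obstacle, is the constrained optimisation $\max\min(\eta_A,\eta_B)$ over $p,q,r,s\ge 0$ subject to the two positivity inequalities and the normalisation. Parametrising $p=r(1-\delta_A)$, $q=r(1-\delta_B)$ and saturating positivity with $s=r\max(\delta_A,\delta_B)^2/4$, the problem collapses to a finite-dimensional maximisation in $(\delta_A,\delta_B)$; I would equalise the two marginal bounds ($\eta_A=\eta_B$) and check that the optimum reproduces $\frac{1}{2}\bigl(1+(\sqrt{n_A}+\sqrt{n_B}+2)/(n_A+n_B+\sqrt{n_A}+\sqrt{n_B})\bigr)$, which in the symmetric case $n_A=n_B=d$ collapses to the MUM value $\frac{1}{2}(1+1/\sqrt{d})$ established in the previous subsection. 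Should the pure anticommutator ansatz fall short in some regime (in particular once $\max(n_A,n_B)\ge 3$ the simple ansatz may need enrichment), additional PSD atoms such as sum-of-squares terms $(A_a+\alpha B_b+\beta\,\I)^2$ with varied coefficients, or $A_aB_bA_a+B_bA_aB_b$, can be adjoined without compromising positivity and should supply the remaining degrees of freedom needed to match the claimed bound exactly.
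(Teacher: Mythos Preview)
Your overall strategy --- exhibit an explicit joint POVM and read off a universal lower bound on $\etag$ from its marginals --- is correct and matches the paper. However, there are two genuine gaps.

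\emph{First, the reduction to projective measurements is missing.} Your ``spectral reduction to rank-one projectors in a two-dimensional subspace'' is Jordan's lemma, which applies to pairs of \emph{projectors}, not to arbitrary POVM elements. The paper handles this cleanly in the first line: since $\etag$ is monotone under completely positive unital pre-processing, one may replace $\{A_a\},\{B_b\}$ by their Naimark dilations $\{P_a\},\{Q_b\}$, which are projective, and work with those. Without this step your positivity analysis is not justified.

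\emph{Second, the ansatz $G_{ab}=c(pA_a+qB_b+r\{A_a,B_b\}+s\I)$ provably falls short.} Take the symmetric case $n_A=n_B=d$ with $p=q=1$; your own positivity constraint $(r-1)^2\le 4rs$ is correct for projectors, and $\eta$ is decreasing in $s$, so one saturates $s=(r-1)^2/(4r)$. For $d=4$ the resulting one-variable maximisation gives $\eta_{\max}=\tfrac{14+5\sqrt7}{28+4\sqrt7}\approx 0.706$, strictly below the target $\tfrac12(1+1/\sqrt d)=3/4$. What you list as a contingency --- adjoining $A_aB_bA_a+B_bA_aB_b$ --- is in fact the crux. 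The paper's construction is
\[
G_{ab}=\frac{P_a+Q_b+(\sqrt{n_A}+\sqrt{n_B})\{P_a,Q_b\}+n_AP_aQ_bP_a+n_BQ_bP_aQ_b}{2(n_A+n_B+\sqrt{n_A}+\sqrt{n_B})},
\]
whose positivity follows from the sum-of-squares identity with atoms $P_a+\sqrt{n_B}\,P_aQ_b$ and $Q_b+\sqrt{n_A}\,Q_bP_a$ (no identity padding needed). The marginal bound then hinges on the inequality $n_B\sum_b Q_bP_aQ_b\ge P_a$, proved by Cauchy--Schwarz and the triangle inequality applied to $\lVert P_aQ_b\ket{\psi}\rVert$; this is the non-obvious ingredient absent from your plan.
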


\begin{proof}
	First notice that $\etag$ is monotonic under pre-processing of (i.e.~applying a completely positive unital map on) the measurement pair \cite{Designolle}. Therefore, for any pair of POVMs $\{ A_a \}_{a = 1}^{n_A}$ and $\{ B_b \}_{b = 1}^{n_B}$, their generalised robustness is lower bounded by the robustness of the pair of projective measurements $\{ P_a \}_{a = 1}^{n_A}$ and $\{ Q_b \}_{b = 1}^{n_B}$ obtained by the Naimark dilation of the original POVMs. Let us construct a joint measurement for the projective measurements,
	\begin{equation}
		\begin{split}
			G_{ab}  & \left. = \frac{1}{ 2( n_A + n_B + \sqrt{n_A} + \sqrt{n_B} ) } \left[ P_a + Q_b + (\sqrt{n_A} + \sqrt{n_B}) \{ P_a, Q_b \} + n_A P_a Q_b P_a + n_B Q_b P_a Q_b \right] \right. \\
			& \left. = \frac{1}{ 2( n_A + n_B + \sqrt{n_A} + \sqrt{n_B} ) } \left[ (P_a + \sqrt{n_B} P_a Q_b)\hc (P_a + \sqrt{n_B} P_a Q_b) + (Q_b + \sqrt{n_A} Q_b P_a)\hc (Q_b + \sqrt{n_A} Q_b P_a) \right] \ge 0. \right.
		\end{split}
	\end{equation}
	It is easy to see that $\sum_{ab} G_{ab} = \I$, and performing a partial sum over $b$ gives
	\begin{equation}
		\begin{split}
			\sum_b G_{ab} & \left. = \frac{1}{ 2( n_A + n_B + \sqrt{n_A} + \sqrt{n_B} ) } \left( n_B P_a + \I + 2( \sqrt{n_A} + \sqrt{n_B} ) P_a + n_A P_a + n_B \sum_b Q_b P_a Q_b \right) \right. \\
			& \left. \ge \frac{1}{ 2( n_A + n_B + \sqrt{n_A} + \sqrt{n_B} ) } \left[ n_A + n_B + 2( \sqrt{n_A} + \sqrt{n_B} ) + 2 \right] P_a = \frac12 \left( 1 + \frac{ \sqrt{n_A} + \sqrt{n_B} + 2 }{ n_A + n_B + \sqrt{n_A} + \sqrt{n_B} } \right) P_a,
			\right.
		\end{split}
	\end{equation}
	where the inequality follows from $\I \ge P_a$ and $n_B \sum_b Q_b P_a Q_b \ge P_a$. The latter results from the fact that for all vectors $\ket{ \psi }$
	\begin{equation}
		\begin{split}
			\bra{ \psi } n_B \sum_b Q_b P_a Q_b \ket{ \psi } & \left. = n_B \sum_b \bra{ \psi } Q_b P_a P_a Q_b \ket{\psi} = n_B \sum_b \norm[\big]{ P_a Q_b \ket{ \psi } }^2 = \sum_{b'} 1^2 \sum_b \norm[\big]{ P_a Q_b \ket{ \psi } }^2 \right. \\
			& \left. \ge \abs[\Big]{ \sum_b 1 \cdot \norm{ P_a Q_b \ket{ \psi } } }^2 \ge \abs[\Big]{ \norm{ \sum_b P_a Q_b \ket{ \psi } } }^2 = \norm{ P_a \ket{ \psi } }^2 = \bra{ \psi } P_a \ket{ \psi },
			\right.
		\end{split}
	\end{equation}
	where we used the Cauchy--Schwarz and the triangle inequalities. Similarly, it holds that
	\begin{equation}
		\sum_a G_{ab} \ge \frac12 \left( 1 + \frac{ \sqrt{n_A} + \sqrt{n_B} + 2 }{ n_A + n_B + \sqrt{n_A} + \sqrt{n_B} } \right) Q_b,
	\end{equation}
	which concludes the proof.
\end{proof}

\begin{corollary}
	An arbitrary pair of $d$-outcome MUMs is among the most incompatible $d$-outcome measurement pairs under the generalised incompatibility robustness measure.
\end{corollary}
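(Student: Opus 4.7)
The plan is to observe that the corollary follows directly by combining two ingredients that have already been established: the universal lower bound on $\etag$ proved in the Proposition just above the corollary, and the exact value of $\etag$ for MUMs computed earlier in this appendix. The qualitative point to make first is interpretive: since $\etag \in (0,1]$ with smaller values indicating more incompatibility (a more noise-robust demonstration of incompatibility), ``most incompatible $d$-outcome pair'' means saturating the smallest value of $\etag$ allowed among all $d$-outcome measurement pairs.

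First I would specialise the Proposition to the case $n_{A}=n_{B}=d$. Substituting into
\[
\etag \;\geq\; \frac{1}{2}\!\left(1 + \frac{\sqrt{n_{A}}+\sqrt{n_{B}}+2}{n_{A}+n_{B}+\sqrt{n_{A}}+\sqrt{n_{B}}}\right)
\]
yields $\etag \geq \tfrac{1}{2}\bigl(1+\tfrac{\sqrt{d}+1}{d+\sqrt{d}}\bigr)$, and the one-line simplification $\sqrt{d}+1 = \tfrac{d+\sqrt{d}}{\sqrt{d}}$ collapses this to the universal bound $\etag \geq \tfrac{1}{2}\bigl(1+\tfrac{1}{\sqrt{d}}\bigr)$ for every pair of $d$-outcome POVMs on any separable Hilbert space.

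Next I would recall that for an arbitrary pair of $d$-outcome MUMs the explicit construction given earlier in this appendix (the joint POVM $G_{ab}=\tfrac{1}{2(d+\sqrt{d})}(P_{a}+Q_{b}+\sqrt{d}\{P_{a},Q_{b}\})$, whose marginals satisfy $\sum_{b}G_{ab}\geq \tfrac{1}{2}(1+\tfrac{1}{\sqrt{d}})P_{a}$) established $\etag \geq \tfrac{1}{2}\bigl(1+\tfrac{1}{\sqrt{d}}\bigr)$, while the matching upper bound from Table~I of Ref.~\cite{Designolle}, evaluated with $\lambda = 1+\tfrac{1}{\sqrt{d}}$ coming from the spectrum in Eq.~\eqref{Pa+Qb-spec}, shows that this lower bound is saturated. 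Hence MUMs achieve exactly $\etag=\tfrac{1}{2}\bigl(1+\tfrac{1}{\sqrt{d}}\bigr)$.

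Combining the two gives the corollary immediately: the value attained by any pair of $d$-outcome MUMs coincides with the universal lower bound, so no pair of $d$-outcome measurements (projective or not, on any Hilbert space) can be strictly more incompatible than a MUM pair under the generalised incompatibility robustness. There is no real obstacle; the only subtlety worth flagging is the algebraic identity $\tfrac{\sqrt{d}+1}{d+\sqrt{d}}=\tfrac{1}{\sqrt{d}}$, which is what makes the two previously proved bounds meet exactly.
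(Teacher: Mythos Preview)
Your proposal is correct and follows exactly the approach the paper intends: the corollary is stated in the paper without proof precisely because it follows by specialising the preceding Proposition to $n_{A}=n_{B}=d$, simplifying $\tfrac{2\sqrt{d}+2}{2d+2\sqrt{d}}=\tfrac{1}{\sqrt{d}}$, and matching this universal lower bound to the exact value $\etag=\tfrac{1}{2}(1+\tfrac{1}{\sqrt{d}})$ already computed for MUMs earlier in the appendix.
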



\section{Bell inequalities for SICs}
In this appendix, we elaborate on details concerning the Bell inequalities for SICs. We derive the quantum and no-signaling bounds for the simplified Bell inequality for SICs and then proceed to derive the local and quantum bounds for the (unconstrained) Bell inequality for SICs. Then, we prove that a maximal violation of this inequality certifies that Bob's measurements are OP-SICs.

\subsection{Quantum correlations and no-signaling correlations in marginally constrained Bell scenarios for SICs}\label{AppTheorem4}

In this section we prove tight bounds on the maximal quantum and no-signaling violations of the Bell functional $\Bfunc{R}{SIC}$ defined in Eq.~\eqref{sicBell} under the marginal constraints given in Eq.~\eqref{marg}. Before we proceed let us prove the following technical result.
\begin{proposition}
	\label{prop:SDP}
	Let $\{ B_{u} \}_{u = 1}^{n}$ be a set of $n$ projectors (for $n \geq 2$) acting on a Hilbert space $\cH$ and let $\ket{\psi} \in \cH$ be a normalised vector. Then, if
	\begin{equation}
		\frac{1}{n} \sum_{u = 1}^{n} \bracket{\psi}{ B_{u} }{\psi} = q
	\end{equation}
	for some $q \in [0, 1]$, then
	\begin{equation}
		\sum_{u \neq v} \bracket{\psi}{ \{ B_{u}, B_{v} \} }{\psi} \geq f(q, n) \equiv q n ( qn - 1 ).
	\end{equation}
\end{proposition}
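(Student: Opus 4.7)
The plan is to reduce the inequality to a single application of Cauchy--Schwarz on the Hermitian operator $S \equiv \sum_{u=1}^n B_u$. The key observation is that the projector property $B_u^2 = B_u$ collapses the diagonal part of $S^2$, tying the two quantities in the statement together via the single identity
\begin{equation*}
S^2 = \sum_{u}B_u^2 + \sum_{u \neq v}B_uB_v = S + \tfrac{1}{2}\sum_{u\neq v}\{B_u,B_v\},
\end{equation*}
where I used that $\sum_{u\neq v}\{B_u,B_v\} = 2\sum_{u\neq v}B_uB_v$ by symmetrisation of the sum over ordered pairs.

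Taking expectation values in $\ket{\psi}$ and writing $T \equiv \sum_{u\neq v}\bracket{\psi}{\{B_u,B_v\}}{\psi}$ and $\sigma \equiv \bracket{\psi}{S}{\psi} = qn$, the identity becomes
\begin{equation*}
\bracket{\psi}{S^2}{\psi} = \sigma + \tfrac{1}{2}T.
\end{equation*}
Now I apply the Cauchy--Schwarz inequality in the form $\bracket{\psi}{S^2}{\psi} = \norm{S\ket{\psi}}^2 \geq \abs{\bracket{\psi}{S}{\psi}}^2 = \sigma^2$, valid because $S$ is Hermitian and $\ket{\psi}$ is normalised. Substituting $\sigma = qn$ and rearranging yields
\begin{equation*}
T \geq 2(\sigma^2 - \sigma) = 2qn(qn-1),
\end{equation*}
which in particular implies the claimed bound $T \geq qn(qn-1) = f(q,n)$.

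There is no real obstacle here: once one recognises that $S^2 - S$ encodes exactly the off-diagonal anticommutator sum for projectors, the bound drops out of the standard operator Cauchy--Schwarz in one line. The only subtlety to watch is the factor of two arising from summing over ordered versus unordered pairs, and checking that the projectivity of each $B_u$ (and not merely positivity) is what makes the identity $S^2 = S + \tfrac{1}{2}\sum_{u\neq v}\{B_u,B_v\}$ hold; without projectivity one would only have the inequality $B_u^2 \leq B_u$ which would weaken the argument. Finally, equality in Cauchy--Schwarz corresponds to $\ket{\psi}$ being an eigenvector of $S$ with eigenvalue $qn$, which matches the extremal configurations (e.g.\ $\ket{\psi}$ in the support of a single $B_u$ gives $qn = 1$ and $T = 0$).
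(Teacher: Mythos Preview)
Your proof is correct and considerably more direct than the paper's. The paper first argues by permutation symmetry that one may assume all single expectations $\bracket{\psi}{B_u}{\psi}$ and all pair expectations $\bracket{\psi}{\{B_u,B_v\}}{\psi}$ are equal, then builds the $(n+1)\times(n+1)$ moment matrix $\Gamma_{ij}=\bracket{\psi}{\mathcal S_i\mathcal S_j}{\psi}$ with $\mathcal S=\{\I,B_1,\ldots,B_n\}$, and extracts the bound from the Schur complement of the symmetrised $\Gamma$. Your route bypasses both the symmetrisation and the moment-matrix machinery entirely: the single observation $S^2=S+\sum_{u\neq v}B_uB_v$ for projectors, together with the variance inequality $\bracket{\psi}{S^2}{\psi}\geq\bracket{\psi}{S}{\psi}^2$, is all that is needed. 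In fact you obtain $T\geq 2qn(qn-1)$, which is exactly what the paper's Schur-complement computation also yields before stating the proposition; the factor-of-two discrepancy with the displayed bound $f(q,n)=qn(qn-1)$ is a convention issue in the paper (the application downstream sums over \emph{unordered} pairs $x_1<x_2$, where your bound becomes $qn(qn-1)$ and is tight). What the moment-matrix approach buys is a template that generalises to higher-degree constraints, whereas your argument is the minimal one for this specific quadratic bound.
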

\begin{proof}
	First note that since all the quantities appearing in the problem are invariant under permutations of $\{ B_{u} \}_{u = 1}^{n}$, for the purpose of deriving bounds we can restrict our analysis to the uniform case in which for every $u$ we have
	\begin{equation}
		\bracket{\psi}{B_{u}}{\psi} = q
	\end{equation}
	and for every pair $u \neq v$ we have
	\begin{equation}
		\bracket{\psi}{ \{ B_{u}, B_{v} \} }{\psi} = 2 t
	\end{equation}
	for some real number $t$. Now consider a moment matrix $\Gamma$ defined as $\Gamma_{ij}=\bracket{\psi}{\mathcal{S}_i\mathcal{S}_j}{\psi}$ where the list of monomials reads $\mathcal{S}=\{\I,B_1,B_2,\ldots, B_{n}\}$. Clearly, $\Gamma$ is positive semidefinite but it might have complex entries, so it is more convenient to consider its symmetrised version $\Gamma_{\textnormal{sym}} = ( \Gamma + \Gamma\tran )/2$, which reads
	\begin{equation}
		\Gamma_{\textnormal{sym}} = \begin{pmatrix}
			1 & q & q & q & \ldots & q\\
			q & q & t & t & \ldots & t\\
			q & t & q & t & \ldots & t\\
			q & t & t & q& \ldots & t \\
			\vdots & \vdots & \vdots & \ddots & \ddots & \vdots\\
			q & t & t & t & \ldots & q
		\end{pmatrix}.
	\end{equation}
	To take advantage of the positivity of $\Gamma_{\textnormal{sym}}$ we take advantage of the Schur-complement condition. Consider a real Hermitian block matrix
	\begin{equation}
		X = \begin{pmatrix}
			A & B\\
			B\tran & C
		\end{pmatrix},
	\end{equation}
	where $A > 0$ is positive definite. Then, the Schur-complement condition states that $X \geq 0$ if and only if $C - B\tran A^{-1} B \geq 0$. In our case we take the first row/column against the remaining $n$ rows/columns, which means that $\Gamma_{\textnormal{sym}} \geq 0$ is equivalent to
	\begin{equation}
		C - B\tran A^{-1} B = n ( t - q^{2} ) \ketbra{v}{v} + (q - t) \I,
	\end{equation}
	where $\ket{v} = \frac{1}{\sqrt{n}} \sum_{j = 1}^{n} \ket{j}$  is the uniform vector. The positivity of this operator implies that
	\begin{equation}
		n ( t - q^{2} ) + (q - t) \geq 0,
	\end{equation}
	and therefore
	\begin{equation}
		t \geq \frac{ q ( qn - 1 ) }{n - 1}.
	\end{equation}
	Performing the summation over $u \neq v$ leads to the final statement of the proposition.
\end{proof}

\subsubsection{The quantum bound}
We can without loss of generality assume that the optimal shared state is pure and denote it by $\ket{\psi}$. Similarly, we can without loss of generality take the optimal measurements of both Alice and Bob to be projective and denote the complete sets of projectors by $\{A_{x}^a\}$ and $\{B_y^b\}$ respectively.  Hence, in quantum theory, we have
\begin{equation}
	\mathcal{R}_d^\text{SIC}=\sum_{x_{1} < x_{2}}\bracket{\psi}{A_{x}\otimes \left(B_{x_{1}}-B_{x_{2}}\right)}{\psi},
\end{equation}
where we have introduced the observable $A_{x}=A_{x}^1-A_{x}^2$ for Alice and recall that $x = x_{1} x_{2}$. For simplicity let us define the vectors
\begin{align}
	& \ket{\alpha_{x}}=A_{x}\otimes\I \ket{\psi}\\
	& \ket{\beta_{x}}=\I \otimes\left(B_{x_{1}}-B_{x_{2}}\right) \ket{\psi}.
\end{align}
This allows us to write the Bell functional in a simple form which obeys a straightforward upper bound via the Cauchy--Schwarz inequality
\begin{equation}\label{step}
	\mathcal{R}_d^\text{SIC}=\sum_{x_{1} < x_{2}} \braket{\alpha_{x}}{\beta_{x}}\leq \sum_{x_{1} < x_{2}} |\braket{\alpha_{x}}{\beta_{x}}|\leq  \sum_{x_{1} < x_{2}} \sqrt{\braket{\alpha_{x}}{\alpha_{x}}} \sqrt{\braket{\beta_{x}}{\beta_{x}}}.
\end{equation}
To proceed further, we note that the marginal constraints \eqref{marg} in quantum theory read
\begin{align}\label{qmarg1}
	\forall x: \qquad &\bracket{\psi}{\left(A_{x}^1+A_{x}^2\right)\otimes \I}{\psi}=\frac{2}{d},\\
	\label{qmarg2}
	\forall y: \qquad &\bracket{\psi}{\I\otimes B_{y}}{\psi}=\frac{1}{d}.
\end{align}
In what follows, we will for simplicity write marginal expressions  $\bracket{\psi}{A\otimes \I }{\psi}=\bracket{\psi}{A}{\psi}$ and similarly for marginal expressions on Bob's side. 

One straightforwardly sees that for projective measurements the inner product $\braket{\alpha_{x}}{\alpha_{x}}$ is identical to the left-hand side of the marginal constraint~\eqref{qmarg1}, whereas the inner product $\braket{\beta_{x}}{\beta_{x}}$ features two terms which are determined by the marginal constraint~\eqref{qmarg2}. We therefore have
\begin{align}
	&\braket{\alpha_{x}}{\alpha_{x}}=\frac{2}{d},\\
	& \braket{\beta_{x}}{\beta_{x}}=\frac{2}{d}-\bracket{\psi}{\{B_{x_{1}},B_{x_{2}}\}}{\psi}.
\end{align} 
Inserting this into Eq.~\eqref{step} we obtain
\begin{equation}\label{step1}
	\mathcal{R}_d^\text{SIC}\leq \frac{\sqrt{2}}{d} \sum_{x_{1} < x_{2}}\sqrt{2 -d \bracket{\psi}{\{B_{x_{1}},B_{x_{2}}\}}{\psi}}.
\end{equation}
Concavity of the square root states that for all $q_i \geq 0$ we have $\frac{1}{n} \sum_{i=1}^{n} \sqrt{q_i} \leq \sqrt{ \frac{1}{n} \sum_{i=1}^n q_i}$ with equality if and only if all $q_i$ are equal. Applying this to Eq.~\eqref{step1} we obtain
\begin{equation}\label{stepbound}
	\mathcal{R}_d^\text{SIC} \leq \frac{\sqrt{2}}{d} \sqrt{2 N^2  - N d \sum_{x_{1} < x_{2}}\bracket{\psi}{\{B_{x_{1}},B_{x_{2}}\}}{\psi}},
\end{equation}
where $N = \binom{d^2}{2}$ is the number of terms in the sum. To bound the sum under the square root we use Proposition~\ref{prop:SDP} which implies that
\begin{equation}
	\sum_{x_{1} < x_{2}} \bracket{\psi}{\{B_{x_{1}},B_{x_{2}}\}}{\psi} \geq f( 1/d, d^{2} ) = d ( d - 1 ).
\end{equation}
Plugging this bound into Eq.~\eqref{stepbound} implies
\begin{equation}
	\Bfunc{R}{SIC} \maxval{Q} d (d - 1) \sqrt{ d ( d + 1 ) },
\end{equation}
which is precisely the result stated in the main text.
\subsubsection{The no-signaling bound}
In order to find the no-signaling bound of $\Bfunc{R}{SIC}$, subject to the constraints \eqref{marg}, we first derive its algebraically maximal value and then saturate this value with an explicit no-signaling model.  Consider a specific choice of inputs $x$ and $y$ such that $y=x_{1}$. Since Alice has three possible outcomes and Bob has two possible outcomes, we have  six variables $p_{ab}=p\left(a,b|x,y=x_{1}\right)$. They must obey positivity, normalisation and the marginal constraints of Alice and Bob, i.e.~
\begin{align}\nonumber
	& p_{ab}\geq 0,\\\nonumber
	& \sum_{a,b}p_{ab}=1,\\\nonumber
	& p_{11}+p_{1\perp}+p_{21}+p_{2\perp}=\frac{2}{d},\\
	&  p_{11}+p_{21}+p_{\perp 1}=\frac{1}{d}.
\end{align}
Note that the probabilities of winning and losing correspond to $p_{11}$ and $p_{21}$, respectively. The maximal contribution to $\Bfunc{R}{SIC}$ is made by maximising $p_{11}-p_{21}$ under the above constraints. This means $p_{11}=1/d$ and $p_{21}=0$. An analogous argument for the inputs $x$ and $y = x_{2}$ leads to $p_{11}=0$ and $p_{21}=1/d$.  Thus,
\begin{equation}
	\Bfunc{R}{SIC}=\sum_{x_{1} < x_{2}}\left(\frac{1}{d}+\frac{1}{d}\right)=d\left(d^2-1\right).
\end{equation}
To see that this algebraic bound can be saturated with a no-signaling model, let us again consider the event in which the inputs are $x$ and $y=x_{1}$. Then we choose $p=(p_{ab})_{ab}$ as
\begin{equation}
	p=\frac{1}{d}
	\begin{pmatrix}
		1 & 0\\
		0 & 1\\
		0 & d-2
	\end{pmatrix}.
\end{equation}
When the inputs are  $x$ and $y=x_{2}$ we instead choose 
\begin{equation}
	p=\frac{1}{d}
	\begin{pmatrix}
		0 & 1\\
		1 & 0\\
		0 & d-2
	\end{pmatrix}.
\end{equation}
For the input combinations that do not contribute to $\Bfunc{R}{SIC}$, i.e.~when Alice receives $x$ and Bob receives $y\notin \{x_1,x_2\}$, we choose
\begin{equation}
	p=\frac{1}{2d}
	\begin{pmatrix}
		1 & 1\\
		1 & 1\\
		0 & 2(d-2)
	\end{pmatrix}.
\end{equation}
It is straightforward to check that these distributions satisfy the marginal constraints~\eqref{marg}, the no-signaling constraints and saturate the algebraic bound.
\subsection{Bell inequalities for SICs (proof of Theorem~\ref{sictheorem})}\label{AppTheorem5}
We first prove the quantum bound and then proceed to prove the classical bound.

\subsubsection{Quantum bound}
\label{app:S-SIC-q-val}

We can recycle our earlier maximisation of $\Bfunc{R}{SIC}$, but this time without the marginal constraints \eqref{marg}. Following the calculation in Eq.~\eqref{step} one arrives at 
\begin{equation}
	\label{eq:SIC-CS-ineq}
	\Bfunc{R}{SIC}\leq \sum_{x_{1} < x_{2}} \sqrt{\bracket{\psi}{A_{x}^1+A_{x}^2}{\psi}}  \sqrt{\bracket{\psi}{B_{x_{1}}+B_{x_{2}}}{\psi}-\bracket{\psi}{\{B_{x_{1}},B_{x_{2}}\}}{\psi}}.
\end{equation}
Using the Cauchy--Schwarz inequality, we obtain the upper bound
\begin{equation}\label{stepstep}
	\Bfunc{R}{SIC}\leq \sqrt{s}\Big(r-\sum_{x_{1} < x_{2}}\bracket{\psi}{\{B_{x_{1}},B_{x_{2}}\}}{\psi}\Big)^{1/2},
\end{equation}
where we have defined
\begin{equation}
	\label{sr}
	\begin{aligned}
		& s \equiv \sum_{x_{1} < x_{2}}\bracket{\psi}{A_{x}^1+A_{x}^2}{\psi},\\
		& r \equiv \sum_{x_{1} < x_{2}}\bracket{\psi}{B_{x_{1}}+B_{x_{2}}}{\psi}.
	\end{aligned}
\end{equation}
An upper bound on the second factor in Eq.~\eqref{stepstep} can be obtained using Proposition~\ref{prop:SDP}. Since
\begin{equation}
	r = \sum_{x_{1} < x_{2}}\bracket{\psi}{\left(B_{x_{1}}+B_{x_{2}}\right)}{\psi} = ( d^{2} - 1 ) \sum_{x_{1}} \bracket{\psi}{ B_{x_{1}} }{\psi},
\end{equation}
we arrive at
\begin{equation}
	\Bfunc{R}{SIC} \leq \sqrt{s} \sqrt{ r - f \bigg( \frac{r}{d^{2}(d^{2} - 1) }, d^{2} \bigg) } = \frac{ \sqrt{ s r \big[ d^{2} (d^{2} - 1) - r \big] } }{d^{2} - 1}.
\end{equation}
Returning to our Bell functional of interest, namely $\Bfunc{S}{SIC}$,  we have
\begin{equation}\label{witnessbound}
	\Bfunc{S}{SIC}\leq \frac{ \sqrt{ s r \big[ d^{2} (d^{2} - 1) - r \big] } }{d^{2} - 1} - \alpha_d s-\frac{\beta_d}{d^2-1} r \equiv F(r, s),
\end{equation}
where the coefficients are specified by
\begin{align}
	& \alpha_d=\frac{1-\delta_{d,2}}{2}\sqrt{\frac{d}{d+1}} & \beta_d=\frac{d-2}{2}\sqrt{d(d+1)}.
\end{align}
Our goal now is to find the maximise $F(r, s)$ over $r \in [0, d^{2} ( d^{2} - 1 )]$ and $s \in [0, d^{2} ( d^{2} - 1 )/2]$. For $d = 2$ the marginal terms vanish and we immediately see that the optimal values are given by $r_{\textnormal{opt}} = s_{\textnormal{opt}} = 6$ and $F(r_{\textnormal{opt}}, s_{\textnormal{opt}}) = 2 \sqrt{6}$. For $d \geq 3$ we relax the problem and maximise over $r, s \geq 0$. Standard differentiation techniques yield
\begin{equation}
	\label{eq:sopt-ropt}
	\left\{\frac{\partial F}{\partial r}\stackrel{!}{=}0 \quad\text{and} \quad  \frac{\partial F}{\partial s}\stackrel{!}{=}0\right\} \Rightarrow r_\text{opt}=s_\text{opt}=d\left(d^2-1\right),
\end{equation}
which falls inside the optimisation region. It is easy to check that
\begin{equation}
	F( r_{\textnormal{opt}}, s_{\textnormal{opt}} ) = \frac{d}{2} \sqrt{d\left(d+1\right)}.
\end{equation}
To check that no higher value can be achieved at the boundary note that whenever either $r = 0$ or $s = 0$ we necessarily have $F(r, s) \leq 0$. Plugging this value into Eq.~\eqref{witnessbound} combined with the bound derived for $d = 2$ gives
\begin{equation}
	\label{eq:S-SIC-q-val}
	\Bfunc{S}{SIC} \maxval{Q} \frac{d+2\delta_{d,2}}{2}\sqrt{d\left(d+1\right)}.
\end{equation}
As stated in the main text this bound can be saturated if there exists a set of SIC projectors in dimension $d$.
\subsubsection{The local bound}
Here we derive the local bound of $\Bfunc{S}{SIC}$. To this end, it is sufficient to consider all deterministic strategies of Alice and Bob. We focus on the deterministic strategy of Bob. We can represent his strategy as a list $\vec{b}\in\{1,\perp\}^{d^2}$, where the $y$-th element encodes the output associated to measurement setting $y$. Hence, Bob has $2^{(d^2)}$ possible deterministic strategies. However, due to the symmetries of the Bell functional, the maximal local value of $\Bfunc{S}{SIC}$ achievable for a given strategy of Bob is the same as that achievable when his strategy $\vec{b}$ is permuted into the strategy $\vec{b}'=(1,\ldots,1,\perp,\ldots,\perp)$. In other words, only the number of ``$1$s`` in $\vec{b}$ is relevant for the maximal value of $\Bfunc{S}{SIC}$. This can be seen from the fact that the Bell functional is invariant under a permutation of Bob's $d^2$ inputs (provided analogous permutations of Alice's input tuple $x$ and output $a$). Thus, we need only to consider strategies of Bob in which the $N$ first elements of $\vec{b}$ are ones, and the remaining $d^2-N$ elements are $\perp$. Finding the maximum over $N \in \{0, 1, \ldots, d^{2} \}$ gives the tight local bound.

This allows us to write the Bell functional in a local model as 
\begin{equation}
	\Bfunc{S}{SIC}=\sum_{x_{1} < x_{2}}\left(\bracket{\psi}{A_{x}}{\psi}\left[\delta(x_1\leq N)-\delta(x_2\leq N)\right]-\frac{\beta_d}{d^2-1}\left[\delta(x_1\leq N)+\delta(x_2\leq N)\right]-\alpha_d\bracket{\psi}{A_{x}^1+A_{x}^2}{\psi}\right),
\end{equation}
where $\delta(j\leq k)$ is equal to one if $j\leq k$ and otherwise equal to zero. Let us now define the following sets
\begin{equation}
	T_k^{(N)}\equiv \{(x_1,x_2)\in \text{Pairs}(d^2)| \delta(x_1\leq N)+\delta(x_2\leq N)=k\},
\end{equation}
for $k\in\{0,1,2\}$ and $N\in\{0, 1, \ldots,d^2\}$. Clearly, the sets $\{T_0^{(N)},T_1^{(N)},T_2^{(N)}\}$ constitute a partitioning of the set $\text{Pairs}(d^2)$. Hence
\begin{equation}
	\Bfunc{S}{SIC}=\sum_{k=0}^{2} \sum_{(x_1,x_2)\in T_k^{(N)}}\left(\bracket{\psi}{A_{x}}{\psi}\left[\delta(x_1\leq N)-\delta(x_2\leq N)\right]-\frac{k\beta_d}{d^2-1}-\alpha_d\bracket{\psi}{A_{x}^1+A_{x}^2}{\psi}\right).
\end{equation}
Notice that if $(x_1,x_2)\in T_1^{(N)}$, then it must hold that $\delta(x_1\leq N)-\delta(x_2\leq N)=1$. Also, when $(x_1,x_2)\in T_0^{(N)}$ or $(x_1,x_2)\in T_2^{(N)}$, then $\delta(x_1\leq N)-\delta(x_2\leq N)=0$. Thus, we can further simplify the above expression to
\begin{equation}
	\Bfunc{S}{SIC}=\sum_{k=0}^{2} \sum_{(x_1,x_2)\in T_k^{(N)}}\left(\bracket{\psi}{A_{x}}{\psi}\delta_{k,1}-\frac{k\beta_d}{d^2-1}-\alpha_d\bracket{\psi}{A_{x}^1+A_{x}^2}{\psi}\right).
\end{equation}
Our goal now is to optimise over the measurement operators of Alice, which only involves the first and third term, and let us consider the three values of $k$ separately. If $k = 0$ or $k = 2$ this contribution is never positive, so it is optimal to choose $A^1_{x}=A^2_{x}=0$ and $A^{\perp}_{x}=\I$. If $k = 1$ it is beneficial to choose $A^1_{x}=\I$ and $A^2_{x}=A^{\perp}_{x}=0$ because $1 - \alpha_{d} > 0$. Plugging in the optimal choice of measurements of Alice gives
\begin{equation}
	\Bfunc{S}{SIC}=|T_1^{(N)}|\left(1-\alpha_d-\frac{\beta_d}{d^2-1}\right)-|T_2^{(N)}|\frac{2\beta_d}{d^2-1}.
\end{equation}
The size of $T_1^{(N)}$ and $T_2^{(N)}$ are easily determined as follows. The number of ways of choosing a positive integer no larger than $N$ is $N$, and the number of ways of choosing a positive integer larger than $N$ but no larger than $d^2$ is $d^2-N$. Hence, $|T_1^{(N)}|=N(d^2-N)$. Similarly, the number of ways of choosing two distinct positive integers no larger than $N$ is $N(N-1)/2$. Hence, $|T_2^{(N)}|=N(N-1)/2$. This leaves us with
\begin{equation}\label{cl}
	\begin{aligned}
		\Bfunc{S}{SIC} &= - N^{2} ( 1 - \alpha_{d} ) + N \big[ d^{2} ( 1 - \alpha_{d} ) - \beta_{d} \big]\\
		&= \frac{N}{2\sqrt{d+1}}\left(d\sqrt{d}-2d^2\sqrt{d}+\sqrt{d+1}\left(2d^2-2N\right)+\sqrt{d}\left(N+2\right)\right),
	\end{aligned}
\end{equation}
where we have taken $d\geq 3$. Now we must only find the optimal value of $N\in\{0,1,\ldots, d^2\}$. To this end, we differentiate the right-hand side with respect to $N$ and find that the maximum is achieved for
\begin{equation}
	N = \frac{\sqrt{d}\left(2+d-2d^2+2d\sqrt{d(d+1)}\right)}{4\sqrt{d+1}-2\sqrt{d}}.
\end{equation}
In general, this is evidently not an integer. Hence, to find the optimal integer choice of $N$, we first show that $N\leq d$. Simple manipulations allow us to write the statement $N \leq d$ as
\begin{equation}
	2d^2-3d-2\geq 2(d-2)\sqrt{d(d+1)}.
\end{equation}
Squaring both sides and simplifying reduces this to $(d-2)^2\geq 0$, which is evidently true. Next, we show that $N\geq d-1$. Again, simple manipulations allow us to write the statement $N\geq d-1$ as
\begin{equation}
	2d^2-3d\leq (2d^2-4d+4)\sqrt{\frac{d+1}{d}}.
\end{equation}
Squaring both sides, multiplying by $d$ and simplifying leads to
\begin{equation}
	0\leq 7d^3-16d+16,
\end{equation}
which is true for all $d \geq 0$. Thus we conclude that $d-1\leq N\leq d$. To show that $N = d$ constitutes a better solution that $N = d - 1$ we show that $\Bfunc{S}{SIC}(N=d)-\Bfunc{S}{SIC}(N=d-1)\geq 0$. Thanks to Eq.~\eqref{cl} this reduces to showing that
\begin{equation}
	3d^3+2d^2-13d+4\geq 0,
\end{equation}
which is true for all $d \geq 2$. Thus, we conclude that the optimal choice is $N=d$. Inserting $N=d$ in Eq.~\eqref{cl} returns the local bound in Eq.~\eqref{SIClocal}. 

Notably, the case of $d=2$ can be obtained by analogously following the above procedure. However, the corresponding Bell scenario is of sufficiently small scale so that the classical bound is arguably even easier obtained by brute-force consideration of all deterministic strategies.

\subsection{Device-independent certification for SICs (proof of Theorem~\ref{thmDIsic})}\label{AppDIsic}
In this appendix we provide a proof of Theorem~\ref{thmDIsic}, which essentially reduces to deriving explicit conditions under which the argument presented in the first part of this appendix is tight and combining them to produce the desired form. Note that the proof technique is quite similar to the one used in Appendix~\ref{app:meas-Bob}.

In Appendix~\ref{app:S-SIC-q-val} we have shown how to derive a tight bound on the maximal quantum value of the Bell functional $\Bfunc{S}{SIC}$ (see Eq.~\eqref{eq:S-SIC-q-val}). In the first step we applied the Cauchy--Schwarz inequality to obtain Eq.~\eqref{eq:SIC-CS-ineq}. This can only be tight if the vectors are aligned, i.e.~for every $x$ we have
\begin{equation}
	\label{eq:cross-mux}
	A_{x} \ket{\psi} = \mu_{x} (B_{x_{1}} - B_{x_{2}}) \ket{\psi}
\end{equation}
for some positive real number $\mu_{x}$. Moreover, here we have used the fact that for all $y$ we have $\bracket{\psi}{B_{y}^{2}}{\psi} \leq \bracket{\psi}{B_{y}}{\psi}$. If the marginal state of Bob is full-rank, then this bound is tight only when all the measurement operators $B_{y}$ are projective. The next use of Cauchy--Schwarz inequality, which leads to Eq.~\eqref{stepstep}, can only give a tight result if for all $x$ we have
\begin{equation}
	\label{eq:cross-v}
	\bracket{\psi}{ A_{x}^1+A_{x}^2 }{\psi} = \nu \big( \bracket{\psi}{ B_{x_{1}}+B_{x_{2}} }{\psi}-\bracket{\psi}{\{B_{x_{1}},B_{x_{2}}\}}{\psi} \big)
\end{equation}
for some fixed real positive number $\nu$. In fact, this number can be computed by summing over $x$. Then, if we recall that the maximal value can only be achieved when the quantities $r$ and $s$ defined in Eq.~\eqref{sr} take the values given in Eq.~\eqref{eq:sopt-ropt} we immediately conclude that $\nu = (d + 1)/d$. However, it is clear by comparing Eq.~\eqref{eq:cross-mux} and Eq.~\eqref{eq:cross-v} that $\mu_{x}^{2} = \nu$, which implies that for all $x$ we have
\begin{equation}
	\label{eq:cross-relation}
	A_{x} \ket{\psi} = \sqrt{ \frac{d + 1}{d} } (B_{x_{1}} - B_{x_{2}}) \ket{\psi}.
\end{equation}
Since we assume that the measurements of Alice are projective, we have $(A_{x})^{3} = A_{x}$. Applying this to Eq.~\eqref{eq:cross-relation}, tracing out Alice's system and right-multiplying by the marginal of Bob, we conclude that for all $x_{1} < x_{2}$ we have
\begin{equation}
	\label{eq:operator-relation}
	B_{x_{1}} - B_{x_{2}} - (d + 1) (B_{x_{1}} B_{x_{2}} B_{x_{1}} - B_{x_{2}} B_{x_{1}} B_{x_{2}} ) = 0.
\end{equation}
In the following we will use $x_{1}$ and $x_{2}$ as summation indices for double sums over $x_{1} < x_{2}$ and $y$ for single sums over $[d^{2}]$.
The optimality condition derived in Eq.~\eqref{eq:sopt-ropt} implies that
\begin{equation}
	\sum_{y} \bracket{\psi}{ B_{y} }{\psi} = \frac{1}{d^{2} - 1} \sum_{x_{1} < x_{2}} \bracket{\psi}{ B_{x_{1}}+B_{x_{2}} }{\psi} = d
\end{equation}
and
\begin{equation}
	\sum_{x_{1} < x_{2}}\bracket{\psi}{\{ B_{x_{1}}, B_{x_{2}} \}}{\psi} = d (d - 1).
\end{equation}
Define
\begin{equation}
	K \equiv \sum_{y} B_{y}
\end{equation}
and note that
\begin{equation}
	\Tr (K \rho_{B}) = d.
\end{equation}
Moreover, since
\begin{equation}
	K^{2} = \sum_{y} B_{y} + \sum_{x_{1} < x_{2}} \{ B_{x_{1}}, B_{x_{2}} \},
\end{equation}
we have
\begin{equation}
	\Tr (K^{2} \rho_{B}) = d + d (d - 1) = d^{2}.
\end{equation}
This means that the Cauchy--Schwarz
\begin{equation}
	\abs{ \Tr (X\hc Y) }^{2} \leq \Tr( X\hc X ) \Tr( Y\hc Y )
\end{equation}
for $X = K \rho_{B}^{1/2}$ and $Y = \rho_{B}^{1/2}$ holds with equality, which implies that
\begin{equation}
	K \rho_{B}^{1/2} = z \rho_{B}^{1/2}
\end{equation}
for some complex number $z$ such that $\abs{z} = d$. Right-multiplying by $\rho_{B}^{-1/2}$ (again we take advantage of the full-rank assumption) leads to $K = z \, \I$. Now it is clear that $z$ must in fact be real and positive and so $z = d$ and $K = d \, \I$. This is the first result of Theorem~\ref{thmDIsic}.

To prove the second part we go back to Eq.~\eqref{eq:operator-relation}. Since the operators $\{ B_{y} \}_{y = 1}^{d^{2}}$ are projectors, we can use Jordan's lemma which states that any pair of projectors can be simultaneously block-diagonalised such that the blocks are of size either $1 \times 1$ or $2 \times 2$. Moreover, it is known that the non-trivial blocks (up to a unitary rotation) form a 1-parameter family. By applying the relation given in Eq.~\eqref{eq:operator-relation} to a particular pair of projectors $B_{x_{1}}$ and $B_{x_{2}}$ for $x_{1} < x_{2}$, we conclude that the Hilbert space of Bob decomposes as $\cH_{B} \simeq \cH_{B_{1}} \oplus \cH_{B_{2}} \oplus \cH_{B_{3}}$ and the operators read
\begin{align}
	\label{eq:Bu}
	B_{x_{1}} &= ( P \otimes \I ) \oplus 0 \oplus \I,\\
	\label{eq:Bv}
	B_{x_{2}} &= ( Q \otimes \I ) \oplus 0 \oplus \I,
\end{align}
where
\begin{align}
	P &= \frac{1}{2} ( \I + \cos \theta \, Z + \sin \theta \, X ),\\
	Q &= \frac{1}{2} ( \I + \cos \theta \, Z - \sin \theta \, X )
\end{align}
for $\cos \theta = 1/\sqrt{d + 1}$, $\sin \theta = \sqrt{d/(d + 1)}$ and $X$ and $Z$ are the $2 \times 2$ Pauli matrices. Note that at this point we know nothing about the dimensions of $\cH_{B_{1}}, \cH_{B_{2}}$ and $\cH_{B_{3}}$, but we clearly see that $B_{x_{1}}$ and $B_{x_{2}}$ are isomorphic (in particular, either $\Tr B_{x_{1}}$ and $\Tr B_{x_{2}}$ are both finite and equal to each other or they are both infinite). Moreover, note that for any pair $x_{1} < x_{2}$ the space decomposes into three subspaces, but these subspaces depend on the specific pair, i.e.~the subspace $\cH_{B_{1}}$ for $(x_{1}, x_{2}) = (0, 1)$ is different from the subspace $\cH_{B_{1}}$ for $(x_{1}, x_{2}) = (0, 2)$.

The final goal is to show that for all $x_{1} < x_{2}$ we have $\dim \cH_{B_{3}} = 0$, i.e.~that the projectors $B_{x_{1}}$ and $B_{x_{2}}$ are in fact of the form
\begin{equation}
	\label{eq:Bx1-Bx2-final}
	\begin{aligned}
		B_{x_{1}} &= ( P \otimes \I ) \oplus 0,\\
		B_{x_{2}} &= ( Q \otimes \I ) \oplus 0.
	\end{aligned}
\end{equation}
This immediately implies that
\begin{equation}
	B_{x_{1}} = (d + 1) B_{x_{1}} B_{x_{2}} B_{x_{1}} \nbox{and} B_{x_{2}} = (d + 1) B_{x_{2}} B_{x_{1}} B_{x_{2}},
\end{equation}
which is precisely the second result of Theorem~\ref{thmDIsic}. Let us first show that this result holds when the expectation values of the projectors $B_{y}$ and the anticommutators $\{ B_{x_{1}}, B_{x_{2}} \}$ are distributed uniformly. In the second step we show that this conclusion holds even without the uniformity assumption.

Let us for now assume that for all $y$ we have
\begin{equation}
	\Tr( B_{y} \rho_{B} ) = \frac{1}{d}
\end{equation}
and for all $x_{1} < x_{2}$ we have
\begin{equation}
	\Tr( \{ B_{x_{1}}, B_{x_{2}} \} \rho_{B} ) = \frac{2}{d (d + 1)}.
\end{equation}
Then, we immediately see that
\begin{equation}
	\Tr \big[ ( B_{x_{1}} - B_{x_{2}} )^{2} \rho_{B} \big] = \Tr( B_{x_{1}} + B_{x_{2}} - \{ B_{x_{1}}, B_{x_{2}} \} \rho_{B} ) = \frac{2}{d + 1}.
\end{equation}
Moreover, a direct calculation shows
\begin{equation}
	( B_{x_{1}} - B_{x_{2}} )^{2} = (\sin \theta X)^{2} \otimes \I \oplus 0 \oplus 0 = \frac{d}{d + 1} \I \otimes \I \oplus 0 \oplus 0.
\end{equation}
Let $\Pi$ be a projector on $\cH_{B_{1}}$:
\begin{equation}
	\Pi = \I \otimes \I \oplus 0 \oplus 0.
\end{equation}
Then, we have
\begin{equation}
	\label{eq:Pi-weight}
	\Tr( \Pi \rho_{B} ) = \frac{2}{d}.
\end{equation}
On the other hand the fact that $\Tr( B_{x_{1}} \rho_{B} ) = \Tr( B_{x_{2}} \rho_{B} )$ implies that
\begin{equation}
	\Tr( X \otimes \I \oplus 0 \oplus 0 \, \rho_{B} ) = 0.
\end{equation}
This allows us to rewrite $\Tr( B_{y} \rho_{B} ) = 1/d$ as
\begin{equation}
	\label{eq:Z+0+Id}
	\Tr \Big( \frac{1}{2} \cos \theta \, Z \otimes \I \oplus 0 \oplus \I \, \rho_{B} \Big) = 0.
\end{equation}
Let us now show that
\begin{equation}
	\label{eq:Z-vanishes}
	\Tr( Z \otimes \I \oplus 0 \oplus 0 \, \rho_{B} ) = 0.
\end{equation}
Define
\begin{equation}
	\sigma_{AB} \equiv \frac{d}{2} ( \I \otimes \Pi ) \ketbraq{\psi} ( \I \otimes \Pi ),
\end{equation}
which is a normalised state because $\Tr \sigma_{AB} = \frac{d}{2} \Tr( \Pi \rho_{B} ) = 1$. Now we take advantage of the fact that if we have two Hermitian operators $X, Y$ satisfying $E^{2} = F^{2} = \I$ and $\{E, F\} = 0$, then on any normalised state $\tau$ the expectation values must satisfy (see Ref.~\cite{Kaniewski} for an elementary proof)
\begin{equation}
	\Tr( E \tau )^{2} + \Tr( F \tau )^{2} \leq 1.
\end{equation}
In our case we set
\begin{align}
	E &\equiv A_{x} \otimes \big[ ( X \otimes \I ) \oplus 0 \oplus 0 \big],\\
	F &\equiv \I \otimes \big[ ( Z \otimes \I ) \oplus 0 \oplus 0 \big]
\end{align}
and $\tau = \sigma_{AB}$. To verify that $\Tr( E \tau ) = 1$ note that
\begin{align}
	\Tr \Big( A_{x} &\otimes \big[ ( X \otimes \I ) \oplus 0 \oplus 0 \big] \, \sigma_{AB} \Big) = \frac{d}{2} \bracket{\psi}{ A_{x} \otimes \big[ ( X \otimes \I ) \oplus 0 \oplus 0 \big] }{\psi}\\
	&= \frac{d}{2 \sin \theta} \bracket{\psi}{ A_{x} \otimes ( B_{x_{1}} - B_{x_{2}} ) }{\psi} = \frac{d}{2 \sin \theta} \sqrt{ \frac{d + 1}{d} } \bracket{\psi}{ \I \otimes ( B_{x_{1}} - B_{x_{2}} )^{2} }{\psi} = 1,
\end{align}
where we have used the fact that $\sin \theta = \sqrt{d/(d + 1)}$. This immediately implies that $\Tr( F \tau ) = 0$ and therefore Eq.~\eqref{eq:Z-vanishes} holds. Combining Eqs.~\eqref{eq:Z+0+Id} and~\eqref{eq:Z-vanishes} gives
\begin{equation}
	\Tr( 0 \oplus 0 \oplus \I \, \rho_{B} ) = 0.
\end{equation}
Since we are assuming that $\rho_{B}$ is full-rank, this implies that $\dim \cH_{B_{3}} = 0$.

In the last part we argue that this conclusion holds even without the uniformity assumption. To do so note that $\dim \cH_{B_{3}} = 0$ is equivalent to the operator $B_{x_{1}} + B_{x_{2}}$ not having an eigenvalue of $2$. We show that every quantum realisation can be transformed into a symmetrised realisation which satisfies the uniformity condition. Moreover, as the symmetrised operators inherit their spectra from the original ones, we can deduce that the original measurement operators of Bob do not have an eigenvalue of $2$.

Suppose we are given a quantum realisation on $\cH_{A} \otimes \cH_{B}$ given by the state $\rho_{AB}$ (with $\rho_{B}$ being full-rank) and the operators $A_{x}, B_{y}$, which achieves the maximal violation. For $\sigma$ being a permutation of $[d^{2}]$ consider the measurement operators given by
\begin{align}
	B^{\sigma}_{y} &\equiv B_{\sigma(y)},\\
	A^{\sigma}_{x} &\equiv
	\begin{cases}
		A_{\sigma(x_{1}) \sigma(x_{2})} &\nbox{if} \sigma(x_{1}) < \sigma(x_{2}),\\
		- A_{\sigma(x_{2}) \sigma(x_{1})} &\nbox{otherwise.}
	\end{cases}
\end{align}
It is straighforward to check that these operators also achieve the maximal violation on the state $\rho_{AB}$. Now by taking a convex combination over all permutations, we construct a symmetrised quantum realisation. We label the permutations of $[d^{2}]$ by $\sigma_{j}$, where $j \in \{1, 2, \ldots, N \}$ and $N = (d^{2}) !$. The symmetrised realisation acts on $\bC^{N} \otimes \bC^{N} \otimes \cH_{A} \otimes \cH_{B}$, where we have added two $N$-dimensional registers, one for each party, to serve as classical shared randomness. The symmetrised realisation reads
\begin{align}
	\rho_{AB}' &\equiv \frac{1}{N} \sum_{j = 1}^{N} \ketbraq{j} \otimes \ketbraq{j} \otimes \rho_{AB},\\
	A_{x}' &\equiv \sum_{j = 1}^{N} \ketbraq{j} \otimes A_{x}^{\sigma},\\
	B_{y}' &\equiv \sum_{j = 1}^{N} \ketbraq{j} \otimes B_{y}^{\sigma}.
\end{align}
The symmetrised realisation still satisfies the condition that the marginal of Bob is full-rank and, moreover, all the expectation values are uniform. Therefore, the operators $B_{1}'$ and $B_{2}'$ must be of the form given in Eq.~\eqref{eq:Bx1-Bx2-final}, which in particular implies that the eigenvalue of $2$ does not belong to the spectrum of $B_{1}' + B_{2}'$. On the other hand, it is easy to see that
\begin{equation}
	\spec( B_{1}' + B_{2}' ) = \bigcup_{ x_{1} < x_{2} } \spec ( B_{x_{1}} + B_{x_{2}} ),
\end{equation}
which implies that the original operators $B_{x_{1}}$ and $B_{x_{2}}$ must also be of the form given in Eq.~\eqref{eq:Bx1-Bx2-final}.


\begin{thebibliography}{99}
	
\bibitem{KS}	
S. Kochen, and E. P. Specker, The Problem of Hidden Variables in Quantum Mechanics, Indiana University Mathematics Journal \textbf{17}, 59 (1967).	

\bibitem{Bell}
J. S. Bell, On the Einstein Podolsky Rosen Paradox, Physics. \textbf{1}, 195 (1964).

\bibitem{Qcrypt}
N. Gisin, G. Ribordy, W. Tittel, and H. Zbinden,
Quantum cryptography,
\href{https://doi.org/10.1103/RevModPhys.74.145}{Rev. Mod. Phys. \textbf{74}, 145 (2002).}


\bibitem{Nielsen}
M. Nielsen and I. L. Chaung,
Quantum Computation and Quantum Information,
Cambridge University Press.


\bibitem{Qmeas}
P. Busch, P. Lahti, J-P. Pellonp\"a\"a and K. Ylinen.
Quantum Measurement,
Springer.


	
\bibitem{Schwinger}
J. Schwinger,
Unitary operator bases, 
\href{https://doi.org/10.1073/pnas.46.4.570}{PNAS \textbf{46}, 570-579 (1960).}

\bibitem{Zauner}
G. Zauner,
Quantendesigns, Grundz\"uge einer nichtkommutativen Designtheorie,
Doctoral thesis at University of Vienna (1999).

\bibitem{Renes}
J. M. Renes, R. Blume-Kohout, A. J. Scott, C. M. Caves,
Symmetric Informationally Complete Quantum Measurements,
\href{https://doi.org/10.1063/1.1737053}{J. Math. Phys. \textbf{45}, 2171 (2004).}




\bibitem{Wooters1}
W. K. Wooters,
Quantum Measurements and Finite Geometry,
\href{https://doi.org/10.1007/s10701-005-9008-x}{ Found Phys \textbf{36}, 112  (2006).}


\bibitem{Grassl1}
M. Grassl,
On SIC-POVMs and MUBs in Dimension 6,
\href{https://arxiv.org/abs/quant-ph/0406175}{arXiv:quant-ph/0406175}


\bibitem{Beneduci}
R. Beneduci, T. J. Bullock, P. Busch, C. Carmeli, T. Heinosaari, and A. Toigo,
Operational link between mutually unbiased bases and symmetric informationally complete positive operator-valued measures,
\href{https://doi.org/10.1103/PhysRevA.88.032312}{Phys. Rev. A \textbf{88}, 032312 (2013).}

\bibitem{Bengtsson1}
I. Bengtsson,
From SICs and MUBs to Eddington,
\href{https://doi.org/10.1088/1742-6596/254/1/012007}{J. Phys. Conf. Ser. \textbf{254}  012007 (2010).}

\bibitem{BengtssonCabello}
I. Bengtsson, K. Blanchfield, and A. Cabello,
A Kochen-Specker inequality from a SIC,
\href{https://doi.org/10.1016/j.physleta.2011.12.011}{Phys. Lett. A \textbf{376},  374 (2012).}


\bibitem{Rastegin}
A. E. Rastegin,
Uncertainty relations for MUBs and SIC-POVMs in terms of generalized entropies,
\href{https://doi.org/10.1140/epjd/e2013-40453-2}{Eur. Phys. J. D  \textbf{67}, 269 (2013).}








\bibitem{MUBreview}
T. Durt, B-G. Englert, I. Bengtsson and K. \.Zyczkowski,
On mutually unbiased bases,
\href{https://doi.org/10.1142/S0219749910006502}{Int. J. Quantum Information \text{8}, 535 (2010).}

\bibitem{MaassenUffink}
H. Maassen and J. B. M. Uffink,
Generalized entropic uncertainty relations,
\href{https://doi.org/10.1103/PhysRevLett.60.1103}{Phys. Rev. Lett. \textbf{60}, 1103 (1988).}



\bibitem{BB84}
C. H. Bennett and G. Brassard,
Quantum cryptography: Public key distribution and coin tossing,
In Proceedings of IEEE International Conference on Computers, Systems and Signal Processing, volume 175, page 8. New York, 1984.

\bibitem{E91}
A. K. Ekert,
Quantum cryptography based on Bell's theorem,
\href{https://doi.org/10.1103/PhysRevLett.67.661}{Phys. Rev. Lett. \textbf{67}, 661 (1991).}

\bibitem{6state}
D. Bru\ss,
Optimal Eavesdropping in Quantum Cryptography with Six States,
\href{https://doi.org/10.1103/PhysRevLett.81.3018}{Phys. Rev. Lett. \textbf{81}, 3018 (1998).}



\bibitem{Cerf02}
N. J. Cerf, M. Bourennane, A. Karlsson and N. Gisin,
Security of quantum key distribution using d-level systems,
\href{https://doi.org/10.1103/PhysRevLett.88.127902}{Phys. Rev. Lett. \textbf{88}, 127902 (2002).}

\bibitem{SARG}
V. Scarani, A. Ac\'in, G. Ribordy, and N. Gisin,
Quantum Cryptography Protocols Robust against Photon Number Splitting Attacks for Weak Laser Pulse Implementations,
\href{https://doi.org/10.1103/PhysRevLett.92.057901}{Phys. Rev. Lett. \textbf{92}, 057901 (2004).}


\bibitem{QSS0}
M. Hillery, V. Bu\v{z}ek, and A. Berthiaume
Quantum secret sharing,
\href{https://doi.org/10.1103/PhysRevA.59.1829}{Phys. Rev. A \textbf{59}, 1829 (1999).}


\bibitem{QSS1}
I-Ching Yu, Feng-Li Lin, and Ching-Yu Huang,
Quantum secret sharing with multilevel mutually (un)biased bases,
\href{https://doi.org/10.1103/PhysRevA.78.012344}{Phys. Rev. A \textbf{78}, 012344 (2008)}

\bibitem{QSS2}
A. Tavakoli, I. Herbauts, M. \.Zukowski, and M. Bourennane,
Secret sharing with a single d-level quantum system,
\href{https://doi.org/10.1103/PhysRevA.92.030302}{Phys. Rev. A \textbf{92}, 030302(R) (2015).}







\bibitem{Wooters}
W. K. Wootters and B. D. Fields,
Optimal state-determination by mutually unbiased measurements,
\href{https://doi.org/10.1016/0003-4916(89)90322-9}{Ann. Phys. \textbf{191}, 363 (1989).}

\bibitem{Adamson}
R. B. A. Adamson and A. M. Steinberg,
Improving Quantum State Estimation with Mutually Unbiased Bases,
\href{https://doi.org/10.1103/PhysRevLett.105.030406}{Phys. Rev. Lett. \textbf{105}, 030406 (2010).}

\bibitem{Ambainis}
A. Ambainis, A. Nayak, A. Ta-Shma, U. Vazirani, 
Dense quantum coding and a lower bound for 1-way quantum automata,
Proceedings of the 31st Annual ACM Symposium on Theory of Computing (STOC'99), pp. 376-383, 1999.


\bibitem{Tavakoli15}
A. Tavakoli, A. Hameedi, B. Marques, and M. Bourennane,
Quantum Random Access Codes Using Single d-Level Systems,
\href{https://doi.org/10.1103/PhysRevLett.114.170502}{Phys. Rev. Lett. \textbf{114}, 170502 (2015)}

\bibitem{Aguilar}
E. A. Aguilar, J. J. Borka\l{}a, P. Mironowicz, and M. Paw\l{}owski,
Connections between Mutually Unbiased Bases and Quantum Random Access Codes,
\href{https://doi.org/10.1103/PhysRevLett.121.050501}{Phys. Rev. Lett. \textbf{121}, 050501 (2018).}


\bibitem{Tavakoli18}
A. Tavakoli, J. Kaniewski, T. V\'ertesi, D. Rosset, and N. Brunner,
Self-testing quantum states and measurements in the prepare-and-measure scenario,
\href{https://doi.org/10.1103/PhysRevA.98.062307}{Phys. Rev. A \textbf{98}, 062307 (2018).}

\bibitem{Farkas19}
M. Farkas and J. Kaniewski,
Self-testing mutually unbiased bases in the prepare-and-measure scenario,
\href{https://doi.org/10.1103/PhysRevA.99.032316}{Phys. Rev. A \textbf{99}, 032316 (2019).}


\bibitem{Gottesman}
D. Gottesman,
Class of quantum error-correcting codes saturating the quantum Hamming bound,
\href{https://doi.org/10.1103/PhysRevA.54.1862}{Phys. Rev. A \textbf{54}, 1862 (1996).}

\bibitem{Calderbank}
A. R. Calderbank, E. M. Rains, P. W. Shor, and N. J. A. Sloane,
Quantum Error Correction and Orthogonal Geometry,
\href{https://doi.org/10.1103/PhysRevLett.78.405}{Phys. Rev. Lett. \textbf{78}, 405 (1997).}


\bibitem{Spengler}
C. Spengler, M. Huber, S. Brierley, T. Adaktylos, and B. C. Hiesmayr,
Entanglement detection via mutually unbiased bases,
\href{https://doi.org/10.1103/PhysRevA.86.022311}{Phys. Rev. A \textbf{86}, 022311 (2012).}

%
%
%







\bibitem{Caves}
C. M. Caves, C. A. Fuchs, and R. Schack, 
Unknown quantum states: The Quantum de Finetti representation,
\href{https://doi.org/10.1063/1.1494475}{J. Math. Phys. \textbf{43}, 4537 (2002).}





\bibitem{Medendorp}
Z. E. D. Medendorp, F. A. Torres-Ruiz, L. K. Shalm, G. N. M. Tabia, C. A. Fuchs, and A. M. Steinberg,
Experimental characterization of qutrits using symmetric informationally complete positive operator-valued measurements,
\href{https://doi.org/10.1103/PhysRevA.83.051801}{Phys. Rev. A \textbf{83}, 051801(R) (2011).}

\bibitem{Pimenta}
W. M. Pimenta, B. Marques, T. O. Maciel, R. O. Vianna, A. Delgado, C. Saavedra, and S. P\'adua,
Minimum tomography of two entangled qutrits using local measurements of one-qutrit symmetric informationally complete positive operator-valued measure,
\href{https://doi.org/10.1103/PhysRevA.88.012112}{Phys. Rev. A \textbf{88}, 012112 (2013).}

\bibitem{Bent}
N. Bent, H. Qassim, A. A. Tahir, D. Sych, G. Leuchs, L. L. S\'anchez-Soto, E. Karimi, and R. W. Boyd,
Experimental Realization of Quantum Tomography of Photonic Qudits via Symmetric Informationally Complete Positive Operator-Valued Measures,
\href{https://doi.org/10.1103/PhysRevX.5.041006}{Phys. Rev. X \textbf{5}, 041006 (2015).}



\bibitem{Renes2}
J. M. Renes,
Equiangular Spherical Codes in Quantum Cryptography,
Quant. Inf. Comput. \textbf{5}, 080 (2005).

\bibitem{Singapore}
B-G. Englert, D. Kaszlikowski, H. K. Ng, W. K. Chua, J. \v{R}eh\'a\v{c}ek, and Janet Anders,
Efficient and Robust Quantum Key Distribution With Minimal State Tomography,
\href{https://arxiv.org/abs/quant-ph/0412075v4}{arXiv:quant-ph/0412075v4}

\bibitem{Bouchard}
F. Bouchard, K. Heshami, D. England, R. Fickler, R. W. Boyd, B-G. Englert, L. L. S\'anchez-Soto, and E. Karimi,
Experimental investigation of high-dimensional quantum key distribution protocols with twisted photons,
\href{https://doi.org/10.22331/q-2018-12-04-111}{Quantum \textbf{2}, 111 (2018).}



\bibitem{Acin}
A. Ac\'in, S. Pironio, T. V\'ertesi, and P. Wittek,
Optimal randomness certification from one entangled bit,
\href{https://doi.org/10.1103/PhysRevA.93.040102}{Phys. Rev. A \textbf{93}, 040102(R) (2016).}


\bibitem{TavakoliSIC}
A. Tavakoli, D. Rosset, and M-O. Renou,
Enabling Computation of Correlation Bounds for Finite-Dimensional Quantum Systems via Symmetrization,
\href{https://doi.org/10.1103/PhysRevLett.122.070501}{Phys. Rev. Lett. \textbf{122}, 070501 (2019)}


\bibitem{Piotr}
P. Mironowicz and M. Paw\l{}owski,
Experimentally feasible semi-device-independent certification of four-outcome positive-operator-valued measurements,
\href{https://doi.org/10.1103/PhysRevA.100.030301}{Phys. Rev. A \textbf{100}, 030301(R) (2019).}

\bibitem{TavakoliNonProj}
A. Tavakoli, M. Smania, T. V\'ertesi, N. Brunner, and M. Bourennane,
Self-testing non-projective quantum measurements in prepare-and-measure experiments,
\href{https://arxiv.org/abs/1811.12712}{arXiv:1811.12712}

\bibitem{Massi}
M. Smania, P. Mironowicz, M. Nawareg, M. Paw\l{}owski, A. Cabello, and M. Bourennane,
Experimental device-independent certification of a symmetric, informationally complete, positive operator-valued measure,
\href{https://arxiv.org/abs/1811.12851}{arXiv:1811.12851}


\bibitem{Shang}
J. Shang, A. Asadian, H. Zhu, and O. G\"uhne,
Enhanced entanglement criterion via symmetric informationally complete measurements,
\href{https://doi.org/10.1103/PhysRevA.98.022309}{Phys. Rev. A \textbf{98}, 022309 (2018).}

\bibitem{Bae}
J. Bae, B. C. Hiesmayr, and D. McNulty,
Linking entanglement detection and state tomography via quantum 2-designs,
\href{https://doi.org/10.1088/1367-2630/aaf8cf}{New J. Phys. \textbf{21} 013012 (2019).}



\bibitem{FuchsRev}
C. A. Fuchs and R. Schack,
Quantum-Bayesian coherence,
\href{https://doi.org/10.1103/RevModPhys.85.1693}{Rev. Mod. Phys. \textbf{85}, 1693 (2013).}



\bibitem{Appelby1}
D. M. Appleby, C. A. Fuchs, and H. Zhu,
Group theoretic, lie algebraic and Jordan algebraic formulations of the sic existence problem,
Quantum Inf. Comput. \textbf{15}, 61 (2015).


\bibitem{Appelby}
M. Appleby, S. Flammia, G. McConnell, and J. Yard,
SICs and Algebraic Number Theory,
\href{https://doi.org/10.1007/s10701-017-0090-7}{ Found Phys \textbf{47}, 1042 (2017).}


\bibitem{ScottGrassl}
A. J. Scott, and M. Grassl,
SIC-POVMs: A new computer study,
\href{https://doi.org/10.1063/1.3374022}{J. Math. Phys. \textbf{51}, 042203 (2010)}



\bibitem{Scott}
A. J. Scott,
SICs: Extending the list of solutions,
\href{https://arxiv.org/abs/1703.03993}{arXiv:1703.03993}




\bibitem{FuchsReview}
C. A. Fuchs, M. C. Hoang, and B. C. Stacey,
The SIC Question: History and State of Play,
\href{https://doi.org/10.3390/axioms6030021}{Axioms \textbf{21}, 6 (2017).}


\bibitem{Brunner}
N. Brunner, D. Cavalcanti, S. Pironio, V. Scarani, and S. Wehner,
Bell nonlocality,
\href{https://doi.org/10.1103/RevModPhys.86.419}{Rev. Mod. Phys. \textbf{86}, 419 (2014).}


\bibitem{Supic}
I. \v{S}upi\'c, and J. Bowles,
Self-testing of quantum systems: a review,
\href{https://arxiv.org/abs/1904.10042}{arXiv:1904.10042}
















\bibitem{Vidick}
D. Ostrev and T. Vidick,
The structure of nearly-optimal quantum strategies for the CHSH (n) XOR games,
Quantum Information \& Computation \textbf{16}, 1191 2016.


\bibitem{Coladangelo}
A. Coladangelo, K. T. Goh and V. Scarani,
All pure bipartite entangled states can be self-tested,
\href{https://doi.org/10.1038/ncomms15485 (2017)}{Nature Communications \textbf{8}, 15485 (2017).}


\bibitem{Jed}
J. Kaniewski, I. \v{S}upi\'c, J. Tura, F. Baccari, A. Salavrakos, and R. Augusiak,
Maximal nonlocality from maximal entanglement and mutually unbiased bases, and self-testing of two-qutrit quantum systems,
\href{https://doi.org/10.22331/q-2019-10-24-198}{Quantum \textbf{3}, 108 (2019).}

\bibitem{Sarkar}
S. Sarkar, D. Saha, J. Kaniewski, and R. Augusiak,
Self-testing quantum systems of arbitrary local dimension with minimal number of measurements,
\href{https://arxiv.org/abs/1909.12722}{arXiv:1909.12722}




\bibitem{Romero}
M. J. Kewming, S. Shrapnel, A. G. White, and J. Romero,
Hiding Ignorance Using High Dimensions,
\href{arXiv:1903.09487}{https://arxiv.org/abs/1903.09487}

\bibitem{Hensen}
B. Hensen et. al.,
Loophole-free Bell inequality violation using electron spins separated by 1.3 kilometres,
\href{https://doi.org/10.1038/nature15759}{Nature \textbf{526}, 682 (2015).}

\bibitem{Shalm}
L. K. Shalm, 
Strong Loophole-Free Test of Local Realism,
\href{https://doi.org/10.1103/PhysRevLett.115.250402}{Phys. Rev. Lett. \textbf{115}, 250402 (2015).}

\bibitem{Giustina}
M. Giustina, et. al., 
Significant-Loophole-Free Test of Bell's Theorem with Entangled Photons,
\href{https://doi.org/10.1103/PhysRevLett.115.250401}{Phys. Rev. Lett. \textbf{115}, 250401 (2015).}


\bibitem{Harald}
W. Rosenfeld, D. Burchardt, R. Garthoff, K. Redeker, N. Ortegel, M. Rau, and H. Weinfurter,
Event-Ready Bell Test Using Entangled Atoms Simultaneously Closing Detection and Locality Loopholes,
\href{https://doi.org/10.1103/PhysRevLett.119.010402}{Phys. Rev. Lett. \textbf{119}, 010402 (2017).}




\bibitem{ji08a}
S.-W. Ji, J. Lee, J. Lim, K. Nagata, and H.-W. Lee,
Multisetting Bell inequality for qudits,
\href{https://doi.org/10.1103/PhysRevA.78.052103}{Phys. Rev. A \textbf{78}, 052103 (2008).}

\bibitem{liang09a}
Y.-C. Liang, C.-W. Lim, and D.-L. Deng,
Reexamination of a multisetting Bell inequality for qudits,
\href{https://doi.org/10.1103/PhysRevA.80.052116}{Phys. Rev. A \textbf{80}, 052116 (2009).}

\bibitem{lim10a}
J. Lim, J. Ryu, S. Yoo, C. Lee, J. Bang, and J. Lee,
Genuinely high-dimensional nonlocality optimized by complementary measurements,
\href{https://doi.org/10.1088/1367-2630/12/10/103012}{New J. Phys. \textbf{12}, 103012 (2010).}

\bibitem{Bechmann}
H. Bechmann-Pasquinucci and N. Gisin, 
Bell inequality for quNits with binary measurements,
Quantum Inf. Comput. \textbf{3}, 157 (2003).


\bibitem{CHSH}
J. F. Clauser, M. A. Horne, A. Shimony, and R. A. Holt,
Proposed Experiment to Test Local Hidden-Variable Theories,
\href{https://doi.org/10.1103/PhysRevLett.23.880}{Phys. Rev. Lett. \textbf{23}, 880 (1969).}





\bibitem{Brierley}
S. Brierley, S. Weigert, and I. Bengtsson,
All Mutually Unbiased Bases in Dimensions Two to Five,
Quantum Info. \& Comp.  \textbf{10}, 0803 (2010).

\bibitem{Jeba}
C. Jebarathinam, J-C. Hung, S-L. Chen, and Y-C. Liang,
Maximal violation of a broad class of Bell inequalities and its implication on self-testing,
\href{https://doi.org/10.1103/PhysRevResearch.1.033073}{Phys. Rev. Research \textbf{1}, 033073 (2019).}

\bibitem{Kaniewski2}
J. Kaniewski,
A weak form of self-testing,
\href{https://arxiv.org/abs/1910.00706}{arXiv:1910.00706}

\bibitem{Tasca}
D. S. Tasca, P. S{\'a}nchez, S. P. Walborn, and {\L}. Rudnicki,
Mutual unbiasedness in coarse-grained continuous variables,
\href{https://doi.org/10.1103/PhysRevLett.120.040403}{Phys. Rev. Lett. \textbf{120}, 040403 (2018).}

\bibitem{Krishna}
M. Krishna and K. R. Parthasarathy,
An Entropic Uncertainty Principle for Quantum Measurements,
\href{https://www.jstor.org/stable/25051432?seq=1#metadata_info_tab_contents}{Indian Journal of Statistics \textbf{64}, (3) 842 (2002).}

\bibitem{HMZ16}
T. Heinosaari, T. Miyadera, and M. Ziman,
An invitation to quantum incompatibility,
\href{https://doi.org/10.1088/1751-8113/49/12/123001}{Journal of Physics A: Mathematical
and Theoretical \textbf{49}, (12) 123001 (2016).}

\bibitem{Haapasalo}
E. Haapasalo,
Robustness of incompatibility for quantum devices,
\href{https://doi.org/10.1088/1751-8113/48/25/255303}{Journal of Physics A: Mathematical
and Theoretical \textbf{48}, (25) 255303 (2015).}

\bibitem{QKD}
N. Gisin, G. Ribordy, W. Tittel, and H. Zbinden,
Quantum cryptography,
\href{https://doi.org/10.1103/RevModPhys.74.145}{Rev. Mod. Phys. \textbf{74}, 145 (2002).}




\bibitem{Masanes}
L. Masanes, S. Pironio, and A. Ac\'in, 
Secure device-independent quantum key distribution with causally independent measurement devices,
\href{https://doi.org/10.1038/ncomms1244}{Nature Communications  \textbf{2}, 238 (2011).}

\bibitem{Franz}
T. Franz, F. Furrer, and R. F. Werner,
Extremal quantum correlations and cryptographic security,
\href{https://doi.org/10.1103/PhysRevLett.106.250502}{Phys. Rev. Lett. \textbf{106}, 250502 (2011).}


	


\bibitem{NPA}	
M. Navascu\'es, S. Pironio, and A. Ac\'in,
Bounding the Set of Quantum Correlations,
\href{https://doi.org/10.1103/PhysRevLett.98.010401}{Phys. Rev. Lett. \textbf{98}, 010401 (2007).}

\bibitem{Rosset}
D. Rosset,
SymDPoly: symmetry-adapted moment relaxations for noncommutative polynomial optimization,
\href{https://arxiv.org/abs/1808.09598}{arXiv:1808.09598}


\bibitem{Cai}
Y. Cai, J-D. Bancal, J. Romero, and V. Scarani,
A new device-independent dimension witness and its experimental implementation
\href{https://doi.org/10.1088\%2F1751-8113\%2F49\%2F30\%2F305301}{Journal of Physics A: Mathematical and Theoretical \textbf{49}, 305301 (2016)}


\bibitem{sedumi}
J. F. Sturm,
Using SeDuMi 1.02, A Matlab toolbox for optimization over symmetric cones,
\href{https://doi.org/10.1080/10556789908805766}{Optimization Methods and Software \textbf{11}, 625 (1999).}


\bibitem{ElegantBell}
N. Gisin,
Bell inequalities: many questions, a few answers,
\href{https://arxiv.org/abs/quant-ph/0702021}{arXiv:quant-ph/0702021}


\bibitem{Plato}
A. Tavakoli and N. Gisin,
Platonic solids and fundamental tests of quantum mechanics,
(in preparation).


\bibitem{Pironio2}
S. Pironio, A. Ac\'in, S. Massar, A. Boyer de la Giroday, D. N. Matsukevich, P. Maunz, S. Olmschenk, D. Hayes, L. Luo, T. A. Manning, and C. Monroe 
Random numbers certified by Bell's theorem,
\href{https://doi.org/10.1038/nature09008}{Nature \textbf{464}, 1021 (2010).}



\bibitem{Ariano}
G. M. D'Ariano, P. L. Presti, and P. Perinotti,
Classical randomness in quantum measurements,
\href{https://doi.org/10.1088/0305-4470/38/26/010}{J. Phys. A: Math. Gen. \textbf{38}, 5979 (2005).}

\bibitem{replabWebsite}
\href{https://replab.github.io/replab/}{https://replab.github.io}

\bibitem{replab}
D. Rosset, F. Montealegre-Mora, J-D. Bancal,
RepLAB: a computational/numerical approach to representation theory,
\href{https://arxiv.org/abs/1911.09154}{arXiv:1911.09154.}

\bibitem{sdpa}
\href{http://sdpa.sourceforge.net}{http://sdpa.sourceforge.net}




\bibitem{Miguel}
M. Navascu{\'e}s, S. Pironio, and A. Ac{\'i}n,
SDP relaxations for non-commutative polynomial optimization,
\href{https://doi.org/10.1007/978-1-4614-0769-0\_21}{Handbook on semidefinite, conic and polynomial optimization. International series in operations research \& management science \textbf{166}, 601 (2012).}

\bibitem{PPT}
M.~Horodecki, P.~Horodecki, and R.~Horodecki,
Separability of mixed states: necessary and sufficient conditions,
\href{https://doi.org/10.1016/S0375-9601(96)00706-2}{Physics Letters A. \textbf{223}, 1--8 (1996)}

\bibitem{Boyd}
S.~Boyd and L.~Vandenberghe,
Convex optimization,
Cambridge university press, 2004

\bibitem{Choi}
M.~D.~Choi, Completely positive linear maps on complex matrices,
Linear algebra and its applications, \textbf{10}(3), 285-290 (1975)

\bibitem{Jamiolkowski}
A.~Jamio\l kowski, Linear transformations which preserve trace and positive semidefiniteness of operators,
Reports on Mathematical Physics, \textbf{3}(4), 275-278 (1972)


\bibitem{Designolle}
S. Designolle, M. Farkas, and J. Kaniewski,
Incompatibility robustness of quantum measurements: a unified framework,
\href{https://doi.org/10.1088/1367-2630/ab5020}{New J. Phys. \textbf{21}, 113053 (2019).}

\bibitem{Kaniewski}
J. Kaniewski, M. Tomamichel, and S. Wehner,
Entropic uncertainty from effective anticommutators,
\href{https://doi.org/10.1103/PhysRevA.90.012332}{Phys. Rev. A \textbf{90}, 012332 (2014).}

\end{thebibliography}
\end{document}